\newcommand{\blind}{0}
\tikzstyle{startstop} = [rectangle, rounded corners, 
\tikzstyle{control} = [trapezium, 
\tikzstyle{trisomic} = [trapezium, 
\tikzstyle{stimulated} = [rectangle, 
\tikzstyle{treatment} = [rectangle, 
\tikzstyle{class} = [rectangle, rounded corners,
\tikzstyle{class2} = [rectangle, rounded corners,
\tikzstyle{class3} = [rectangle, rounded corners,
\tikzstyle{class4} = [rectangle, rounded corners,
\tikzstyle{outcome} = [rectangle, rounded corners,  
\tikzstyle{arrow} = [thick,->,>=stealth]
\newcommand\abs[1]{\left|#1\right|}
\newcommand{\vertiii}[1]{{\left\vert\kern-0.25ex\left\vert\kern-0.25ex\left\vert #1 
    \right\vert\kern-0.25ex\right\vert\kern-0.25ex\right\vert}}
\newtheorem{assumption}{Assumption}
\newtheorem{theorem}{Theorem}
\newtheorem{lemma}{Lemma}
\newtheorem{definition}{Definition}
\newcommand{\argmin}{arg\,min}
\renewcommand{\theadalign}{bl}
\begin{document}

\def\spacingset#1{\renewcommand{\baselinestretch}%
{#1}\small\normalsize} \spacingset{1}
\renewcommand{\arraystretch}{0.7}


\if0\blind
{
  \title{\bf \textcolor{black}{High-Dimensional Covariate-Dependent Gaussian Graphical Models}}
  \author{JIACHENG WANG\thanks{ This work was supported by the Natural Sciences and Engineering Research Council of Canada.}
    \hspace{.2cm}
    \\
    Department of Mathematics and Statistics, York University, \\
    4700 Keele Street, Toronto M3J 1P3, Canada. \\
    and \\
    XIN GAO$^{\ast}$ \\
    Department of Mathematics and Statistics, York University, \\
    4700 Keele Street, Toronto M3J 1P3, Canada. \\
    Email: xingao@yorku.ca}
  \maketitle
} \fi

\if1\blind
{
  \bigskip
  \bigskip
  \bigskip
  \begin{center}
    {\LARGE\bf Title}
\end{center}
  \medskip
} \fi

\bigskip
\begin{abstract}
Motivated by dynamic biologic network analysis, we propose a covariate-dependent Gaussian graphical model (cdexGGM) \textcolor{black}{for capturing network structure that varies with covariates through a novel parameterization.} Utilizing a likelihood framework, our methodology jointly estimates all dynamic edge and vertex parameters. We further develop statistical inference procedures to test the dynamic nature of the underlying network. Concerning large-scale networks, we perform composite likelihood estimation with an $\ell_1$ penalty to discover sparse dynamic network structures. We establish the estimation error bound in $\ell_2$ norm and validate the sign consistency in the high-dimensional context. We apply our method to an influenza vaccine data set to model the dynamic gene network that evolves with time. We also investigate a Down syndrome data set to model the dynamic protein network which varies under a factorial experimental design. These applications demonstrate the applicability and effectiveness of the proposed model. The supplemental materials for this article are available online. 
\end{abstract}

\noindent%
{\it Keywords:}  Dynamic network analysis; Covariate-dependent GGM; $\ell_1$ penalization; Maximum likelihood estimation; Composite likelihood estimation.
\vfill

\newpage
\spacingset{1.75} 
\section{Introduction}
\label{sec:intro}
\textcolor{black}{\indent
Dynamic networks} are commonly observed in biological applications. For example, in the context of genetics, it has been demonstrated that the genetic networks can be impacted by common genetic variants \citep{FrancoLuisM.2013Igao, NicaAlexandraC.2013Eqtl}. Moreover, the genetic networks can also be affected by conditions such as cell type, cell cycle, stress response, and DNA damage \citep{ZhangZhihua2006Dcis, WeighillDeborah2022Pggr}. To describe the dynamic nature of the genetic network and highlight the differences in the network under different conditions, an effective approach is to perform dynamic network analysis. As another example, for those highly infectious vaccine-preventable diseases, getting immunized through vaccination serves as an effective strategy to prevent the spread of viruses. Researchers may desire to employ dynamic network analysis to establish a series of networks that evolve through time (i.e. before vaccination and at different time points after vaccination) for a more in-depth study of genetic factors involved in the diverse immune responses of healthy adults to a vaccine. More precisely, this dynamic network allows us to potentially recognize a group of genes that could be a critical determinant of the vaccine's effect on the immune response, which in turn might unlock innovative research opportunities in vaccine development \citep{FrancoLuisM.2013Igao}.

Over the last few decades, network analysis has become a prominent focus of research. One of the most widely used methodologies is known as the Gaussian graphical model (GGM), which aims to investigate interactions through conditional dependencies. In the field of Gaussian graphical models, many significant contributions have been made, including \citet{alma991011687879705164}, \cite{MeinshausenNicolai2006HGaV}, \cite{H2007IigG}, 
\cite{DobraAdrian2011BIfG},
\cite{alma991036188596505164},
\cite{10.1214/14-BA889},
\cite{PiotrZwiernik1},
\cite{DrtonMathias2017SLiG},
\cite{alma991029818979705164},
\cite{PiotrZwiernik2},
\cite{GraczykPiotr2022Msit},
and numerous others. 

However, the majority of existing works focus on investigating static network structures. New methods are needed to model dynamic network structures. A series of methods have been proposed for time-varying graphical models, including \citet{ZhouShuheng2010Tvug},
\citet{KolarMladen2010ETN}, \citet{KolarMladen2012Enwj}, \citet{wang2014inference}, \citet{MontiRicardoPio2014Etbc}, \citet{GibberdAlexanderJ.2017REoP} and \citet{YangJilei2020ETGM}. Other researchers proposed the covariate-adjusted GGMs, allowing the mean to be dependent on covariates while preserving a constant variance structure. \citet{GUOJIAN2011Jeom} conducted simultaneous estimation of multiple GGMs using a hierarchical penalty to characterize a series of graphs sharing a certain amount of common structure. 
\cite{YinJianxin2010NCM} introduced a nonparametric approach for modeling conditional dependencies, enabling the covariance matrix to vary with covariates. \citet{MladenKolar2010Osnc} focused on the high dimensional setting and developed a locally weighted kernel estimator for modeling sparse conditional dependencies. \citet{NiYang2019BGR} further explored the estimation procedure of covariate-dependent acyclic graph from the perspective of the Bayesian regression framework. \textcolor{black}{
\citet{JMLR:v23:21-0102} addressed the general problem of estimating undirected Gaussian graphical models (GGMs) conditioned on covariates (GGMx) using a Bayesian approach,
which allows for both the edge strength and the sparsity pattern of the underlying graph structure to change as functions of covariates. By constructing an MCMC algorithm for posterior inference, the estimates of the precision matrices can be obtained, and their positive definiteness can be guaranteed regardless of the covariates \citep{JMLR:v23:21-0102}.} 
\citet{ZhangJingfei2022HGGR} introduced a new type of Gaussian graphical regression model to describe covariate-dependent network structure, allowing both the mean structure and precision matrix to linearly vary with covariates. The model assumes all off-diagonal parameters of the precision matrix to vary with covariates whereas the diagonal parameters remain constant. They proposed a two-stage procedure that first estimates the ratios of off-diagonal and diagonal parameters by regressing each vertex against all the other vertices and then estimates the diagonal parameters via a moment method. \textcolor{black}{In contrast to existing methods that directly model the underlying graph structure as functions of covariates, \citet{Niu2} presented a novel Bayesian covariate-dependent graphical model and introduced an intermediate latent variable layer acting as a hidden bridge between the graphs and covariates, 
which offered a simpler alternative to parameterizing graph structure in terms of covariates.}

In this paper, we propose covariate-dependent GGM \textcolor{black}{(cdexGGM)} via a novel parameterization that guarantees the positive definiteness of $\bm{\Sigma}^{-1}$ under modest assumptions. We further relax the restriction on the diagonal parameters and allow both off-diagonal and diagonal parameters of the precision matrix to change with covariates. We propose to employ the likelihood approach to jointly estimate all the edge and vertex parameters. Our estimation procedure can ensure the symmetry of the estimates of the precision matrices as the likelihood approach inherently accounts for symmetric structure, while estimating parameters separately through the regression framework cannot guarantee this. For networks with a relatively small number of vertices, we apply maximum likelihood estimation and establish the asymptotic distribution of the estimates, which enables us to conduct statistical inferences on the dynamic network. To characterize large-scale dynamic networks with sparsity, we propose penalized composite likelihood estimation with $\ell_1$ penalization \citep{Lindsay, 
FearnheadPaul2002Almf,
CoxD.R.2004Anop, VarinCristiano2008Ocml, 
LarribeF.2011OCLI,
RibatetMathieu2012BIFC}. Theoretical properties of the estimator regarding parameter consistency and sign consistency are established in the high-dimensional context. Comprehensive simulation studies display satisfactory performance of the proposed likelihood procedure. We also provide two case studies - one in influenza vaccine, the other in Down syndrome, to demonstrate the practicability and efficiency of the approach. The organization of the paper is as follows. We establish the model setup for covariate-dependent GGMs \textcolor{black}{(cdexGGM)} and propose the maximum likelihood estimation for fixed $p$ in Section~\ref{Section 2}. In Section~\ref{Section 3}, we present the composite likelihood approach for large $p_n$ scenarios, along with a discussion on theoretical properties, including the estimation error bound and feature selection consistency. Numerical simulations for both the small-scale and large-scale networks are displayed in  Section~\ref{Section 4}.  Two examples of \textcolor{black}{covariate-dependent} network analysis are provided in Section~\ref{Section 5}. 

\section{Maximum Likelihood Approach for Covariate-Dependent Gaussian Graphical Models with Fixed $p$}
\label{Section 2}
\subsection{Model Setup and Notations}

\textcolor{black}{\indent
Let $
\bm Y = 
\big(\bm Y_1, 
\bm Y_2,
...,
\bm Y_n\big)^T
$} be a collection of independent random vectors, with each $\bm Y_m \sim N\big(\bm \mu_m(x_m), \bm \Sigma_m(x_m)\big), m=1,\cdots,n,$ where $x_m$ denotes the observed covariate and $Y_m = (Y_{m1},Y_{m2},...,Y_{mp})^{T}$. For the sake of simplicity, we assume that the observations are centered, namely $\bm \mu_m(x_m) = \bm 0$. First consider the simple case that $x_m$ is a univariate covariate, which is min-max scaled so that $0\leq x_m \leq 1.$  
We assume that $\bm \Sigma_m^{-1}(x_m)$ can be formulated as
\begin{equation}  \label{eq:newlabel1}
\begin{split}
\bm{\Sigma}_m^{-1}(x_m) = 
\bm{K}_m
&= x_m\bm{Q}_1 + (1-x_m)\bm{Q}_0  \\
&= 
\bm{Q}_0 + x_m (\bm{Q}_1 - \bm{Q}_0)
= 
\bm{Q}_0 + x_m \bm{P}_1. 
\end{split}
\end{equation}

The matrix $\bm Q_0$ is the precision matrix when $x_m=0$, and $\bm Q_1$ is the precision matrix when $x_m=1.$ For any arbitrary value of $x_m \in [0,1],$ the precision matrix is a weighted average of these two matrices. If $\bm{Q}_0$ and $\bm{Q}_1$ are assumed to be positive definite, $\bm{K}_m$ is guaranteed to be positive definite. After the reformulation, we have $\bm{K}_m$ expressed in a regression format where $\bm Q_0$ is the intercept matrix, and $\bm P_1$ is the slope matrix associated with the observed covariate.

We can further extend the model \eqref{eq:newlabel1} to the general setting of multiple covariates
\begin{align} \label{eq:newlabel2}
\bm{\Sigma}_m^{-1}(\bm{x_m}) = 
\bm{K}_m &= 
\sum_{h=1}^{H} 
x_m^{(h)}\bm{Q}_h +(1-\sum_{h=1}^H \frac{x_m^{(h)}}{H} )\bm{Q}_0\\ 
&= 
 \bm{Q}_0 + 
\sum_{h=1}^{H} 
x_m^{(h)}(\bm{Q}_h - \bm{Q}_0/H) 
=
\bm{Q}_0 + \sum_{h=1}^{H} 
x_m^{(h)} \bm P_{h},
\end{align}
where 
$
\bm{P}_h = \bm{Q}_h - \bm{Q}_0/H
$, $H$ is the number of covariates, $\bm{x_m}=(x_m^{(1)},\dots,x_m^{(H)})^T,$ and $x_m^{(h)}\in [0,1]$ represents the $h$th covariate of the $m$th observation. In \eqref{eq:newlabel2}, $\bm \Sigma_{m}^{-1}(\bm{x_m})$ is a linear combination of $\bm{Q}_0, \dots, \bm{Q}_H$ with non-negative coefficients. Under the assumption that all these matrices are positive definite,  $\bm \Sigma_{m}^{-1}(\bm{x_m})$  is guaranteed to be positive definite for any combinations of possible values of covariates. The $\bm{Q}_0$ is the baseline precision matrix when all covariates are set to zero. The matrices $\bm P_1,\cdots, \bm P_h$ can be treated as the slopes of the covariates to describe how the current network structure deviates from its baseline state resulting from $x_m^{(h)}.$ This general model of covariate-dependent GGMs \textcolor{black}{(cdexGGM)} encompasses standard GGMs, group-specific GGMs, and time-varying GGMs as special cases, as summarized in Table~\ref{table:0}.

\begin{table}
\caption{Special Cases of Covariate-Dependent GGMs}
{\begin{tabular}{ |c | c|  }
 \hline
 Special Cases  & Conditions 
 \\    \hline
 Standard GGMs   &  $ \forall \; h=1,\cdots, H, x_m^{(h)} = 0 $   \\ \hline
 Group-specific GGMs & $\forall \; h=1,\cdots, H, x_m^{(h)}$ is categorical  \\ \hline
 Time-varying GGMs & $h = 1, x_m^{(h)}$ is continuous  \\
 \hline
 General covariate-dependent GGMs &  $\forall \;  h=1,\cdots, H, x_m^{(h)}$ is continuous or categorical
 \\ \hline
\end{tabular}}
\label{table:0}
\end{table}

We use the following notations $\big(\bm{Q}_0\big)_{ij} = \alpha_{ij}, \big(\bm{P}_h\big)_{ij} = [\theta_{ij}]_h$ representing the $(i,j)$th element of $\bm{Q}_0$ and  $\bm{P}_h$ respectively. The set of unknown parameters can be formulated as $\bm{\beta}=(\bm{\alpha},\bm{\theta})^T,$ where $\bm{\alpha}$ is the vector of all $\alpha_{ij}, i\leq j$ and $\bm{\theta}$ is the vector of all $[\theta_{ij}]_h, i\leq j.$ Furthermore, we define an index set
$
E=\left\{(i, j): \; i \leq j; \; i,j = 1,...,p\right\}.
$
For $\forall \; u \in E,$ we define a indicator matrix $\bm{T}^{u}$ as
\begin{equation}      \label{eq:pro4}
	(\bm{T}^{u})_{ij}= \left\{            
	\begin{array}{ll}   
	1 \quad & \text{when }{\{i,j\} = u} \text{ or } {\{j,i\} = u},\\
	0 & \text{otherwise} .	
\end{array}\right.    
\end{equation}
Therefore, each matrix can be written as
\begin{equation} \label{eq:pro5}
	\bm Q_0 = \sum_{s: s \in E }\alpha_{s}\bm{T}^{s}, 
	\quad \bm P_h = \sum_{u:u \in E}[\theta_{u}]_h \bm{T}^{u}.
\end{equation}
Using the notations in \eqref{eq:pro4} and \eqref{eq:pro5}, the precision matrix can be rewritten as 
\begin{equation} \label{eq:pro6}
\bm K_m = 
\sum_{s:s \in E}\alpha_{s}\bm{T}^{s} + 
\sum_{h=1}^{H}
x_m^{(h)} \Big\{
\sum_{u: u \in E }[\theta_{u}]_{h}\bm{T}^{u}
\Big\}.
\end{equation}

\subsection{Maximum Likelihood Estimation}
\textcolor{black}{\indent
In this section,} we outline the maximum likelihood estimation procedure for small-scale \textcolor{black}{covariate-dependent} networks. Let 
$\mathcal{L}(\bm{\beta})$ 
represent the joint loglikelihood function,
\begin{align}
\label{multi-normal}
 \mathcal{L}(\bm{\beta}) =
\sum_{m=1}^{n} \mathcal{L}_m (\bm{\beta}|\bm{Y}_m)
=
-\frac{np}{2}\log{2\pi}-\frac{1}{2}
\sum_{m=1}^{n}\log{|\bm{\Sigma}_m|} 
 - 
\frac{1}{2}\sum_{m=1}^{n}
tr(
\bm{Y}_m^T\bm \Sigma_m^{-1}\bm Y_m
).
\end{align}
Note that \eqref{multi-normal} belongs to exponential family with the canonical parameters $ \bm{Q}_0,
\bm{P}_1,
\cdots,
\bm{P}_H
$ and sufficient statistics $T_{Q_0} = \sum_{m=1}^{n}\bm{Y}_m \bm{Y}_m^{T}, T_{P_h} = \sum_{m=1}^{n}x_m^{(h)}\bm{Y}_m \bm{Y}_m^{T}, h=1,\cdots, H$. For $\forall \;u,v,s,t \in E, \text{and } \forall \; h_1, h_2=1,\cdots,H,$ the first derivatives and the second derivatives of loglikelihood are given by
\begin{align*}
&\frac{\partial \mathcal{L}(\bm{\beta}) }{\partial [\theta_{u}]_{h_1}}=
\frac{1}{2}\sum_{m=1}^{n}
\Big\{
x_m^{(h_1)}tr(\bm\Sigma_m \bm{T}^u) - x_m^{(h_1)}tr(\bm W_m\bm{T}^u) 
\Big\},
\\
&\frac{\partial \mathcal{L}(\bm{\beta}) }{\partial \alpha_{s}}=
\frac{1}{2}\sum_{m=1}^{n}
\Big\{
tr(\bm\Sigma_m \bm{T}^s) - tr(\bm W_m\bm{T}^s) 
\Big\} ,
\\
&\frac{\partial^2 \mathcal{L}(\bm{\beta}) }{\partial [\theta_{u}]_{h_1} \partial [\theta_{v}]_{h_2}}=
-\frac{1}{2}\sum_{m=1}^{n}x_m^{(h_1)}
x_m^{(h_2)}tr(\bm{T}^u\bm \Sigma_m \bm{T}^v \bm \Sigma_m) ,
\\&
\frac{\partial^2 \mathcal{L}(\bm{\beta}) }{\partial [\theta_{u}]_{h_1} \partial \alpha_{s}}=
-\frac{1}{2}\sum_{m=1}^{n}
x_m^{(h_1)}tr(\bm{T}^u\bm \Sigma_m \bm{T}^s \bm \Sigma_m),
\\
&\frac{\partial^2 \mathcal{L}(\bm{\beta})}{\partial \alpha_{s} \partial \alpha_{t}}=
-\frac{1}{2}\sum_{m=1}^{n}tr(\bm{T}^s\bm \Sigma_m \bm{T}^t \bm \Sigma_m),
\end{align*}
where $\bm W_m = \bm Y_m\bm Y_m^T.$ For simplicity, we use the notation $\bm{ \mathcal{L}}_m^{(1)}(\bm \beta|\bm Y_m), \bm{ \mathcal{L}}_m^{(2)}(\bm \beta|\bm Y_m)$ representing the score vector and Hessian matrix for each observation respectively. Define the expected individual negative Hessian matrix as 
$
\bm V_m(\bm{\beta}) = E\Big\{  \big\{\bm{\mathcal{L}}_m^{(1)}(\bm \beta|\bm Y_m) \big\} \big\{ \bm{\mathcal{L}}_m^{(1)}(\bm \beta|\bm Y_m)\big\} ^ T
\Big\}
=
-E\Big\{
\bm{\mathcal{L}}_m^{(2)}(\bm \beta|\bm Y_m)
\Big\}
$. It can be shown that the information matrix $\bm{\mathcal{I}}(\bm \beta)$ takes the following form
\begin{align*}
\bm{\mathcal{I}}(\bm \beta)
=
\sum_{m=1}^{n} 
\bm V_m(\bm{\beta})
= 
\sum_{m=1}^{n}\bigg\{
\Big(\bm X_m \bm X_m^{T}\Big)
\otimes
\Big(
\bm \Sigma_m \otimes
\bm \Sigma_m
\Big)
\bigg\},
\end{align*}
where $\otimes$ represents the Kronecker product and $\bm X_m = \Big(1, x_m^{(1)}, x_m^{(2)}, \cdots, 
x_m^{(H)}
\Big)^{T}$. 
We adopt the iterative coordinate updating procedure proposed by \citet{JensenSorenTolver1991GCAf} and \citet{alma991011687879705164} to find the maximum likelihood estimate ($\text{MLE}$). The detailed updating equations at each step are outlined as follows,
\begin{equation} \label{eq:pro3}
\begin{split}
&
[{\hat{\theta}_u}^{(t+1)}]_{h_1} = 
[{\hat{\theta}_u}^{(t)}]_{h_1} + 
\frac{\Delta_u}{\sum_{m=1}^{n}\big\{{x_m}^{(h_1)}\big\}^2tr(\bm{T}^u\bm {\hat{\Sigma}}_m^{(t)}\bm{T}^u\bm{\hat{\Sigma}}_m^{(t)}) +\frac{{\Delta_u}^2}{2}} ,
\\
&
{\hat{\alpha}_s}^{(t+1)} = 
{\hat{\alpha}_s}^{(t)} + 
\frac{\Delta_s}{\sum_{m=1}^{n}
tr(\bm{T}^s\bm{\hat{\Sigma}}_m^{(t)}\bm{T}^s\bm{\hat{\Sigma}}_m^{(t)}) +\frac{{\Delta_s}^2}{2}} ,
\\
&
\Delta_u =
\sum_{m=1}^{n}x_m^{(h_1)}
\bigg\{
tr(\bm{\hat{\Sigma}}_m^{(t)}\bm{T}^u)-tr(\bm W_m\bm{T}^u)
\bigg\}  ,
\\ 
&
\Delta_s =
\sum_{m=1}^{n}
\bigg\{
tr(\bm{\hat{\Sigma}}_m^{(t)}\bm{T}^s)-tr(\bm W_m\bm{T}^s)
\bigg\} ,
\end{split}
\end{equation} 
where $\Delta_{u}, \Delta_{s}$ are the first derivatives of $\mathcal{L}(\bm{\beta})$ evaluated at the $(t)$th iteration.

\subsection{Theoretical Properties of MLE}
\textcolor{black}{\indent
In this section, we investigate the asymptotic behavior of the proposed maximum likelihood estimate when $p$ is fixed.}

\begin{assumption} 
\label{assumption 2.1}
Let $\Omega$ represent the parameter space for $\bm{\beta}$. We assume there exists an open subset $\bm{\omega} \in \bm{\Omega}$ that includes the true parameter $\bm{\beta}^0$ such that for $\bm{\beta} \in \bm{\omega},$ the eigenvalues of $\bm Q_0$ and $\bm Q_h,h=1,\cdots, H,$ are bounded below by $L_1>0$ and bounded above by $L_2>0.$ 
\end{assumption}

\begin{assumption}
\label{assumption 2.2}
Assume that the average Hessian matrix  $\frac{1}{n}\sum_{m=1}^{n}\bm V_m(\bm{\beta}^0) \xrightarrow{} \bm V$ as n goes to infinity, where  $\bm{\beta}^0$ 
represents the true value of $\bm{\beta}$ and $\bm V$ is positive definite.
\end{assumption}

\begin{theorem}
\label{theorem 2.1}
Under \textbf{Assumptions} \ref{assumption 2.1}-\ref{assumption 2.2}, there exists a local maximizer $\hat{\bm \beta}$ of $\mathcal{L}(\bm{\beta})$ such that $\vert \vert \hat{\bm \beta} - \bm{\beta}^{0} \vert \vert_2 = \mathcal{O}_p(n^{-\frac{1}{2}})$. Furthermore, 
the multivariate version of Lindeberg condition \eqref{eq:Lindeberg} holds: $\forall \; \epsilon >0$,
\begin{equation} \label{eq:Lindeberg}
\lim_{n \to \infty} \frac{1}{n}
\sum_{m=1}^{n} E\Big\{
 \vert \vert \bm{\mathcal{L}}_m^{(1)}(\bm \beta^0) \vert \vert_2 ^2 \cdot
I\Big(|| \bm{\mathcal{L}}_m^{(1)}(\bm \beta^0) ||_2 \geq \epsilon \sqrt{n} 
\Big)
\Big\}=0,
\end{equation}
and thus, the local maximizer $\hat{\bm{\beta}}$ satisfies the asymptotic normality
\begin{align*}
\sqrt{n}(\hat{\bm \beta} - \bm \beta^0)  \xrightarrow{\enskip d \enskip}
N\Big(\bm 0,  \bm V^{-1}  \Big)
\;\; \text{as }
n \xrightarrow{\enskip \enskip} \infty .
\end{align*}
\end{theorem}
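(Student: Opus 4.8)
The plan is to carry out the classical Cramér-type argument: show that $\mathcal{L}$ has, with probability tending to one, a local maximizer within an $O_p(n^{-1/2})$ ball of $\bm{\beta}^0$, and then read off asymptotic normality from a one-term Taylor expansion of the score, throughout replacing the i.i.d.\ laws of large numbers and central limit theorem by their Lindeberg--Feller (triangular-array) versions, since the $\bm{Y}_m$ are independent but not identically distributed. The enabling preliminary step is uniform control of the Hessian: Assumption~\ref{assumption 2.1} bounds the eigenvalues of $\bm{Q}_0,\bm{Q}_1,\dots,\bm{Q}_H$ on a neighborhood $\bm{\omega}\ni\bm{\beta}^0$, and since each covariate lies in $[0,1]$, the representation $\bm{K}_m=\bm{Q}_0+\sum_h x_m^{(h)}\bm{P}_h$ confines the eigenvalues of $\bm{K}_m$, hence of $\bm{\Sigma}_m$, to a fixed compact subinterval of $(0,\infty)$ \emph{uniformly in $m$}. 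As the entries of $-\bm{\mathcal{L}}_m^{(2)}$ are smooth, bilinear-in-$\bm{\Sigma}_m$ functions of these bounded matrices, the maps $\bm{\beta}\mapsto-\tfrac1n\bm{\mathcal{L}}^{(2)}(\bm{\beta})$ are termwise bounded and equi-Lipschitz on a compact ball $\bm{\omega}'\subset\bm{\omega}$ around $\bm{\beta}^0$; a uniform weak law of large numbers for independent bounded arrays then gives $\sup_{\bm{\beta}\in\bm{\omega}'}\big\| -\tfrac1n\bm{\mathcal{L}}^{(2)}(\bm{\beta}) - \tfrac1n\sum_m\bm{V}_m(\bm{\beta})\big\|\xrightarrow{p}0$, and since $\tfrac1n\sum_m\bm{V}_m(\bm{\beta})$ is continuous and tends to $\bm{V}\succ0$ at $\bm{\beta}^0$ (Assumption~\ref{assumption 2.2}), there is a radius $r>0$ on which, with probability tending to one, $-\tfrac1n\bm{\mathcal{L}}^{(2)}(\bm{\beta})\succeq\tfrac12\lambda_{\min}(\bm{V})\bm{I}$.

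For existence and the rate, expand $\mathcal{L}(\bm{\beta}^0+\bm{u}/\sqrt{n})-\mathcal{L}(\bm{\beta}^0)=\tfrac1{\sqrt n}\bm{u}^{T}\bm{\mathcal{L}}^{(1)}(\bm{\beta}^0)+\tfrac1{2n}\bm{u}^{T}\bm{\mathcal{L}}^{(2)}(\tilde{\bm{\beta}})\bm{u}$ with $\tilde{\bm{\beta}}$ on the segment. The leading term is $O_p(1)\cdot\|\bm{u}\|_2$ because $\tfrac1{\sqrt n}\bm{\mathcal{L}}^{(1)}(\bm{\beta}^0)$ is $\tfrac1{\sqrt n}$ times a sum of independent centered vectors with $\tfrac1n\sum_m\bm{V}_m(\bm{\beta}^0)\to\bm{V}$, hence $O_p(1)$ by Chebyshev's inequality, while the quadratic term is $\le-\tfrac14\lambda_{\min}(\bm{V})\|\bm{u}\|_2^2$ on the sphere $\|\bm{u}\|_2=C$ for $C$ and $n$ large, by the previous paragraph. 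Thus the difference is negative on that sphere with probability approaching one, so $\mathcal{L}$ possesses a local maximizer $\hat{\bm{\beta}}$ in the enclosed ball, whence $\|\hat{\bm{\beta}}-\bm{\beta}^0\|_2=O_p(n^{-1/2})$ and $\hat{\bm{\beta}}\xrightarrow{p}\bm{\beta}^0$. (Since the loglikelihood is concave in $\bm{\beta}$ wherever all $\bm{K}_m\succ0$, this local maximizer is in fact the unique maximizer, but we do not need this.)

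For the limit law, each $\bm{\mathcal{L}}_m^{(1)}(\bm{\beta}^0)$ is mean-zero with $\sum_m E\{\bm{\mathcal{L}}_m^{(1)}(\bm{\beta}^0)\bm{\mathcal{L}}_m^{(1)}(\bm{\beta}^0)^{T}\}=\sum_m\bm{V}_m(\bm{\beta}^0)$, and its entries are covariate-weighted centered quadratic forms $-\tfrac12 tr\{(\bm{W}_m-\bm{\Sigma}_m)\bm{T}^{u}\}$ in a Gaussian vector whose covariance has uniformly bounded eigenvalues, so $\sup_m E\|\bm{\mathcal{L}}_m^{(1)}(\bm{\beta}^0)\|_2^{2+\delta}<\infty$ for some (in fact every) $\delta>0$. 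With $\tfrac1n\sum_m\bm{V}_m(\bm{\beta}^0)\to\bm{V}\succ0$ this gives Lyapunov's condition $n^{-(1+\delta/2)}\sum_m E\|\bm{\mathcal{L}}_m^{(1)}(\bm{\beta}^0)\|_2^{2+\delta}\to0$, which implies the Lindeberg condition~\eqref{eq:Lindeberg}; the multivariate Lindeberg--Feller CLT then yields $\tfrac1{\sqrt n}\bm{\mathcal{L}}^{(1)}(\bm{\beta}^0)\xrightarrow{d}N(\bm 0,\bm{V})$. Taylor-expanding the stationarity equation gives $\bm 0=\bm{\mathcal{L}}^{(1)}(\hat{\bm{\beta}})=\bm{\mathcal{L}}^{(1)}(\bm{\beta}^0)+\bm{\mathcal{L}}^{(2)}(\bar{\bm{\beta}})(\hat{\bm{\beta}}-\bm{\beta}^0)$ for some $\bar{\bm{\beta}}$ on the segment, so $\sqrt n(\hat{\bm{\beta}}-\bm{\beta}^0)=\{-\tfrac1n\bm{\mathcal{L}}^{(2)}(\bar{\bm{\beta}})\}^{-1}\tfrac1{\sqrt n}\bm{\mathcal{L}}^{(1)}(\bm{\beta}^0)$; since $\bar{\bm{\beta}}\xrightarrow{p}\bm{\beta}^0$, the bracketed matrix tends in probability to $\bm{V}$ by the uniform law above, and Slutsky's theorem delivers $\sqrt n(\hat{\bm{\beta}}-\bm{\beta}^0)\xrightarrow{d}\bm{V}^{-1}N(\bm 0,\bm{V})=N(\bm 0,\bm{V}^{-1})$.

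I expect the main obstacle to be the careful handling of the non-identically-distributed structure: ensuring the eigenvalue bounds of the preliminary step and the $(2+\delta)$-th moment bounds for the Gaussian quadratic forms hold with constants free of $m$ --- exactly where Assumption~\ref{assumption 2.1} and $x_m^{(h)}\in[0,1]$ are essential --- and invoking the triangular-array (uniform WLLN, Lindeberg--Feller CLT) replacements for the i.i.d.\ limit theorems. Given these, verifying the stated Lindeberg condition~\eqref{eq:Lindeberg} is the routine consequence of the uniform Lyapunov moment bound.
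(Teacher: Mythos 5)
Your proposal is correct and follows essentially the same Cram\'er-type argument as the paper: verify the Lindeberg condition via a uniform moment bound on the Gaussian score (you use Lyapunov with a general $\delta$; the paper bounds the Lindeberg sum directly by the fourth moment, i.e.\ $\delta=2$), show that $\mathcal{L}(\bm{\beta})-\mathcal{L}(\bm{\beta}^0)$ is negative on a sphere of radius $Cn^{-1/2}$ with the quadratic term dominating, and invert a Taylor expansion of the score equation to obtain normality. The only cosmetic difference is your appeal to a uniform weak law of large numbers for the Hessian, which is unnecessary here because in this canonical exponential-family parameterization $\bm{\mathcal{L}}^{(2)}(\bm{\beta})$ is deterministic, a fact the paper's proof uses explicitly.
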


Using the asymptotic distribution of the maximum likelihood estimate, we can construct test statistics and conduct statistical inference on the underlying \textcolor{black}{covariate-dependent graph structure.} For example, to test if the network is static versus dynamic, we could conduct inference on the slope parameters and test the hypothesis $H_0: \bm{\theta}=0$  versus $H_1: \bm{\theta}\neq 0.$ If the covariate is categorical, the hypotheses above can be used to test the equality of graphical models among different groups. We can also focus on a single parameter of the slope matrices and construct Wald-type statistic to test hypotheses $H_0: [\theta_{ij}]_h=0$  versus $H_1: [\theta_{ij}]_h\neq 0.$  Such test can be used to assess the $h$th covariate's effect on the $(i,j)$th edge in the graphical model.

\section{Composite Likelihood Approach for Covariate-dependent Gaussian Graphical Models with Large $p_n$}
\label{Section 3}
\subsection{Penalized Composite Log-likelihood Formulation}

\textcolor{black}{\indent In this section,} we use the notation $p_n$ to represent the dimension of random vectors and assume $p_n$ to increase with sample size. Note that the estimation procedure proposed earlier involves matrix inversion in each iteration, leading to computational challenges as $p_n$ grows. Moreover, large-scale networks typically exhibit sparse network structure. Motivated by these insights, we introduce a penalized composite likelihood approach 
\citep{Lindsay, 
FearnheadPaul2002Almf,
CoxD.R.2004Anop, VarinCristiano2008Ocml, 
LarribeF.2011OCLI,
RibatetMathieu2012BIFC,
GaoXin2015EoSG}
for large $p_n$ scenarios. The conditional distribution of a random variable 
$ Y_{mj}$ given $\bm Y_{m,V\setminus\{j\}}
$ is
$N \big(
      {\mu_{Y_{mj}|\bm Y_{m,V\setminus\{j\}}}}, 
   \sigma_{Y_{mj}|\bm Y_{m,V\setminus\{j\}}}^{2}
\big)$
with
\begin{align*} \label{eq:pro2}
&
\mu_{Y_{mj}|\bm Y_{m,V\setminus\{j\}}} = 
-\Big[
\alpha_{jj} + \sum_{h=1}^{H} x_m^{(h)}[\theta_{jj}]_{h}
\Big]^{-1}
\sum_{i=1,i\neq j}^{p_n} \Big[
\alpha_{ji} + \sum_{h=1}^{H}x_m^{(h)}
[\theta_{ji}]_{h}
\Big]y_{mi},
\\
&
\sigma_{Y_{mj}|\bm Y_{m,V\setminus\{j\}}}^{2} = \Big[\alpha_{jj} + \sum_{h=1}^{H}x_m^{(h)}[\theta_{jj}]_h\Big]^{-1},
\end{align*}
where $V=\big\{1,\cdots,p_n\big\}$ denotes the vertex set, and $V\setminus\{j\}$ denotes all the vertices except $j$.
The individual conditional loglikelihood function of $Y_{mj}$ given $\bm Y_{m,V\setminus\{j\}}$ is denoted by $l_c(Y_{mj}, \bm \beta)$. We construct the composite loglikelihood function by incorporating all the conditional distributions. Therefore, the overall negative joint composite loglikelihood function is given by 
\begin{align*}
&
-{l}_c(\bm{Y}, \bm \beta) = -\sum_{m=1}^{n} l_c(\bm Y_{m}, \bm \beta)
=
-\sum_{m=1}^{n} \sum_{j=1}^{p_n}  l_c(Y_{mj}, \bm \beta)
\\
&
=
-\frac{1}{2}\sum_{m=1}^{n}
\sum_{j=1}^{p_n}
\log{
\Big[
\alpha_{jj} + \sum_{h=1}^{H}x_m^{(h)}[\theta_{jj}]_{h}
\Big]} 
\\
& + \frac{1}{2}\sum_{m=1}^{n} \sum_{j=1}^{p_n} \Big[
\alpha_{jj} + \sum_{h=1}^{H}x_m^{(h)}
[\theta_{jj}]_{h} \Big]
\Big\{y_{mj}+\Big[
\alpha_{jj} + 
\sum_{h=1}^{H}x_m^{(h)}[\theta_{jj}]_{h} \Big]^{-1}
\sum_{i=1,i\neq j}^{p_n}
\Big[\alpha_{ji} + \sum_{h=1}^{H}x_m^{(h)}[\theta_{ji}]_{h} \Big]y_{mi}
\Big\}^2
\\
&
+ const.
\end{align*}
\textcolor{black}{\indent
The penalized composite} loglikelihood estimator $\hat{\bm{\beta}}$ is determined by minimizing the following 
objective function
\begin{equation} \label{Objective}
\min_{\bm \beta}Q(\bm \beta)
= 
-l_c(\bm{Y}, \bm {\beta}) 
+ n \lambda
\Big\{
\sum_{w \in \mathcal{E}} \abs{\alpha_w}
+ 
\sum_{s \in \mathcal{E}}
\sum_{h=1}^{H}
\abs{[\theta_{s}]_h}
\Big\},
\end{equation}
where $\mathcal{E} = \{ (i,j): \; i < j; \; i,j = 1,\cdots, p_n\}$ is the off-diagonal index set.
\textcolor{black}{The $\ell_1$ penalty is included in the objective function to enforce sparsity on the network structure. It should be emphasized that the sparsity patterns of $\bm{Q}_0, \bm{P}_1, \cdots, \bm{P}_{H}$ can differ,} \textcolor{black}{leading to graph structures that change with covariates.}
\textcolor{black}{
Let $\mathcal{E}_0, \mathcal{E}_1, \cdots, \mathcal{E}_{H}$ be the true nonzero off-diagonal parameter index sets of $\bm{Q}_0, \bm{P}_1, \cdots, \bm{P}_H$, namely $\mathcal{E}_0 = \{ (i,j): \alpha_{ij}^0 \neq 0, i < j 
\}$, $\mathcal{E}_h = \{ (i,j): \left[\theta_{ij}^0\right]_h \neq 0, i < j \}$, where $i, j = 1, \cdots, p_n, h = 1,\cdots, H$ and $\alpha_{ij}^0, \left[\theta_{ij}^0\right]_1, \cdots, \left[\theta_{ij}^0\right]_H$ are the true values of $\alpha_{ij}, \left[\theta_{ij}\right]_1, \cdots, \left[\theta_{ij}\right]_H$. Then the parametrization \eqref{eq:pro6} can be rewritten as
}
\begin{equation} \label{eq:pro7}
\bm K_m = \bm{\Sigma}_m^{-1} = 
\sum_{s:s \in \mathcal{E}_0 \cup V }\alpha_{s}\bm{T}^{s} + 
\sum_{h=1}^{H}
x_m^{(h)} \Big\{
\sum_{u: u \in \mathcal{E}_h \cup V }[\theta_{u}]_{h}\bm{T}^{u}
\Big\}.
\end{equation}
\newline
\textcolor{black}{
Under this parametrization, 
the sparsity pattern varies at specific values of the covariates. 
For example, if we have two covariates $x^{(1)}$ and $x^{(2)}$, the precision matrix is given by
$\bm{Q}_0+x^{(1)}\bm{P}_1 + x^{(2)}\bm{P}_2$. Let $\mathcal{G}_0,$ $ \mathcal{G}_1,$ and $ \mathcal{G}_2$ be the graph structures corresponding to $\bm{Q}_0, \bm{P}_1$ and $\bm{P}_2$. There are four possible graph structures: the graph structure is $\mathcal{G}_0$ when the covariates belong to the set $A=\{x^{(1)} = 0, x^{(2)} = 0\}$; the graph structure is $\mathcal{G}_0 + \mathcal{G}_1$ for the set $B=\{x^{(1)} \in (0, 1], x^{(2)} = 0\}$; the graph structure is $\mathcal{G}_0 + \mathcal{G}_2$ for the set $C=\{x^{(1)} = 0, x^{(2)} \in (0, 1]\}$; the graph structure is $\mathcal{G}_0 + \mathcal{G}_1 + \mathcal{G}_2$ for the set $D=\{x^{(1)} \in (0, 1], x^{(2)} \in (0, 1]\}$. We could obtain these possible graph structures as long as sets A, B, C, and D each have a non-zero probability under the covariate distribution. On the other hand, if the covariates are uniformly distributed within the unit square, the graph structure would almost surely be $\mathcal{G}_0 + \mathcal{G}_1 + \mathcal{G}_2$. However, the strengths of the edges keep varying as a linear function of the covariates.
}

\textcolor{black}{
As the true set of nonzero off-diagonal parameters is typically unknown in practice, we conduct penalized composite likelihood estimation and obtain sparse estimates of the parameters. Section~\ref{Section 3.3} shows that our proposed penalized composite likelihood estimator satisfies the parameter consistency and model selection consistency, }
\textcolor{black}{demonstrating its effectiveness in recovering the true graph structure which 
varies with covariates.
}

\subsection{Estimation Procedure}
\label{Section 3.2}
\textcolor{black}{\indent
To obtain the penalized maximum composite likelihood estimate, we propose to}
perform the coordinate descent algorithm \citep{FriedmanJerome2007PCO, TsengP.2001Coab}. A thorough exploration including the convergence properties of the estimators using the coordinate descent algorithm with the $\ell_1$ penalization can be found in \citet{BrehenyPatrick2011CDAF} and \citet{MazumderRahul2011SCDW}. Specifically, the closed-form updating expressions for off-diagonal parameters can be derived by differentiating $Q(\bm{\beta})$ with respect to $[\theta_{s}]_{h_1}, \alpha_{w}$ for all $s, w =(a,b) \in \mathcal{E}$ and $h_1=1,\cdots, H.$ The updating equations for each off-diagonal parameter given all the other parameters are as follows,
\begin{align*}
& \hat{\alpha}_{w} = 
\frac{S \bigg( -\Big\{\frac{2}{n}\sum_{m=1}^{n}
y_{ma}y_{mb} + \mathcal{I}_{1.1} + \mathcal{I}_{1.2}
\Big\},
\lambda
\bigg)}
{\mathcal{I}_{1.3}},
\\
& [\hat{\theta}_{s}]_{h_1} =
\frac{S \bigg(-\Big\{\frac{2}{n}\sum_{m=1}^{n}x_m^{(h_1)}y_{ma}y_{mb} + \mathcal{I}_{2.1} + \mathcal{I}_{2.2}
\Big\}, \lambda
\bigg)}
{\mathcal{I}_{2.3}},
\end{align*}
where $S(z,\lambda)=sign(z)(|z|-\lambda)_{+}$ represents the soft-thresholding operator. The notation $(|z|-\lambda)_{+}$ is defined as 
\begin{align*}
	(|z|-\lambda)_{+}= \left\{                
	\begin{array}{ll}   
		|z|-\lambda, & {\text{if}\,|z|-\lambda > 0,}\\
		0, & \text{otherwise}.	
	\end{array} \right. 
\end{align*} 

The terms $\mathcal{I}_{1.1},\mathcal{I}_{1.2}, \cdots, \mathcal{I}_{2.3}$ are defined in Section S1 of the supplementary materials. For the diagonal parameters, the partial derivatives are given as follows
\begin{equation}
\label{diagonal equations}
\begin{split}
\frac{\partial Q(\bm \beta)}{\partial \alpha_{jj}}
=&
- \frac{1}{2}
\sum_{m=1}^{n}\Big[\alpha_{jj} + \sum_{h=1}^{H}x_m^{(h)} [\theta_{jj}]_h \Big]^{-2}
\Big( 
\sum_{i=1,i\neq j}^{p_n}
\Big[\alpha_{ji} + \sum_{h=1}^{H}x_m^{(h)}[\theta_{ji}]_h\Big]y_{mi}
\Big)^2
\\
& 
- \frac{1}{2}\sum_{m=1}^{n}\Big[\alpha_{jj} + \sum_{h=1}^{H}x_m^{(h)} [\theta_{jj}]_h \Big]^{-1}
+ \frac{1}{2}
\sum_{m=1}^{n}y_{mj}^{2},
\\
\frac{\partial Q(\bm \beta)}{\partial [\theta_{jj}]_{h_1}}
=&
 - \frac{1}{2}\sum_{m=1}^{n}x_m^{(h_1)}\Big[\alpha_{jj} + \sum_{h=1}^{H}x_m^{(h)} [\theta_{jj}]_h \Big]^{-2}
\Big( 
\sum_{i=1,i\neq j}^{p_n}
\Big[\alpha_{ji} + \sum_{h=1}^{H}x_m^{(h)}[\theta_{ji}]_h\Big]y_{mi}
\Big)^2
\\
& -\frac{1}{2}\sum_{m=1}^{n}x_m^{(h_1)}\Big[\alpha_{jj} + \sum_{h=1}^{H}x_m^{(h)} [\theta_{jj}]_h \Big]^{-1}
+ \frac{1}{2}\sum_{m=1}^{n}x_m^{(h_1)}y_{mj}^{2},
\end{split}
\end{equation}
which do not have closed-form updating equations. We employ Broyden's method \citep{NocedalJorge2006NO} to solve this system of nonlinear equations. Consider a special case where the diagonal elements $(\bm \Sigma_m^{-1})_{jj}$ remain constant, i.e. $[\theta_{jj}]_h = 0, \alpha_{jj} > 0$ for all $j = 1,\cdots,p_n$ and $h=1,\cdots,H$. This suggests that the conditional variances are not influenced by covariates. Under this scenario, the closed-form updating expressions for diagonal parameters can be directly obtained, which simplifies the estimation procedure. See Section S2 of the supplementary materials for more details.

\subsection{Theoretical Properties}
\label{Section 3.3}
\textcolor{black}{\indent
In this section,} we provide the estimation error bound for the penalized composite likelihood estimator and demonstrate its model selection consistency under some regularity conditions. Let the score function and the Hessian matrix of ${l}_c(\bm{\beta})$ be denoted as ${l}^{(1)}_c(\bm{\beta}),{l}^{(2)}_c(\bm{\beta})$ respectively. We use the notation $\mathcal{S}_1$ to represent the set of true nonzero off-diagonal parameters and $\mathcal{S}_2$ to denote the set of all diagonal parameters. Then the set of true nonzero parameters can be denoted as $\bm{\mathcal{S}} = \mathcal{S}_1 \cup \mathcal{S}_2$. Define $\bm{\mathcal{S}}^c$ as the set of all zero parameters. Let $s_n$ denote the maximum number of nonzero parameters within each row across all rows and all matrices $\bm{Q}_0, \bm{P}_1,\cdots,\bm{P}_H.$ Let $q_n$ denote the total number of true nonzero off-diagonal parameters,
where $q_n = \mathcal{O}(p_ns_n)$.

\begin{assumption}
\label{assumption 3.1}
Define $H(\bm{\beta}) = E\big\{ 
-\frac{1}{n}
{l}^{(2)}_c(\bm{\beta})
  \big\}$ and there exists a neighborhood $||\bm{\beta} - \bm{\beta}^0||_{2} < \eta$ for some constant $\eta>0$ such that $H(\bm{\beta})$ has eigenvalues bounded away from zero and infinity.
  Furthermore, the absolute values of eigenvalues of $E \big\{
\frac{\partial^3 -l_c(\bm Y_m, \bm \beta) } {\partial \bm{\beta} \partial \bm{\beta}^T \partial \beta_{u}}
\big\}$ are bounded by some constant, where $\beta_u$ denotes any off-diagonal parameter.
\end{assumption}

\begin{assumption}
\label{assumption 3.2}
There exists a constant $d > 0$ such that $s_n^4 = \mathcal{O}(p_n^d)$ and $p_n^{1+d}\log{p_n}=o(n)$.
\end{assumption}

\begin{theorem}
\label{Theorem 2.2}
Under \textbf{Assumptions} \ref{assumption 3.1}-\ref{assumption 3.2}, if the tuning parameter $\lambda$ satisfies $\delta_1 \Big(
\frac{ s_n^2 \log{p_n}}{n}
\Big)^{\frac{1}{2}} \leq \lambda \leq  \delta_2 \Big(
\frac{ s_n^3 \log{p_n}}{n}
\Big)^{\frac{1}{2}}$ for some constants $\delta_1, \delta_2 >0$, then there exists a local minimizer $\bm{\hat{\beta}}$ of the objective function $Q(\bm{\beta})$ such that
$
||\bm{\hat{\beta}} - \bm{\beta^0}||_2 
= 
\mathcal{O}_p\Big\{ 
\Big( 
\frac{
p_n^{1 + d} \log{p_n}}{n} 
\Big)^{\frac{1}{2}} 
\Big\} .
$ 
\end{theorem}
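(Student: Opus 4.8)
\textbf{Proof proposal for Theorem~\ref{Theorem 2.2}.}

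The plan is to use the classical device of exhibiting a consistent local minimizer inside a shrinking ball, adapted to the composite likelihood and to the genuinely high-dimensional regime by leaning on the $\ell_1$ penalty. Write $r_n = M\big(p_n^{1+d}\log p_n / n\big)^{1/2}$ for a large constant $M$, set $\bm\beta = \bm\beta^0 + \bm\Delta$, and split $\bm\Delta = \bm\Delta_{\bm{\mathcal{S}}} + \bm\Delta_{\bm{\mathcal{S}}^c}$ along the true support. It suffices to show that, with probability at least $1-\epsilon$ once $M$ is large, $Q(\bm\beta^0+\bm\Delta) - Q(\bm\beta^0) > 0$ for every $\bm\Delta$ with $\|\bm\Delta\|_2 = r_n$; continuity of $Q$ then forces a local minimizer inside $\{\|\bm\beta - \bm\beta^0\|_2 < r_n\}$, which is the claim.

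Expanding $-l_c$ to second order about $\bm\beta^0$ with an integral third-order remainder gives
\begin{align*}
Q(\bm\beta^0+\bm\Delta) - Q(\bm\beta^0) = -\,l^{(1)}_c(\bm\beta^0)^{T}\bm\Delta + \tfrac{1}{2}\,\bm\Delta^{T}\big\{-l^{(2)}_c(\bm\beta^0)\big\}\bm\Delta + R_3(\bm\Delta) + n\lambda\big\{P(\bm\beta^0+\bm\Delta) - P(\bm\beta^0)\big\},
\end{align*}
where $P(\cdot)$ denotes the $\ell_1$ term in \eqref{Objective}. I control the pieces in turn. The composite score is unbiased, $E\{l^{(1)}_c(\bm\beta^0)\} = \bm 0$, and each of its $\mathcal{O}(p_n^2)$ coordinates is a centered bilinear form in Gaussian vectors with bounded-spectrum covariances (Assumption~\ref{assumption 3.1}); a Bernstein/Hanson--Wright inequality together with a union bound yield $\|l^{(1)}_c(\bm\beta^0)\|_\infty = \mathcal{O}_p\big((n\log p_n)^{1/2}\big)$, so on this event $\|l^{(1)}_c(\bm\beta^0)\|_\infty \le n\lambda/2$ by the lower bound imposed on $\lambda$. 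Since only off-diagonal coordinates are penalized and all diagonal coordinates lie in $\bm{\mathcal{S}}$, one has $P(\bm\beta^0+\bm\Delta) - P(\bm\beta^0) \ge \|\bm\Delta_{\bm{\mathcal{S}}^c}\|_1 - \|\bm\Delta_{\bm{\mathcal{S}}_1}\|_1$, and combining this with the linear bound gives a lower bound $\tfrac{n\lambda}{2}\|\bm\Delta_{\bm{\mathcal{S}}^c}\|_1 - \tfrac{3n\lambda}{2}\|\bm\Delta_{\bm{\mathcal{S}}}\|_1$. The nonnegative first term either already dominates (when $\bm\Delta$ leaves the cone $\{\|\bm\Delta_{\bm{\mathcal{S}}^c}\|_1 \le 3\|\bm\Delta_{\bm{\mathcal{S}}}\|_1\}$) or confines $\bm\Delta$ to that cone; the second term is bounded below by $-\tfrac{3n\lambda}{2}|\bm{\mathcal{S}}|^{1/2}\|\bm\Delta\|_2$. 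Because $q_n = \mathcal{O}(p_n s_n)$ and $H$ is fixed, $|\bm{\mathcal{S}}| = \mathcal{O}(p_n s_n)$, so at the upper end of the admissible range this penalty loss is of order $n\lambda\sqrt{p_n s_n}\,r_n \asymp n\big(p_n^{1+d}\log p_n/n\big)^{1/2} r_n$; this is precisely the term that fixes the rate, and the upper bound on $\lambda$ keeps it from exceeding it.

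For the quadratic piece I need $\bm\Delta^{T}\{-l^{(2)}_c(\bm\beta^0)\}\bm\Delta \ge c\,n\|\bm\Delta\|_2^2$ on the cone. By Assumption~\ref{assumption 3.1} the population version $H(\bm\beta^0) = E\{-\tfrac{1}{n} l^{(2)}_c(\bm\beta^0)\}$ has eigenvalues bounded below, and a sparse operator-norm concentration bound shows $\|\tfrac{1}{n} l^{(2)}_c(\bm\beta^0) + H(\bm\beta^0)\|$, restricted to index blocks of size $\mathcal{O}(|\bm{\mathcal{S}}|)$, is $\mathcal{O}_p\big((|\bm{\mathcal{S}}|\log p_n/n)^{1/2}\big) = o_p(1)$ under Assumption~\ref{assumption 3.2}; a restricted-eigenvalue inequality then transfers the lower bound to all $\bm\Delta$ in the cone. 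For $R_3$ I exploit that $-l_c = \sum_{j=1}^{p_n}(-l_{c,j})$ with $-l_{c,j}$ depending only on the parameters in row $j$, so the third-derivative array is block-structured row by row; differentiating the single genuinely nonlinear ingredient, the factor $[\alpha_{jj}+\sum_h x_m^{(h)}[\theta_{jj}]_h]^{-1}$ entering the conditional mean, and using that a conditional residual is independent of its conditioning variables (hence the relevant sample cross-moments are mean-zero and $\mathcal{O}_p((\log p_n/n)^{1/2})$), one obtains $|R_3(\bm\Delta)| = \mathcal{O}_p\big(n\|\bm\Delta\|_2^3\big)$ up to mild factors absorbed by Assumption~\ref{assumption 3.2}; since $\|\bm\Delta\|_2 = r_n \to 0$, this is negligible next to $n r_n^2$. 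Putting the bounds together, on the intersection of these high-probability events and for $\bm\Delta$ in the cone with $\|\bm\Delta\|_2 = r_n$,
\begin{align*}
Q(\bm\beta^0+\bm\Delta) - Q(\bm\beta^0) \ge c\,n\,r_n^2 - C\,n\big(p_n^{1+d}\log p_n/n\big)^{1/2} r_n - o_p(n r_n^2) > 0
\end{align*}
once $M$ is chosen large, while outside the cone the penalty difference alone is positive; this yields the local minimizer and the rate. The step I expect to be the main obstacle is the uniform control of $R_3$ over the whole $r_n$-ball in diverging dimension: a crude bound of the form $n\|\bm\Delta\|_1\|\bm\Delta\|_2^2$ carries dimension factors that the rate conditions cannot absorb, so one must genuinely use the row-wise block structure of the composite likelihood and the mean-zero structure of the residual cross-moments to keep $R_3$ at order $n\|\bm\Delta\|_2^3$. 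A secondary difficulty is the restricted-eigenvalue step: since the full composite Hessian need not be positive definite when the parameter count exceeds $n$, one has to couple the $\ell_1$ penalty with sparse-eigenvalue concentration, which is where the lower bound $\lambda \gtrsim (s_n^2\log p_n/n)^{1/2}$ and the sparsity budget $s_n^4 = \mathcal{O}(p_n^d)$ are used.
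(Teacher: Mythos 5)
Your proposal shares the paper's overall skeleton (exhibit a local minimizer by showing $Q(\bm\beta^0+\bm\Delta)-Q(\bm\beta^0)>0$ on the sphere of radius $r_n=(p_n^{1+d}\log p_n/n)^{1/2}$), but the route through the details is the standard lasso basic-inequality argument --- bound the score in $\ell_\infty$ by $n\lambda/2$, derive a cone condition, invoke a restricted eigenvalue --- whereas the paper does something more hands-on: it Taylor-expands with the Hessian at an intermediate point $\bm\beta^\ast$ (so Assumption~\ref{assumption 3.1}'s eigenvalue bound over the $\eta$-neighborhood handles the quadratic term directly, with no cone and no third-order remainder), and it splits the linear term $l_c^{(1)}(\bm\beta^0)^T\bm\Delta$ into three pieces by coordinate type: diagonal, truly nonzero off-diagonal, and truly zero off-diagonal. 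Only the last piece is absorbed by the penalty; the first two are controlled by Cauchy--Schwarz against the quadratic form.

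The genuine gap in your version is the treatment of the unpenalized diagonal coordinates. Your absorption step requires $\|l_c^{(1)}(\bm\beta^0)\|_\infty\le n\lambda/2$ over \emph{all} coordinates, and you assert a rate $\mathcal{O}_p\{(n\log p_n)^{1/2}\}$ for the score sup-norm. What is actually established (Lemma~2 of the supplement) is $\mathcal{O}_p\{(ns_n^2\log p_n)^{1/2}\}$ for off-diagonal score coordinates and $\mathcal{O}_p\{(ns_n^4\log p_n)^{1/2}\}$ for the diagonal ones, because each score coordinate is a sum over up to $s_n$ (respectively $s_n^2$) sub-exponential cross-terms and the proof proceeds by a union bound over those summands. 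With $\lambda\asymp(s_n^2\log p_n/n)^{1/2}$ at the lower end of the admissible range, the diagonal scores exceed $n\lambda$ by a factor of order $s_n$, so the cone condition cannot be derived and the linear term is not dominated by the penalty on those coordinates. The paper's fix --- which your argument needs in some form --- is to bound the diagonal block of $\mathcal{I}_1$ by $Cr_n\,\mathcal{O}_p\{(np_ns_n^4\log p_n)^{1/2}\}$ via Cauchy--Schwarz and divide by the quadratic term $\asymp C^2\kappa_- nr_n^2$, which is exactly where the condition $s_n^4=\mathcal{O}(p_n^d)$ of Assumption~\ref{assumption 3.2} is consumed. Two further steps you flag but do not resolve are also avoided, rather than solved, by the paper: the uniform third-order remainder bound $|R_3(\bm\Delta)|=\mathcal{O}_p(n\|\bm\Delta\|_2^3)$ is unnecessary once the Hessian is evaluated at the intermediate point, and the restricted sparse-operator-norm concentration you invoke is replaced by an entrywise comparison of $-l_c^{(2)}(\bm\beta^\ast)$ with its expectation. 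As written, your proof is not complete without (i) a genuinely $s_n$-free concentration bound for the diagonal score coordinates or a separate Cauchy--Schwarz treatment of them, and (ii) a proved restricted-eigenvalue statement for the sample composite Hessian.
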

\noindent Note that if we further assume that $s_n$ is bounded, the estimation error bound would be
$$||\bm{\hat{\beta}} - \bm{\beta^0}||_2 
= 
\mathcal{O}_p\Big\{ 
\Big( 
\frac{
p_n \log{p_n}}{n} 
\Big)^{\frac{1}{2}} 
\Big\}.
$$

\begin{assumption}
\label{assumption 3.3}
Let $H_{\mathcal{S}_1\mathcal{S}_1}^{0} = E
\Big\{
- \frac{1}{n}
l_c ^{(2)}(\bm{\beta^0})_{\mathcal{S}_1\mathcal{S}_1}
\Big\}
$ and 
$
H_{\mathcal{S}^{c}\mathcal{S}_1}^{0} = E\Big\{
-\frac{1}{n}
l_c ^{(2)}(\bm{\beta^0})_{\mathcal{S}^{c}\mathcal{S}_1}
\Big\}
$.
There exists some positive constant $ \xi \in (0,1)$ such that
\begin{align*}
\vertiii{  
H_{\mathcal{S}^{c}\mathcal{S}_1}^{0} \big(H_{\mathcal{S}_1\mathcal{S}_1}^{0}\big)^{-1}
}_{\infty} \leq 1-\xi ,
\end{align*}  
where $\vertiii{\cdot}_{\infty}$ is the maximum absolute row sum of a matrix.
\end{assumption}

This assumption states that the expected Hessian matrix satisfies the incoherence condition, which assumes that any two variables without an edge between them should not impose a large effect on variable pairs connected by edges in a Gaussian graphical model. This condition is widely used in high dimensional statistics, including graphical models and regularized regression models \citep{ZhaoPeng2006Omsc, MeinshausenNicolai2006HGaV, NIPS2008_61f2585b, RavikumarPradeep2011Hceb}. In the theorem above, we establish the consistency results for all vertex and edge parameters. As a next step, we investigate the problem of model selection focusing on the edge parameters only, while the vertex parameters are treated as nuisance parameters and are assumed to be known.

\begin{theorem}
\label{Theorem 2.3}
Under \textbf{Assumptions} \ref{assumption 3.1}-\ref{assumption 3.3} and \textbf{Theorem} \ref{Theorem 2.2}, suppose that
$q_n^2p_n^{d}\log{p_n} = o(n)$ and 
the minimum non-zero parameter satisfies
\begin{align*}
\min_{u \in \mathcal{S}_1}
\abs{
\beta_{u}^0
} := \min_{\substack{(i,j) \in \mathcal{S}_1 \\ h = 1,\cdots, H}} \big\{ \abs{[\theta_{ij}^0]_{h}}, \abs{\alpha_{ij}^0} \big\}
\geq 
c^{\ast}\sqrt{q_n}
\lambda
\end{align*}
for some constant $c^{\ast} > 0$. Then the penalized composite likelihood estimator $\bm{\hat{\beta}}$ satisfies $sign(\hat{\bm{\beta}}) = sign(\bm{\beta}^0)$ with probability tending to one.
\end{theorem}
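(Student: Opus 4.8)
The plan is to run a primal--dual witness (PDW) construction adapted to the non-convex penalized composite likelihood, treating the vertex parameters as fixed and known throughout (as stipulated just before the theorem). First I would introduce the oracle (restricted) estimator $\tilde{\bm\beta}$ obtained by minimizing $Q(\bm\beta)$ subject to $\beta_u = 0$ for every $u \in \mathcal{S}^c$, i.e. on the zero off-diagonal coordinates. Running the argument behind Theorem~\ref{Theorem 2.2} on this restricted support shows that such a minimizer exists inside the neighborhood of Assumption~\ref{assumption 3.1} and satisfies $\|\tilde{\bm\beta}_{\mathcal{S}_1} - \bm\beta^0_{\mathcal{S}_1}\|_2 = O_p(\sqrt{q_n}\,\lambda)$. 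I would then build a subgradient vector $\hat{\bm z} \in \partial\|\tilde{\bm\beta}\|_1$ with $\hat z_u = \mathrm{sign}(\tilde\beta_u)$ for $u\in\mathcal{S}_1$ and, for $u\in\mathcal{S}^c$, $\hat z_u$ defined by the stationarity identity $-l_c^{(1)}(\tilde{\bm\beta})_u + n\lambda\hat z_u = 0$. If $\|\hat{\bm z}_{\mathcal{S}^c}\|_\infty < 1$ with probability tending to one, then $(\tilde{\bm\beta},\hat{\bm z})$ solves the KKT system of the unrestricted problem strictly, so $\tilde{\bm\beta}$ is a local minimizer of $Q$ supported in $\mathcal{S}_1$; local strong convexity on the Assumption~\ref{assumption 3.1} neighborhood then forces the Theorem~\ref{Theorem 2.2} local minimizer $\hat{\bm\beta}$ to coincide with $\tilde{\bm\beta}$, giving $\hat{\bm\beta}_{\mathcal{S}^c}=0$.

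It remains to establish (i) sign recovery on the support and (ii) strict dual feasibility off the support. For (i) I would bound $\|\tilde{\bm\beta}_{\mathcal{S}_1}-\bm\beta^0_{\mathcal{S}_1}\|_\infty \le \|\tilde{\bm\beta}_{\mathcal{S}_1}-\bm\beta^0_{\mathcal{S}_1}\|_2 = O_p(\sqrt{q_n}\,\lambda)$ and invoke the beta-min condition $\min_{u\in\mathcal{S}_1}|\beta_u^0| \ge c^\ast\sqrt{q_n}\,\lambda$, choosing $c^\ast$ larger than the hidden constant so that no support coordinate crosses zero and $\mathrm{sign}(\tilde{\bm\beta}_{\mathcal{S}_1})=\mathrm{sign}(\bm\beta^0_{\mathcal{S}_1})$. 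For (ii) I would Taylor-expand the composite score,
\[
\tfrac1n l_c^{(1)}(\tilde{\bm\beta}) = \tfrac1n l_c^{(1)}(\bm\beta^0) + \tfrac1n l_c^{(2)}(\bm\beta^0)(\tilde{\bm\beta}-\bm\beta^0) + \bm r,
\]
with remainder $\bm r$ controlled by the bounded-third-derivative part of Assumption~\ref{assumption 3.1} and by $\|\tilde{\bm\beta}-\bm\beta^0\|_2^2 = O_p(q_n\lambda^2)$, solve the $\mathcal{S}_1$-block for $\tilde{\bm\beta}_{\mathcal{S}_1}-\bm\beta^0_{\mathcal{S}_1}$ using $(H^0_{\mathcal{S}_1\mathcal{S}_1})^{-1}$, and substitute into the $\mathcal{S}^c$-block to obtain
\[
\hat{\bm z}_{\mathcal{S}^c} = H^0_{\mathcal{S}^c\mathcal{S}_1}\big(H^0_{\mathcal{S}_1\mathcal{S}_1}\big)^{-1}\hat{\bm z}_{\mathcal{S}_1} + \frac{1}{n\lambda}\Big(I - H^0_{\mathcal{S}^c\mathcal{S}_1}(H^0_{\mathcal{S}_1\mathcal{S}_1})^{-1}\Big)l_c^{(1)}(\bm\beta^0) + (\text{Hessian-concentration and remainder terms}).
\]
The first term has $\ell_\infty$ norm at most $1-\xi$ by the incoherence condition (Assumption~\ref{assumption 3.3}), so the task reduces to showing every other term is $o_p(\xi)$.

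The noise term I would handle with a Bernstein / sub-exponential tail bound on the coordinates of $l_c^{(1)}(\bm\beta^0)$ -- sums of $n$ independent, model-mean-zero statistics with moments bounded via Assumption~\ref{assumption 3.1} -- giving $\|l_c^{(1)}(\bm\beta^0)\|_\infty = O_p(\sqrt{n\log p_n})$; dividing by $n\lambda$ and using $\lambda \ge \delta_1(s_n^2\log p_n/n)^{1/2}$ bounds it by $O_p(1/s_n) = o_p(1)$. The Hessian-concentration term needs $\|\tfrac1n l_c^{(2)}(\bm\beta^0) - H^0\|_{\max} = O_p(\sqrt{\log p_n/n})$ together with $\|\tilde{\bm\beta}-\bm\beta^0\|_1 \le \sqrt{q_n}\,O_p(\sqrt{q_n}\lambda)$ and $\vertiii{(H^0_{\mathcal{S}_1\mathcal{S}_1})^{-1}}_\infty = O(\sqrt{q_n})$, and the third-order remainder is bounded analogously; assembling the pieces, the scaling hypotheses $s_n^4 = O(p_n^d)$ and $p_n^{1+d}\log p_n = o(n)$ of Assumption~\ref{assumption 3.2} together with the extra condition $q_n^2 p_n^d \log p_n = o(n)$ of this theorem are exactly what drive all error terms to $o_p(\xi)$. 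I expect the main obstacle to be precisely this bookkeeping: propagating the repeated $\ell_\infty$/$\ell_2$ conversions (each costing a factor $\sqrt{q_n}$) through the $q_n\times q_n$ inverse $(H^0_{\mathcal{S}_1\mathcal{S}_1})^{-1}$ while tracking the non-convex remainder, and verifying that the two-sided window on $\lambda$ from Theorem~\ref{Theorem 2.2} is simultaneously large enough to kill the noise term and small enough that the beta-min threshold $c^\ast\sqrt{q_n}\lambda$ does not outgrow the true signal.
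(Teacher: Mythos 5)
Your proposal is correct and follows essentially the same route as the paper: a primal--dual witness construction with the restricted estimator $\tilde{\bm\beta}$ supported on $\mathcal{S}_1$, strict dual feasibility on $\mathcal{S}^c$ obtained by Taylor-expanding the composite score, solving the $\mathcal{S}_1$ block, substituting into the $\mathcal{S}^c$ block, and invoking the incoherence condition (which the paper transfers from $H^0$ to the sample Hessian in its Lemma~\ref{lemma 2.4}), followed by sign recovery on the support via an $\ell_\infty$ bound of order $\sqrt{q_n}\lambda$ against the beta-min condition. The only caveat is that the score term $\frac{1}{n\lambda}\|l_c^{(1)}(\bm\beta^0)_{\mathcal{S}^c}\|_\infty$ is not $o_p(1)$ in general (the paper's Lemma~\ref{lemma 2.2} gives the rate $(ns_n^2\log p_n)^{1/2}$, which after dividing by $n\lambda$ is only $O_p(1)$); the paper instead chooses the tail threshold $\epsilon = \tfrac{1}{6}\xi^2\lambda$ so this term is absorbed into the $\xi$ slack of the incoherence bound, a refinement your bookkeeping would need to adopt.
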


\section{Numerical Simulations}
\label{Section 4}
\textcolor{black}{\indent
In this section, we provide several numerical studies to evaluate the performance of our proposed methods. Simulation results are summarized and reported over 100 data sets.
}

\subsection{Covariate-dependent Gaussian Graphical Models with Fixed $p$} \label{sec:simulation study 1}

\textcolor{black}{\indent
For the first simulation}, we consider small-sized \textcolor{black}{cdexGGM} with $p=10.$ Two different graphical structures are simulated for $Q_0,$ and $Q_1$ matrices, including a general network and a chain network with the sample sizes $n = 500$ or $3000.$ One covariate $x_m^{(1)}$ is simulated at 5 different values and for each specific value of the covariate, there are $n/5$ observations. The precision matrices are established by $\bm \Sigma_m^{-1} = x_m^{(1)}\bm Q_1 + (1-x_m^{(1)})\bm Q_0$. In the first case, the matrices $\bm Q_0$ and $\bm Q_1$ are generated as two random positive definite matrices by the format of $\bm{A} \bm{A} ^{T} + c \cdot \bm{I},$ where $A$ is a random square matrix, $I$ is the identity matrix, and $c$ is an arbitrary positive constant. 
In the second case, we generate a tridiagonal precision matrix corresponding to a chain network as outlined in \citet{Fan_2009}. Specifically, each off-diagonal element of the inverse matrices of $\bm Q_1$ and  $\bm Q_0$ is generated as 
$(\bm Q_1^{-1})_{ij} = \exp\big\{-1/2\abs{s_i-s_j}\big\},$ and $
(\bm Q_0^{-1})_{ij} = \exp\big\{-1/2\abs{s_i^{\prime}-s_j^{\prime}}\big\}
,$ where $s_1$ and $s_1^{\prime}$ are randomly initialized as positive numbers,
$
s_i - s_{i-1} \sim Unif(0.5, 1),$ and $
s_i^{\prime} - s_{i-1}^{\prime} \sim Unif(0.5, 1)
$. We evaluate the average $\ell_2$ norm of the error vector $\bm{e}=\hat{\bm{\beta}} - \bm{\beta}^0 $ over 100 simulations
and further divide it by the number of parameters as a measure of error for the proposed estimates. 
As shown by Table~\ref{sim table 1}, the measure of error is around 0.05 for $n=500,$ and around 0.01 for $n=3000.$  

\begin{table}
\caption{Simulation results of MLE for small-sized \textcolor{black}{cdexGGM}}  
\begin{threeparttable}
\begin{tabular}{>{\centering\arraybackslash}p{2.7cm}>{\centering\arraybackslash}p{2.7cm}>{\centering\arraybackslash}p{2.7cm}>{\centering\arraybackslash}p{2.7cm}>{\centering\arraybackslash}p{2.7cm}>
{\centering\arraybackslash}p{2.7cm}>
{\centering\arraybackslash}p{2.7cm}}  \toprule
Sample Size & Case & Matrix & Mean $||\bm{e}||$ & Mean $||\bm{e}||$ /\# of Para. \\ 
    \midrule
    \multirow{4}{*}{500} & 
    \multirow{2}{*}{General}&
    $\bm Q_0$  &  
    2.6645(1.0731) &
    0.0484
    \\
    \addlinespace &
    &$\bm Q_1$  &
         2.6304(0.9897) & 
         0.0478 
    \\
    \addlinespace
    &
     \multirow{2}{*}{Chain}&
     $\bm Q_0$  &  
    2.8332(0.6690) &
    0.0515
    \\
    \addlinespace
    & 
    &$\bm Q_1$  &
         2.8655(0.6518) & 
         0.0521
    \\
    \midrule
    \multirow{4}{*}{3000} & 
    \multirow{2}{*}{General}&
    $\bm Q_0$  &  
    0.9573(0.3243) &
    0.0174
    \\
    \addlinespace &
    &$\bm Q_1$  &
         0.8403(0.3017) & 
         0.0153 
    \\
    \addlinespace
    &
     \multirow{2}{*}{Chain}&
    $\bm Q_0$  &  
    0.9705(0.1571) &
    0.0176
    \\
    \addlinespace
    & 
    &$\bm Q_1$  &
         0.9835(0.1603) & 
         0.0179 
    \\
    \bottomrule
\end{tabular}
\label{sim table 1}
\begin{tablenotes}
\item The number of vertices $p$ is 10 and the total $\#$ of parameters is 55 for each matrix. The standard errors are listed in parentheses.  
\end{tablenotes}
\end{threeparttable}
\end{table}

\subsection{Covariate-dependent Gaussian Graphical Models with Large $p_n$} 
\label{sec:simulation study 2}

\textcolor{black}{\indent
In this section, we consider high dimensional covariate dependent GGMs with} \textcolor{black}{$p_n = 30$ and $p_n = 50$, along with the same sample size $n = 3000$ and one covariate $H=1$. The simulation results for $p_n=100$ are also provided in Section S4.2 of the supplementary materials. We generate random matrices $\bm{Q}_0$ and $\bm{Q}_1$ with a sparse network structure, using the construction of $\bm{A} \bm{A} ^{T} + c \cdot \bm{I},$ where $A$ is a random square matrix. The diagonal elements of the precision matrix are set to be linearly varied with the covariate in this study.
}
\textcolor{black}{The tuning parameter $\lambda$} is determined by composite loglikelihood EBIC \citep{Foygel2010, GaoXin2010CLBI}\textcolor{black}{, which is given below}
\begin{align}
EBIC_{\gamma} 
&= 
- 2l_c(\hat{\bm{\beta}_c}) + df\log{n}
+ 4 df \gamma \log{p_n},
\end{align}
where $df$ denotes the total number of nonzero off-diagonal entries and $\gamma \in [0,1]$ is a user-specified constant. \textcolor{black}{We perform a sensitivity analysis on the influenza vaccination dataset to investigate the impact of $\gamma$ on the estimation results. Please refer to Section S3 of the supplementary materials for details. For our simulation studies, we set $\gamma = 1$.}
We use different model evaluation metrics including specificity, sensitivity, and Matthews Correlation Coefficient (MCC), which are defined as 
\begin{align*}
	sensitivity = \frac{TP}{TP+FN},
 \qquad \qquad
	specificity = \frac{TN}{TN+FP} ,
 \\
	MCC = \frac{TP\times TN - FP \times FN}{\sqrt{\left(TP+FP\right) \left(TP+FN\right)\left(TN+FP\right)\left(TN+FN\right)}},
\end{align*}
where $TP$, $TF$, $FP$, and $FN$ represent the total number of true positives, true negatives, false positives, and false negatives respectively. 

\textcolor{black}{
The proposed method (cdexGGM) is compared with four other methods: graphical lasso \citep{GUOJIAN2011Jeom}, fused graphical lasso, group graphical lasso \citep{DanaherPatrick2014jglf}, and the Gaussian graphical regression model \citep{ZhangJingfei2022HGGR}. Graphical lasso (glasso) is employed to jointly estimate multiple graphical models for different categories and the tuning parameter $\lambda$ is selected by the Bayesian information criterion (BIC) \citep{GUOJIAN2011Jeom}.  
The fused graphical lasso (FGL) and the group graphical lasso (GGL) are similar approaches with different penalties, both utilized for modeling Gaussian graphical models across different categories. The GGL imposes a weaker similarity constraint (only encourages similar sparsity patterns) on the multiple precision matrices compared to the FGL, which not only emphasizes similar edge values across multiple categories but also enforces similar sparsity patterns. The tuning parameters $\lambda_1, \lambda_2$ are selected based on an approximation of the Akaike Information Criterion (AIC) \citep{DanaherPatrick2014jglf}. The Gaussian graphical regression model (RegGMM) employs a regression-based approach to estimate graphical models that vary with covariates. The tuning parameters $\lambda$ and $\lambda_g$ are jointly selected by $5$-fold cross-validation, as suggested in \citet{ZhangJingfei2022HGGR}. To ensure a fair comparison, the covariate is uniformly selected from 0 to 1 with 5 different values. 
}

\textcolor{black}{
Table~\ref{sim table 2} highlights the average of these metrics over 100 simulation repetitions. Standard errors for each metric are provided in parentheses. For $p_n=30$, our proposed method demonstrates superior performance in sensitivity and MCC, while maintaining comparable specificity to other approaches. Both the GGL and FGL demonstrate acceptable performance,
despite not being intended for modeling covariate-dependent network structure. The RegGMM assumes constant diagonal parameters across all levels of covariates and therefore does not perform well under the varying diagonal parameters setting. Our proposed method outperforms competing approaches across all measured criteria when $p_n = 50$. Other methods perform well in sensitivity, but at the expense of specificity, leading to reduced MCC scores.
}

\begin{table}
\caption{\label{sim table 2}Simulation results of our proposed method and other four competing methods.}
\centering 
\begin{threeparttable}
\begin{tabular}{>{\centering\arraybackslash}p{0.7cm}>
{\centering\arraybackslash}p{1.3cm}>{\centering\arraybackslash}p{1.5cm}>{\centering\arraybackslash}p{1.5cm}>
{\centering\arraybackslash}p{2.5cm}>
{\centering\arraybackslash}p{2.5cm}>{\centering\arraybackslash}p{2.5cm}} \toprule
$p_n$ & $\#$ of Para. & Method & Matrix &  
  Sensitivity & Specificity & MCC   \\
    \midrule
    \multirow{10}{*}{$30$} & 
    \multirow{10}{*}{$465$} 
    & 
    \multirow{2}{*}{cdexGGM}
    & 
    $\bm{Q}_0$  &  
    0.8747(0.0307) & 0.6576(0.0526) & 0.4401(0.0424) \\
    \addlinespace
    &  &
    & $\bm{Q}_1$ &
         0.9251(0.0232) & 0.6275(0.0573) & 0.4388(0.0450) \\
    \addlinespace
    & 
    & 
    \multirow{2}{*}{GGL}
    &
    $\bm{Q}_0$ &
    0.7904(0.0207) & 
    0.6183(0.0224) & 
    0.3361(0.0259)
    \\
    \addlinespace
    & &
    & $\bm{Q}_1$ &
    0.7876(0.0179) & 0.7099(0.0169) & 0.4036(0.0217) \\
    \addlinespace
    &  & 
    \multirow{2}{*}{FGL}
    &
    $\bm{Q}_0$ &
    0.7902(0.0208) &
    0.6147(0.0225) &
    0.3328(0.0259)
    \\
    \addlinespace
    &  & 
    & $\bm{Q}_1$ &
    0.7880(0.0178)   &
    0.7090(0.0168)   &
    0.4030(0.0221) \\
    \addlinespace
    &  &
    \multirow{2}{*}{glasso}
    &  
    $\bm{Q}_0$ &
    0.8003(0.0194)  &
    0.5848(0.0226)  &
    0.3160(0.0243)  \\
    \addlinespace
    &  & 
    &  $\bm{Q}_1$   &
    0.7926(0.0162)  &
    0.6918(0.0166)  &
    0.3903(0.0198)  \\
    \addlinespace
    &  & 
    \multirow{2}{*}{RegGMM}
    &  
    $\bm{Q}_0$ &
    0.6035(0.1476)  &
    0.6333(0.1083)  &
    0.2018(0.0574) \\
    \addlinespace
    &  & 
    &  $\bm{Q}_1$  &
    0.7135(0.1190)  &
    0.5850(0.1442)  &
    0.2458(0.0491)
    \\
    \midrule
    \multirow{10}{*}{$50$} & 
    \multirow{10}{*}{$1275$} 
    & 
    \multirow{2}{*}{cdexGGM}
    & 
    $\bm{Q}_0$  & 0.9091(0.0236)
    & 0.8202(0.0239) &
    0.3737(0.0251)
    \\
    \addlinespace
    & & 
    &  $\bm{Q}_1$  & 0.7838(0.0282)
    & 0.9011(0.0145) & 0.4152(0.0275)
     \\
    \addlinespace
    &  & 
    \multirow{2}{*}{GGL}
    & $\bm{Q}_0$  & 0.7568(0.0203)
    & 0.7603(0.0101)  & 0.2454(0.0130)
    \\
    \addlinespace
    &  & 
    & $\bm{Q}_1$  &  0.8098(0.0292)
    &  0.6551(0.0105)   & 
    0.1948(0.0131)
    \\
    \addlinespace
    &  & \multirow{2}{*}{FGL}
    &  $\bm{Q}_0$  & 0.7570(0.0203)
    &  0.7577(0.0103)    &
    0.2435(0.0130)
    \\
     \addlinespace
    & & 
    & $\bm{Q}_1$  & 0.8104(0.0285)
    &  0.6400(0.0101)  &
    0.1872(0.0127)
    \\
    \addlinespace
    &  & \multirow{2}{*}{glasso}
    &  $\bm{Q}_0$  
    & 0.7573(0.0203)
    & 0.7528(0.0108)
    & 0.2400(0.0129)
    \\
    \addlinespace
    & &
    & $\bm{Q}_1$  
    & 0.8113(0.0250)
    & 0.6249(0.0098)
    & 0.1801(0.0113)
    \\
    \addlinespace
    &  & \multirow{2}{*}{RegGMM}
    &  $\bm{Q}_0$  &  0.8025(0.0739)
    & 0.5431(0.0909)  &
    0.1470(0.0215)
    \\
     \addlinespace
    & & 
    & $\bm{Q}_1$  &  0.7904(0.0917)
    &  0.5039(0.1077)  &
    0.1211(0.0251)
    \\
    \bottomrule
\end{tabular} 
\begin{tablenotes} \item The standard errors for each evaluation metric are shown in parentheses. The $\#$ of true nonzero parameters of $\bm{Q}_0, \bm{Q}_1$ are 123 and 114 when $p_n = 30$, and 106 and 102 when $p_n=50$.
\end{tablenotes}
\end{threeparttable}
\end{table}

\subsection{High-Dimensional Covariate-dependent Gaussian Graphical Models with Multiple Covariates}
\label{sec:simulation study 3}

\textcolor{black}{\indent
In the third simulation study, we limit our comparison to our proposed method and RegGMM, as both are designed for modeling covariate-dependent network structures. We examine two covariates, generated uniformly from 0 to 1. We set $p_n = 20$ and $p_n = 30$, with the sample size $n=3000$. For the purpose of a fair comparison, we consider the diagonal entry of the precision matrix to be constant across all levels of covariates.  Refer to Section S4.1 of the supplementary materials for a detailed explanation of the simulation process for $\bm{Q}_0, \bm{P}_1, \bm{P}_2$.}

\textcolor{black}{As shown in Table~\ref{sim table 3}, both methods demonstrate strong performance in estimating the sparsity patterns of the slope matrices $\bm{P}_1, \bm{P}_2$. In terms of the baseline precision matrix $\bm{Q}_0$, our method 
strikes a balance between high sensitivity and relatively good specificity, while RegGMM offers slightly higher sensitivity but at the cost of specificity. Overall, the proposed method exhibits consistent and reliable performance across all settings and evaluation metrics.
}

\begin{sidewaystable}
\caption{\label{sim table 3}Simulation results of our proposed method and RegGMM on multiple covariates}
\centering
\begin{threeparttable}
\begin{tabular}{>{\centering\arraybackslash}p{1.0cm}>
{\centering\arraybackslash}p{1.5cm}>{\centering\arraybackslash}p{1.5cm}>{\centering\arraybackslash}p{1.5cm}>{\centering\arraybackslash}p{2.4cm}>{\centering\arraybackslash}p{2.5cm}>
{\centering\arraybackslash}p{2.5cm}>
{\centering\arraybackslash}p{2.5cm}} \toprule
    $p_n$ & Method & Matrix &  \# of Para. &
  \# of Nonzero Para. &
  Sensitivity & Specificity & MCC  \\
    \midrule
    \multirow{6}{*}{$20$} & \multirow{3}{*}{cdexGGM} &
    $\bm{Q}_0$  & 210 & 55 &  
    0.8409(0.0409) & 0.6870(0.0780) & 0.4215(0.0601) \\
    \addlinespace
    &  & $\bm{P}_1$ & 190 & 67 &
         0.7576(0.0328) & 0.8410(0.0945) & 0.5997(0.0986) \\
    \addlinespace
    &  & $\bm{P}_2$ & 190 & 70 &
         0.9097(0.0268) & 0.8351(0.0820) & 0.7286(0.0790) \\
    \addlinespace
    & \multirow{3}{*}{RegGMM}
    & $\bm{Q}_0$  &  210  & 55  &
    0.8866(0.0498) &  0.3015(0.0393)
    & 0.1652(0.0495)
\\
    \addlinespace
    &  & $\bm{P}_1$ & 190 &  67 &
    0.6760(0.0433)  & 0.8913(0.0634)
    &  0.5926(0.0776)
\\
    \addlinespace
    &  & $\bm{P}_2$ & 190  & 70  &
    0.7943(0.0332)  & 0.8521(0.0488)
    & 0.6427(0.0589)
\\
    \midrule
    \multirow{6}{*}{$30$} & 
    \multirow{3}{*}{cdexGGM} &
    $\bm{Q}_0$ & 465 & 116 &
    0.8877(0.0276) & 0.7432(0.0505) & 0.5202(0.0447) \\
    \addlinespace
    & & $\bm{P}_1$ &  435 & 161 &
         0.8035(0.0296) & 0.8954(0.0450) & 0.7040(0.0493) \\
     \addlinespace
    &  & $\bm{P}_2$ & 435 & 147 &
         0.8301(0.0281) & 0.8874(0.0527) & 0.7125(0.0598) \\
    \addlinespace
    & \multirow{3}{*}{RegGMM}
    & $\bm{Q}_0$  & 465  & 116  &
    0.8888(0.0328)  & 0.3221(0.0250)  &
    0.1873(0.0303)
\\
    \addlinespace
    &  & $\bm{P}_1$ & 435  & 161  &
    0.6350(0.0510) & 0.8837(0.0602)  &
    0.5481(0.0547)
\\
    \addlinespace
    &  & $\bm{P}_2$ & 435  & 147  &
    0.6972(0.0300) & 0.8345(0.0491) &
    0.5317(0.0547)
\\
    \midrule
\end{tabular}
\begin{tablenotes}
\item The sample size is $3000$, and the standard errors over 100 datasets are listed in parentheses. 
\end{tablenotes}
\end{threeparttable}
\end{sidewaystable}

\section{Real Data Analysis}
\label{Section 5}
\subsection{Influenza Vaccination Study}
\textcolor{black}{\indent
Nowadays, influenza continues to be a major concern for public health, including the most prevalent strains influenza A (H1N1), influenza B, and influenza A (H3N2). Insights obtained from animal models have revealed that host genetic factors can have a profound impact on both the immune responses and susceptibility to influenza infection \citep{TrammellRitaA2008Gsar,SrivastavaBarkha2009Hgbs}. The immune response to vaccination, much like in viral infection, is expected to exhibit variability that is influenced by genotype as well. Motivated by this, we apply our proposed model to a global gene expression data set (before and after trivalent influenza vaccination in humans) intending to characterize the dynamic regulatory gene networks and examine the \textcolor{black}{varying dependency structure at different time points.} This analysis may provide insights into pivotal genes that greatly contribute to the immune response to influenza vaccination. 
}

The gene expression data set is downloaded from the National Center for Biotechnology Information 
website ( https://www.ncbi.nlm.nih.gov/geo/query/acc.cgi?acc=GSE48024). The total sample size is 848, including 119 healthy adult male volunteers (aged 19 - 41 years) and 128  healthy adult female volunteers (aged 19 - 41 years). It records global transcript abundance in peripheral blood RNA specimens before (day 0) and at three time points (days 1, 3, and 14) after vaccination. Raw data is normalized by background adjustment, variance stabilization transformation \citep{LinSimonM.2008Mvtf} and robust spline normalization using the R package {\it lumi} \citep{DuPan2008lapf}. 

Similar to \citet{FrancoLuisM.2013Igao}, we opt to narrow down our focus to 68 genes that exhibit both a transcriptional response to the vaccine and some evidence of genetic regulation. In summation, the cleaned data set has $p_n=68$ genes, with the sample size $n=848$ and four different time points (day 0, days 1, days 3, days 14). We apply the penalized composite loglikelihood approach to analyze the data set. The estimated time-varied regulatory networks and sparsity pattern are given in Figure~\ref{Real data set 1 image 1} and Figure~\ref{Real data set 1 image 2}, respectively. 

As shown by our analysis, several gene interactions experience a rapid decrease after vaccination, including ABCA7 \& LRRC37A4, KIAA0391 \& RABEP1, C3AR1 \& OAS1 and DYNLT1 \& OAS1. Some of those even disappear at the end of the 14 days, like D4S234E \& OAS1, EMR3 \& OAS1 and GRINA \& RABEP1. In contrast, getting vaccinated substantially enhances the interactions DIP2A \& MGC57346 and OAS1 \& TAP2. It also produces the newly-formed edges DIP2A \& NT5DC3 and OAS1 \& TIMM10. These findings demonstrate some evidence that DIP2A and OAS1 are crucial to people's immune response. This aligns with the results reported in \citet{FrancoLuisM.2013Igao}, which recognizes the importance of DIP2A and OAS1 in the humoral immune response to influenza vaccination. In general, our research validates the pivotal role of a specific set of genes in the human immune response to vaccination. Further biological investigations are necessary to fully understand the dynamic relationships among these genes with respect to vaccination time and their biological significance. 

\begin{figure}
	\centering
\includegraphics[width=15cm, height=7.5cm]{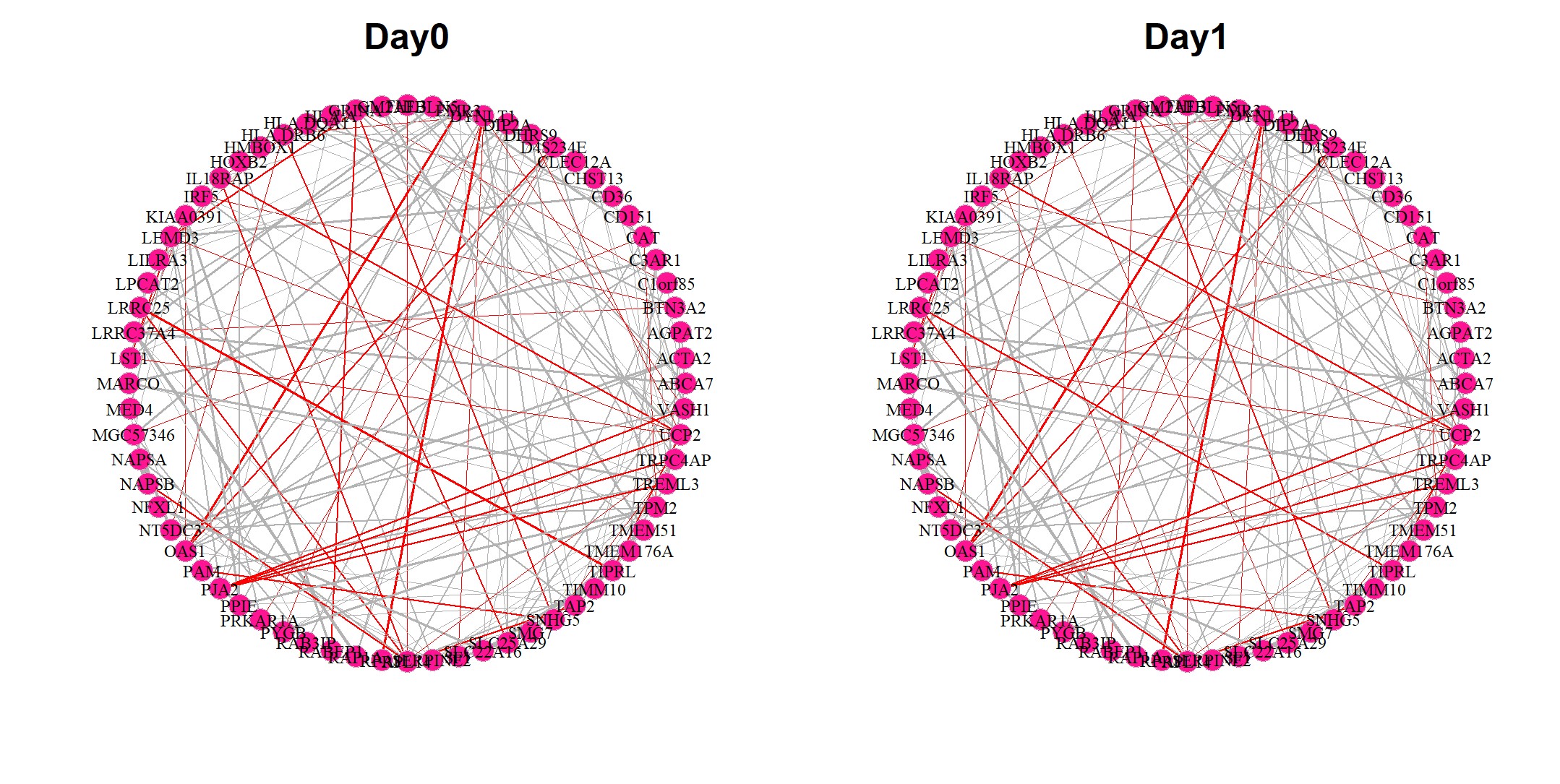}\\
\includegraphics[width=15cm, height=7.5cm]{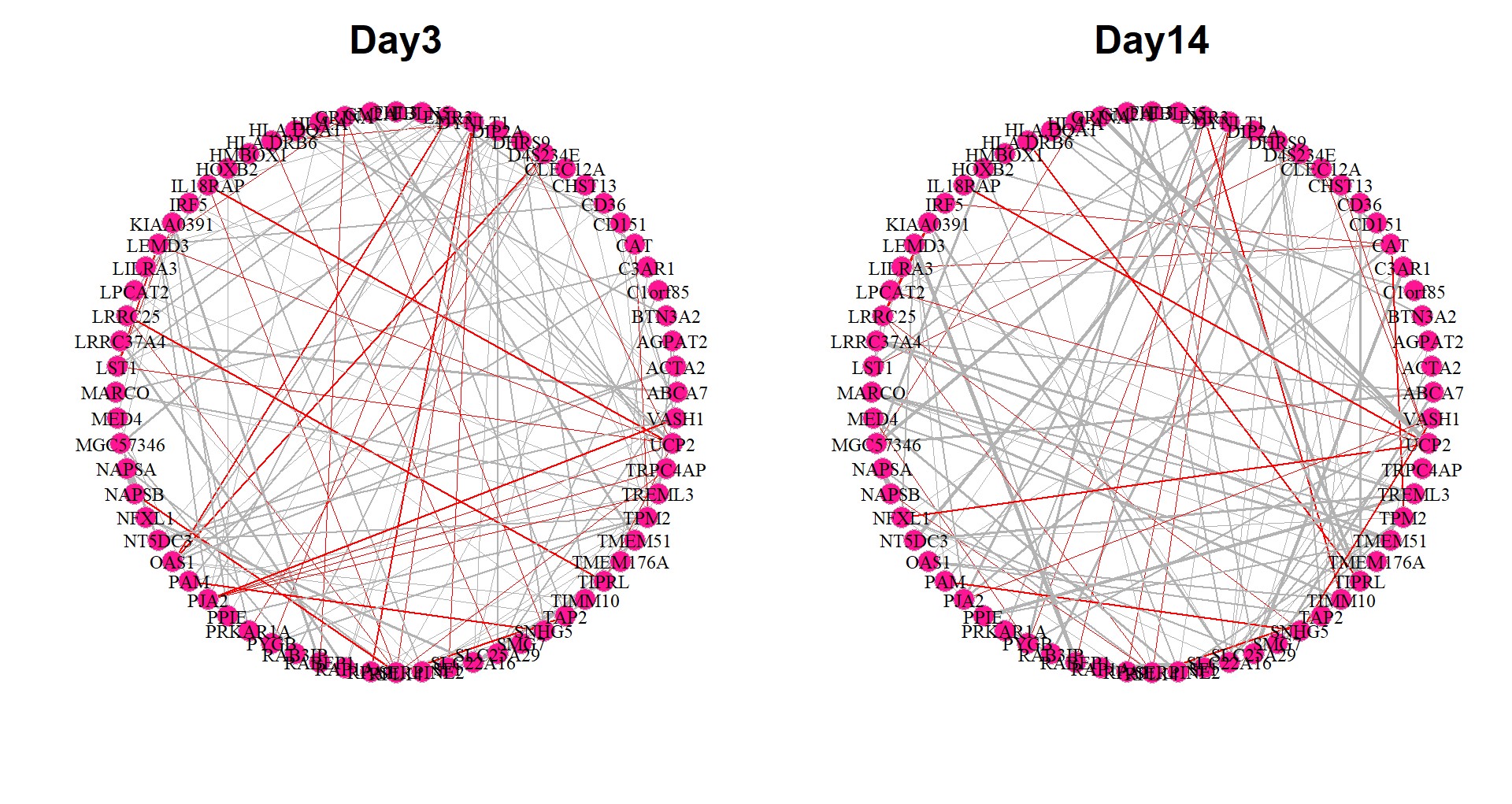}
	\caption{The estimated gene regulatory networks for the influenza vaccine data set. The time points "Day0", "Day1", "Day3" and "Day14" represent the day before getting vaccinated, the first day after vaccination, the third day after vaccination, and two weeks after vaccination. Edges in red represent negative partial correlations between genes, while edges in gray stand for positive partial correlations. The strength of interactions between two genes is displayed by the thickness of the edges.
 }
\label{Real data set 1 image 1}
\end{figure}

\begin{figure}
\begin{center}
\centerline{\includegraphics[width=15cm, height=7.5cm]{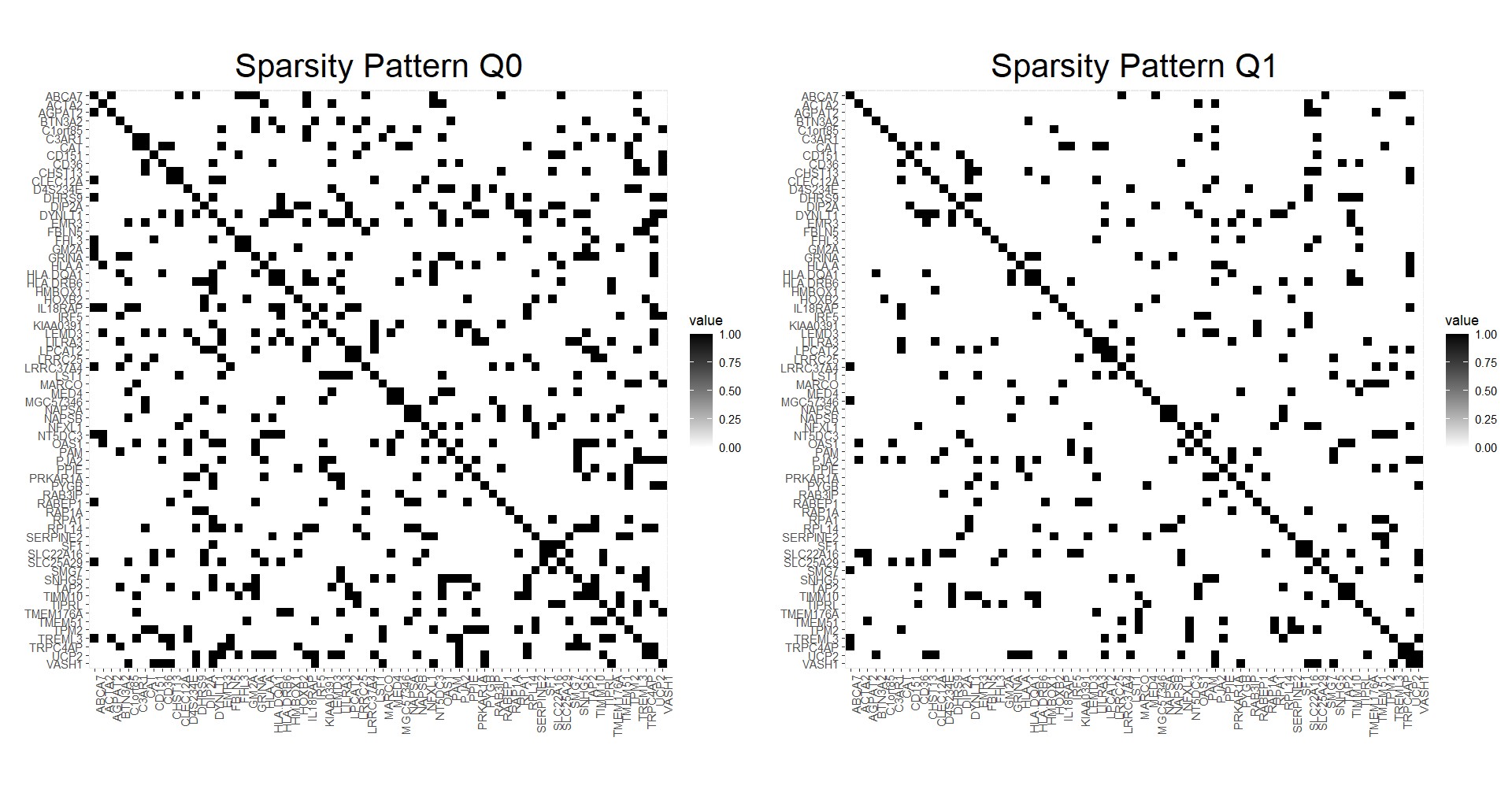}}
\end{center}
\caption{The estimated sparsity pattern for the influenza vaccine data set. Each image has 68 rows and 68 columns, representing the elements of $\bm{Q}_0, \bm{Q}_1$ respectively.  The corresponding cell is marked in black if the estimated entry is nonzero. Otherwise, it is white.}
\label{Real data set 1 image 2}
\end{figure}

\subsection{Mice Protein Expression} 
\label{Real Data Analysis 2}

\textcolor{black}{\indent Down syndrome (DS), also known as trisomy 21, is a genetic disorder caused by the presence of an extra copy of chromosome 21 pertaining to intellectual disability, influencing approximately one in a thousand newborns globally. It is generally considered that the excessive expression of genes encoded by the extra copy of chromosome gives rise to the disruption of regular pathways and typical reactions towards stimulation, and further leads to learning and memory deficiency. In the following analysis, we utilize the proposed covariate-dependent GGM method to analyze a mice protein expression data set to discover differences and similarities among protein networks influenced by three different covariates. The covariates include the treatment effect of injecting the drug memantine (m) or not (s), the factor of having the stimulation to learn (CS) or not (SC), and the factor of normal genotype (c) or abnormal trisomy (t). The resulting 8 classes of mice are grouped by the level of these three factors. Each class is examined and labeled by the associated learning outcome including normal learning, failed learning and rescued learning (refer to Figure~\ref{Figure: dataset 2 No1} for details). We show that our proposed model enables us to investigate the differences in the networks across different levels of three experimental factors. These findings shed light on how the genetic factor, medical treatment, and stimulation factor influence the interactions among the proteins, and how these interactions in turn affect the learning outcomes. 
}

This mice protein expression data set \citep{HigueraClara2015Sfmi} is downloaded from the UCI Machine Learning Repository (https://archive.ics.uci.edu/ml/datasets/Mice+Protein+Expr\\ession). It comprises the expression levels of 77 proteins that produce detectable signals in the nuclear fraction of the cortex. In total, there are 72 mice involved in the study, with 38 of them serving as control mice and 34 as trisomic mice with Down syndrome. In the experiment, 15 measurements of each protein are recorded per mouse. Accordingly, there are $570$ measurements for control mice and $510$ measurements for trisomic mice, with a total of 1080 measurements per protein. The mice are evenly distributed across eight classes, with each class comprising around seven to ten mice. Excluding those proteins with missing values and further conducting feature selection through a random forest algorithm, we finally narrow down our focus to the top 35 proteins with the highest feature importance scores. We proceed to center each observation for the subsequent analysis.

To investigate how covariates influence the underlying dependence structure, we employ the covariate-dependent GGMs through the proposed estimation approach. The following Figure~\ref{Figure Data set 2: Networks} illustrates the estimated protein network structures for the 35 selected proteins across eight classes. We further examine the estimated $[\theta_{ij}]_h$ in slope matrices $\bm{P}_1$, $\bm{P}_2,$ an $\bm{P}_3.$ We select the top-ranked $[\hat{\theta}_{ij}]_h$ in terms of magnitude and perform hypothesis testing $H_0: [\theta_{ij}]_h=0$ versus $H_1: [\theta_{ij}]_h\neq 0$, which is used to test whether or not
the corresponding covariate has a significant effect on the specific protein interactions. 
We construct Wald-type statistics and the standard errors are obtained through the bootstrap method. Table~\ref{Table 5} highlights those pairs of protein interactions and provides the associated p-values.   For example, we observe a positive effect of the trisomy genotype on the interactions Tau\_N \& P3525\_N, PKCA\_N \& pNUMB\_N and pRSK\_N \& RSK\_N, while the following protein pairs RAPTOR\_N \& pGSK3B\_N, PKCA\_N \& CAMKII\_N and pRSK\_N \& RSK\_N exhibit a negative response to the injection of memantine. It is interesting to note that hypothesis testing on the pair pRSK\_N \& RSK\_N is significant for both genotype and treatment. Furthermore, the trisomic geneotype is increasing this interaction, whereas the medical treatment is decreasing this interaction. This suggests that the treatment alleviates the harmful impact of the abnormal trisomy genotype on this specific pair of protein interaction. Additionally, we observe that both the memantine treatment and the learning stimulation assistance impose a positive impact on pJNK\_N \& pGSK3B\_N and meanwhile,  negatively influence pNUMB\_N \& pMTOR\_N. These results may offer valuable insights into the importance of those protein pairs in the mechanism of learning disabilities in Down syndrome.

As shown in Figure~\ref{Figure: dataset 2 No1}, different mice classes have different learning outcomes, including normal learning, rescued learning, and failed learning. We are interested in comparing dependence structures between two different classes with different learning outcomes. We highlight two sets of comparisons: 1)  the class "c-CS-m" demonstrating normal learning outcome versus the class "t-CS-m" demonstrating rescued learning outcome; 2) the class "c-CS-s" demonstrating normal learning outcome versus the class "t-CS-s" demonstrating failed learning outcome. We perform a two-sample test to compare the equality of partial correlations between any given protein pair across the two classes. The test statistic is constructed based on the composite likelihood estimates of the partial correlations and the standard errors are obtained through the bootstrap method. Both investigations reveal a notable enhancement in the interactions ITSN1\_N \& BRAF\_N, DYRK1A\_N \& ITSN1\_N, pGSK3B\_Tyr216\_N \& SHH\_N, pP70S6\_N \& pRSK\_N, S6\_N \& ADARB1\_N, AcetylH3K9\_N \& S6\_N and pCAMKII\_N \& ADARB1\_N in normal mice, compared to those experiencing either rescued or failed learning. The p-values of the two-sample tests are provided in Table~\ref{Table 6}. These findings provide evidence that the interactions between these protein pairs are essential to the learning outcome. Some of our results agree with existing findings in biological literature. For example, \citet{MalakootiNakisa2020TLIo}  
demonstrates that ITSN1\_N contributes to normal learning and memory in mice. \citet{AhmedMd.Mahiuddin2015Pdaw} reveals that overexpression of DYRK1A can lead to learning and memory deficits. Both proteins have significant results in our two-sample test. Additional investigations into the significant protein interactions identified in our analysis are necessary to elucidate the underlying biological mechanisms.

\begin{figure}
\begin{tikzpicture}[node distance=3.3cm]

\node (start) [startstop] {Mice};

\node (control) [control, below of = start, xshift = -3.5cm] {Control Mice (c)};

\node [label=right:{\textbf{\;\;\;\;\;\;\;\; \;\;\;\;\;\;\;Genotype}}](trisomic) [trisomic, below of = start, xshift = 3.5cm] {Trisomic Mice (t)};

\node (in1) [stimulated, below of=control, xshift = -1.7cm] {Yes: Context Shock (CS)};

\node (in2) [stimulated, below of=control, xshift = 1.7cm] {No: Shock Context (SC)};

\node (in3) [stimulated, below of=trisomic, xshift = -1.7cm] {Yes: Context Shock (CS)};

\node [label=right:{\textbf{\;\;\;\;Stimulated to learn}}](in4) [stimulated, below of=trisomic, xshift = 1.7cm] {No: Shock Context (SC)};

\node (tr1) [treatment, below of=in1, xshift=-0.9cm] {Yes};
\node (tr2) [treatment, below of=in1, xshift=0.9cm] {No};
\node (tr3) [treatment, below of=in2, xshift=-0.9cm] {Yes};
\node (tr4) [treatment, below of=in2, xshift=0.9cm] {No};

\node (tr5) [treatment, below of=in3, xshift=-0.9cm] {Yes};
\node (tr6) [treatment, below of=in3, xshift=0.9cm] {No};
\node (tr7) [treatment, below of=in4, xshift=-0.9cm] {Yes};
\node [label=right:{\textbf{\;Treatment}}] (tr8) [treatment, below of=in4, xshift=0.9cm] {No};

\node (class1) [class3, below of=tr1] {c-CS-m};
\node (class2) [class3, below of=tr2] {c-CS-s};
\node (class3) [class2, below of=tr3] {c-SC-m};
\node (class4) [class2, below of=tr4] {c-SC-s};

\node (class5) [class, below of=tr5] {t-CS-m};
\node (class6) [outcome, below of=tr6] {t-CS-s};
\node (class7) [class2, below of=tr7] {t-SC-m};
\node [label=right:{\textbf{  Classes}}] (class8) [class2, below of=tr8] {t-SC-s};

\node (outcome1) [class3, below of=class1, xshift = 0.7cm] {Normal};
\node (outcome2) [class2, below of=class3, xshift = 0.7cm] {No Learning};
\node (outcome3) [class, below of=class5] {Rescued};
\node (outcome4) [outcome, below of=class6] {Failed};
\node [label=right:{\textbf{\;\;\;\;\;Learning Outcome}}](outcome5) [class2, below of=class7, xshift = 0.7cm] {No Learning};

\draw [arrow] (start) -- (control);
\draw [arrow] (start) -- (trisomic);
\draw [arrow] (control) -- (in1);
\draw [arrow] (control) -- (in2);
\draw [arrow] (trisomic) -- (in3);
\draw [arrow] (trisomic) -- (in4);
\draw [arrow] (in1) -- (tr1);
\draw [arrow] (in1) -- (tr2);
\draw [arrow] (in2) -- (tr3);
\draw [arrow] (in2) -- (tr4);
\draw [arrow] (in3) -- (tr5);
\draw [arrow] (in3) -- (tr6);
\draw [arrow] (in4) -- (tr7);
\draw [arrow] (in4) -- (tr8);

\draw [arrow] (tr1) -- (class1);
\draw [arrow] (tr2) -- (class2);
\draw [arrow] (tr3) -- (class3);
\draw [arrow] (tr4) -- (class4);

\draw [arrow] (tr5) -- (class5);
\draw [arrow] (tr6) -- (class6);
\draw [arrow] (tr7) -- (class7);
\draw [arrow] (tr8) -- (class8);

\draw [arrow] (class1) -- (outcome1);
\draw [arrow] (class2) -- (outcome1);
\draw [arrow] (class3) -- (outcome2);
\draw [arrow] (class4) -- (outcome2);

\draw [arrow] (class5) -- (outcome3);
\draw [arrow] (class6) -- (outcome4);

\draw [arrow] (class7) -- (outcome5);
\draw [arrow] (class8) -- (outcome5);
\end{tikzpicture}
\caption{Eight Classes of Mice and the associated Learning Outcomes}
\label{Figure: dataset 2 No1}
\end{figure}
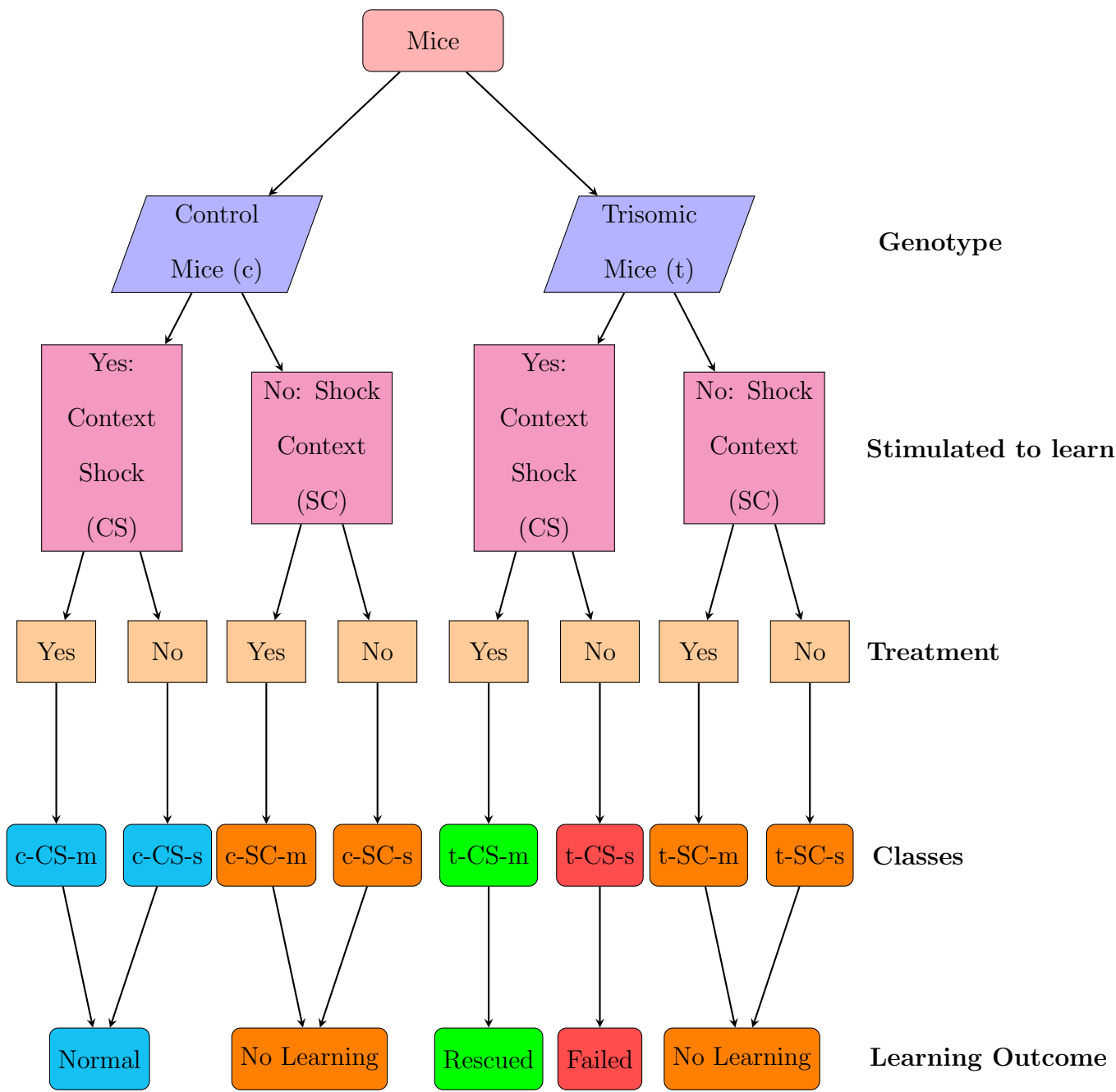

\begin{figure}
\begin{center}
\centerline{\includegraphics[width=17cm, height=8.5cm]{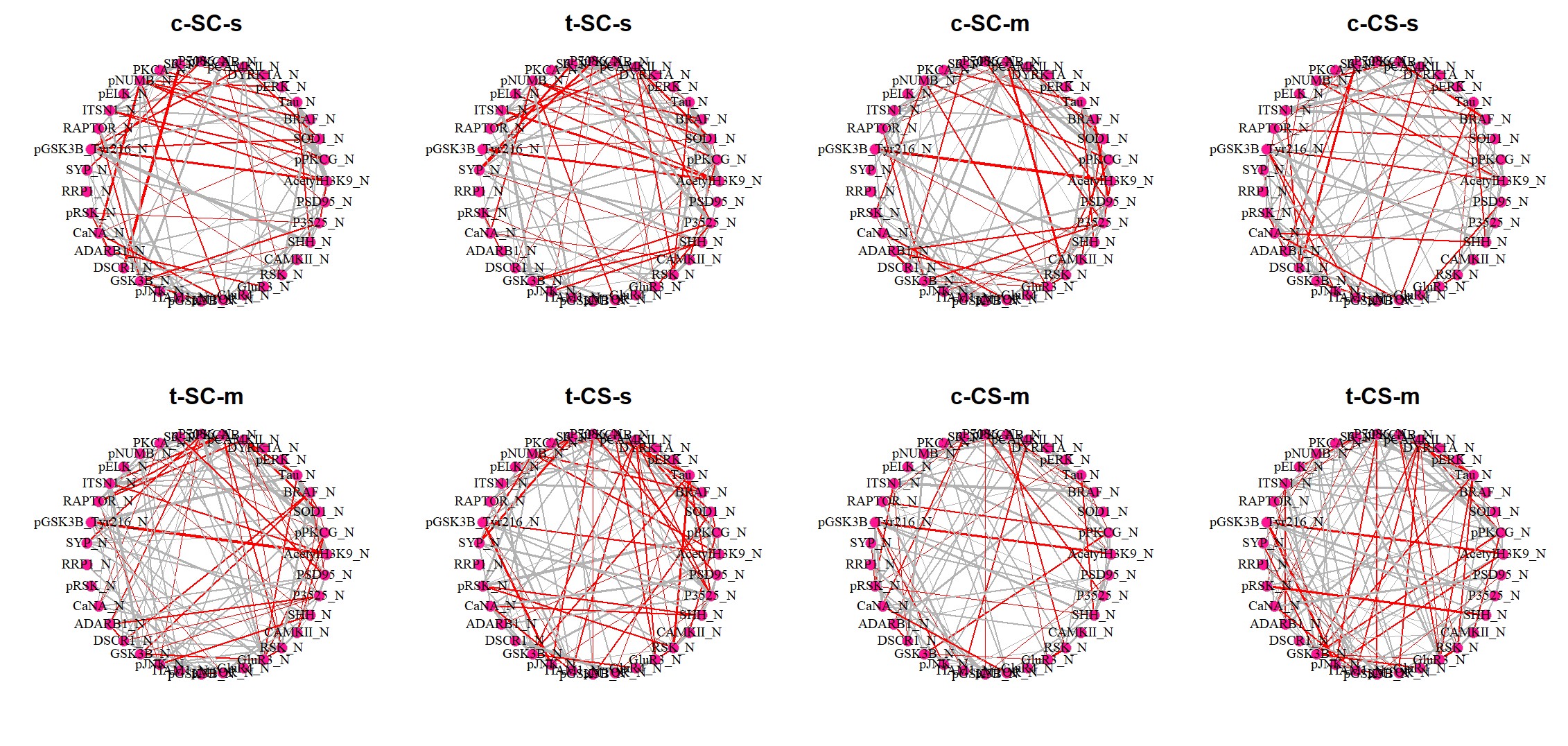}}
\end{center}
\caption{Estimated protein dependence structure: c-CS-m (control group, stimulated to learn, memantine); t-CS-m (trisomy group, stimulated to learn, memantine); c-CS-s (control group, stimulated to learn, saline); c-SC-m (control group, not stimulated to learn, memantine); t-CS-s (trisomy group, stimulated to learn, saline); t-SC-m (trisomy group, not stimulated to learn, memantine); c-SC-s (control group, not stimulated to learn, saline); t-SC-s (trisomy group, not stimulated to learn, saline). }
\label{Figure Data set 2: Networks}
\end{figure}

\begin{table}
\caption{\label{Table 5} Top-ranked gene pairs with significant covariate effects}
\begin{threeparttable}
\begin{tabular}{>{\centering\arraybackslash}p{1.0cm}>{\centering\arraybackslash}p{1.0cm}>{\centering\arraybackslash}p{5.5cm}>{\centering\arraybackslash}p{5.5cm}} \toprule
\multicolumn{1}{c}{Covariates} &
\multicolumn{1}{c}{Matrices} & 
\multicolumn{2}{c}{Interactions}
\\
\cline{3-4}
    & & Positive $[\hat{\theta}_{ij}]_h$ & Negative $[\hat{\theta}_{ij}]_h$ \\
    \midrule
    Genotype & 
    $\bm{P}_1$ & \thead{Tau\_N \& P3525\_N \,(9.86E-11) \\} 
    \thead{PKCA\_N \& pNUMB\_N \,(5.83E-3) \\}
    \thead{ \textcolor{black}{pRSK\_N \& RSK\_N \,(8.67E-4)} \\}
    \thead{RSK\_N \& P3525\_N \,(4.55E-4) \\}
    \thead{Tau\_N \& RSK\_N \,(1.60E-2) }
     & 
    \thead{pNUMB\_N \& RAPTOR\_N \,(5.19E-3)\\}
    \thead{ pJNK\_N \& P3525\_N \,(9.12E-3) \\}
    \thead{
    Tau\_N \& PKCA\_N \,(1.19E-5) \\}  \thead{Tau\_N \& SHH\_N \,(2.95E-5) \\}
    \thead{pNUMB\_N \& pGSK3B\_N \,(1.41E-2)}
    \\
    \midrule
    Treatment & $\bm{P}_2$
    &  
    \thead{pNUMB\_N \& RSK\_N \,(4.11E-4) \\}
    \thead{pNUMB\_N \& pRSK\_N \,(1.75E-7) \\}
    \thead{
    PKCA\_N \& P3525\_N \,(1.24E-2) \\}
    \thead{
    \textcolor{black}{
    pJNK\_N \& pGSK3B\_N \,(7.78E-2)} \\}
    \thead{
    pGSK3B\_N \& P3525\_N \,(8.47E-2) }
    & 
    \thead{
    RAPTOR\_N \& pGSK3B\_N \,(2.86E-2) \\
    }
    \thead{
    \textcolor{black}{pRSK\_N \& RSK\_N \,(1.54E-3)}
    }
    \thead{
    PKCA\_N \& CAMKII\_N \,(4.99E-4) \\
    }
    \thead{ \textcolor{black}{
    pNUMB\_N \& pMTOR\_N \,(3.59E-5)} \\
    }
    \thead{
    Tau\_N \& RSK\_N \,(3.08E-2) }
    \\
    \midrule
    Behaviour & $\bm{P}_3$
    &
    \thead{ \textcolor{black}{
    pJNK\_N \& pGSK3B\_N \,(2.75E-2)} \\}
    \thead{GluR3\_N \& RSK\_N \,(3.50E-2) \\
    }
    \thead{
    pMTOR\_N \& P3525\_N \,(8.48E-9) \\
    }
    \thead{pNUMB\_N \& pGSK3B\_N \,(7.12E-2) \\}
    \thead{
    CAMKII\_N \& SHH\_N \,(4.37E-4)
    }
    &
    \thead{
    PKCA\_N \& pGSK3B\_N \,(1.27E-2) \\}
    \thead{
    pJNK\_N \& SHH\_N \,(2.74E-7) \\}
    \thead{
    PKCA\_N \& P3525\_N \,(1.68E-4) \\}
    \thead{
    RAPTOR\_N \& RSK\_N \,(8.51E-2) \\}
    \thead{
    \textcolor{black}{
    pNUMB\_N \& pMTOR\_N \,(6.88E-5)}
    } \\
\bottomrule
\end{tabular}
\begin{tablenotes}
\item Note: The associated p-values are provided in parentheses. 
\end{tablenotes}
\end{threeparttable}
\end{table}

\renewcommand{\theadalign}{bc}
\begin{table}
\caption{\label{Table 6} Test of equality of partial correlations across two classes with different learning outcomes}
\begin{threeparttable}
\begin{tabular}{>{\centering\arraybackslash}p{3.5cm}>{\centering\arraybackslash}p{2.7cm}>{\centering\arraybackslash}p{2.7cm}>{\centering\arraybackslash}p{2.5cm}>{\centering\arraybackslash}p{2.5cm}} \toprule
\multicolumn{1}{c|}{Edges} & 
\multicolumn{4}{c}{Comparison between two classes}
\\
\cline{2-5}
    \multicolumn{1}{c|}{} & \thead{Normal learning \\ c-CS-m} & \thead{Rescued learning \\ t-CS-m} & \multicolumn{1}{|c}{\thead{Normal learning \\ c-CS-s}}
     & \thead{Failed learning \\ t-CS-s} \\
    \midrule
    BRAF\_N \& ITSN1\_N & \multicolumn{2}{c}{1.75E-11}
     &   \multicolumn{2}{c}{1.35E{-8}}
    \\ 
    \midrule
    DYRK1A\_N \& ITSN1\_N & 
    \multicolumn{2}{c}{5.85E{-5}} & 
    \multicolumn{2}{c}{2.86E{-4}}
    \\
    \midrule
    \thead{pGSK3B\_Tyr216\_N \& \\ SHH\_N}
    & \multicolumn{2}{c}{$<$2.2E-16} & 
    \multicolumn{2}{c}{$<$2.2E-16} 
    \\
    \midrule
    pP70S6\_N \& pRSK\_N &
    \multicolumn{2}{c}{1.60E{-8}} & \multicolumn{2}{c}{3.28E{-4}}
    \\ 
    \midrule
    S6\_N \& ADARB1\_N & 
    \multicolumn{2}{c}{$<$2.2E-16} & \multicolumn{2}{c}{$<$2.2E-16}
    \\
    \midrule
    AcetylH3K9\_N \& S6\_N &
    \multicolumn{2}{c}{$<$2.2E-16} & \multicolumn{2}{c}{$<$2.2E-16}
    \\
    \midrule
    pCAMKII\_N \& ADARB1\_N & \multicolumn{2}{c}{$<$2.2E-16} & \multicolumn{2}{c}{$<$2.2E-16}
    \\
\bottomrule
\end{tabular}
\begin{tablenotes}
\item Note: The table provides the p-values associated with the two-sample tests to compare the equality of partial correlations between protein pairs across two different classes. 
\end{tablenotes}
\end{threeparttable}
\end{table}

\section{\textcolor{black}{Some Remarks on the Covariates}}

\textcolor{black}{ \indent
The main reason for assuming bounded covariates in our model is to ensure the positive definiteness of the precision matrix, given any possible values of covariates. Specifically, the precision matrix $\bm \Sigma_{m}^{-1}(\bm{x_m})$ is assumed to be a linear combination of $\bm{Q}_0, \dots, \bm{Q}_H$ with coefficients $1 - \sum_{h=1}^H \frac{x_m^{(h)}}{H}, x_m^{(1)}, \cdots, x_m^{(H)}$, as shown in \eqref{eq:newlabel2}.
Assuming that $\bm{Q}_0, \cdots, \bm{Q}_H$ are all positive definite matrices, the corresponding precision matrix is ensured to be positive definite for any covariates in the range $[0, 1]$. When the covariates are bounded, one can apply Min-Max scaling to transform the covariates into the range of $[0,1].$ }

\textcolor{black}{To make Min-Max scaling less sensitive to outliers, hypothetical minimum and maximum values can be used instead of the true actual extremes of a given dataset. This results in a more robust transformation while maintaining values within the range $[0, 1]$. In particular, the hypothetical minimum and maximum values can be chosen sufficiently large, ensuring that the transformation remains stable despite adding or removing a data point, thereby preserving the consistency of model estimation and interpretation. In asymptotic theory, all theoretical results remain valid regardless of the type of transformation applied to the covariates, as long as the covariates are transformed to $[0,1]$.}

\textcolor{black}{
However, when covariates are unbounded, non-linear transformation techniques such as the quantile transformation can be applied. This strategy starts by determining the CDF of the original feature and then maps the original values to a uniform distribution. 
Nevertheless, this transformation reshapes the underlying distribution of the covariates and may be more suitable for a nonlinear relationship between the pre-transformed covariates and the underlying graph structure.
}

\section{Discussion} \label{sec:Conclusion}
\textcolor{black}{\indent
In this paper,} we present a covariate-dependent Gaussian graphical model (cdexGGM) for modeling \textcolor{black}{dynamic network structures that vary with covariates via a novel parameterization,} ensuring the positive definiteness of precision matrices at different values of covariates. We propose to jointly estimate dynamic edge and vertex parameters through a likelihood framework. For networks with small dimensions, we conduct the maximum likelihood estimation and establish the asymptotic distribution of the estimator. This enables us to construct various statistical inference procedures on the underlying dynamic network. In the context of large networks, we employ the composite likelihood method with an $\ell_1$ penalty to estimate sparse network structures. We provide the estimation error bound in $\ell_2$ norm and establish the model selection consistency of the proposed method. The proposed algorithm combines the coordinate descent algorithm and Broyden's method to update the dynamic edge and vertex parameters respectively. The algorithm performs well in high-dimensional settings and exhibits satisfactory performance. As shown by the real data analysis, the proposed cdexGMM can be widely applied to model emerging dynamic biological networks in real-world applications.

\section*{Supplementary Materials}

\begin{description}
\item[Supplementary Materials cdexGMM.pdf:] Additional information related to this article, including some notations, a special case of the proposed method, simulation studies, sensitivity analysis, and technical proofs. (Supplementary Materials cdexGMM, .pdf file)
\item[cdexGMM Folder:] A folder including the R scripts for simulation studies, real data analysis, and a real dataset. (cdexGMM, folder)
\item[Read me.txt in cdexGMM Folder:]
A detailed description of the files in the cdexGMM Folder. (Read me, .txt file)
\end{description}

\section*{Funding}
The Natural Sciences and Engineering Research Council of Canada supported this work.

\section*{Disclosure Statement}
The authors report there are no competing interests to declare.

\bibliographystyle{chicago}
\bibliography{biblio}

\clearpage

\begin{titlepage}
    \centering
    {\LARGE \bf Supplementary Materials for ``High-Dimensional Covariate-Dependent Gaussian Graphical Models" \par}
    \vspace{1cm}
    {\large JIACHENG WANG \\
    Department of Mathematics and Statistics, York University, \\
    4700 Keele Street, Toronto M3J 1P3, Canada. \\
    and \\
    XIN GAO$^{\ast}$ \\
    Department of Mathematics and Statistics, York University, \\
    4700 Keele Street, Toronto M3J 1P3, Canada. \\
    Email: xingao@yorku.ca
    \par}
    \vspace{1cm}
    {\large \today \par}
\end{titlepage}

\renewcommand{\thesection}{S\arabic{section}}
\renewcommand{\thesubsection}{S\arabic{section}.\arabic{subsection}}


\setcounter{section}{0}

\section{Notations Used in Section 3.2}
\label{Notations}
In \textbf{Section 3.2}, the notations $\mathcal{I}_{1.1}, \mathcal{I}_{1.2}, \cdots, \mathcal{I}_{2.3}$ are defined as
\begin{align*}
\mathcal{I}_{1.1} &= 
\frac{1}{n}
\sum_{m=1}^{n}
\Big[ 
\alpha_{aa} + \sum_{h=1}^{H}x_m^{(h)} 
[\theta_{aa}]_h 
\Big]^{-1}
\Big(
\sum_{i=1, i \neq a}^{p_n}
\sum_{h=1}^{H}
x_m^{(h)}[\theta_{ai}]_h
y_{mi}
\Big)
y_{mb}
\\
& +
\frac{1}{n}
\sum_{m=1}^{n}
\Big[ 
\alpha_{bb} + \sum_{h=1}^{H}x_m^{(h)} 
[\theta_{bb}]_h 
\Big]^{-1}
\Big(
\sum_{i=1, i \neq b}^{p_n}
\sum_{h=1}^{H}
x_m^{(h)}[\theta_{bi}]_h
y_{mi}
\Big)
y_{ma},
\\
\mathcal{I}_{1.2} &=
\frac{1}{n}
\sum_{m=1}^{n}
\Big[ 
\alpha_{aa} + \sum_{h=1}^{H}x_m^{(h)} 
[\theta_{aa}]_h 
\Big]^{-1}
\Big(
\sum_{
\substack{i=1, i\neq a, \\ i \neq b}}^{p_n}
\alpha_{ai}
y_{mi}
\Big)
y_{mb} 
\\
& +
\frac{1}{n}
\sum_{m=1}^{n}
\Big[ 
\alpha_{bb} + \sum_{h=1}^{H}x_m^{(h)} 
[\theta_{bb}]_h 
\Big]^{-1}
\Big(
\sum_{
\substack{i=1, i\neq b, \\ i \neq a}}^{p_n}
\alpha_{bi}
y_{mi}
\Big)
y_{ma},
\\
\mathcal{I}_{1.3} &=
\frac{1}{n}
\sum_{m=1}^{n}
\Big[ 
\alpha_{aa} + \sum_{h=1}^{H}x_m^{(h)} 
[\theta_{aa}]_h 
\Big]^{-1} y_{mb}^2
+ 
\frac{1}{n}
\sum_{m=1}^{n}
\Big[ 
\alpha_{bb} + \sum_{h=1}^{H}x_m^{(h)} 
[\theta_{bb}]_h 
\Big]^{-1} y_{ma}^2,
\\
\mathcal{I}_{2.1} &=
\frac{1}{n}
\sum_{m=1}^{n}
\Big[ 
\alpha_{aa} + \sum_{h=1}^{H}x_m^{(h)} 
[\theta_{aa}]_h 
\Big]^{-1}
\Big(
\sum_{
\substack{i=1, i\neq a}}^{p_n}
\alpha_{ai}
y_{mi}
\Big)
x_m^{(h_1)}y_{mb}
\\
& +
\frac{1}{n}
\sum_{m=1}^{n}
\Big[ 
\alpha_{bb} + \sum_{h=1}^{H}x_m^{(h)} 
[\theta_{bb}]_h 
\Big]^{-1}
\Big(
\sum_{
\substack{i=1, i\neq b}}^{p_n}
\alpha_{bi}
y_{mi}
\Big)
x_m^{(h_1)}y_{ma},
\\
\mathcal{I}_{2.2} &=
\frac{1}{n}
\sum_{m=1}^{n}
\Big[ 
\alpha_{aa} + \sum_{h=1}^{H}x_m^{(h)} 
[\theta_{aa}]_h 
\Big]^{-1}
\Big(
\sum_{
\substack{i=1, \\ i\neq a,}}^{p_n}
\sum_{\substack{h=1, \\
(i, h) \neq (b, h_1)}}^{H}
x_m^{(h)}
[\theta_{ai}]_{h}
y_{mi}
\Big)
x_m^{(h_1)}y_{mb}
\\
& +
\frac{1}{n}
\sum_{m=1}^{n}
\Big[ 
\alpha_{bb} + \sum_{h=1}^{H}x_m^{(h)} 
[\theta_{bb}]_h 
\Big]^{-1}
\Big(
\sum_{
\substack{i=1, \\ i\neq b,}}^{p_n}
\sum_{\substack{h=1, \\
(i, h) \neq (a, h_1)}}^{H}
x_m^{(h)}
[\theta_{bi}]_{h}
y_{mi}
\Big)
x_m^{(h_1)}y_{ma},
\\
\mathcal{I}_{2.3} &=
\frac{1}{n}
\sum_{m=1}^{n}
\Big[ 
\alpha_{aa} + \sum_{h=1}^{H}x_m^{(h)} 
[\theta_{aa}]_h 
\Big]^{-1} [x_m^{(h_1)}]^2y_{mb}^2
+
\frac{1}{n}
\sum_{m=1}^{n}
\Big[ 
\alpha_{bb} + \sum_{h=1}^{H}x_m^{(h)} 
[\theta_{bb}]_h 
\Big]^{-1} [x_m^{(h_1)}]^2y_{ma}^2.
\end{align*}

\section{Special Case of Updating Equations}
\label{Supplementary Section 1}

Following \textbf{Section 3.2}, consider a special case where the conditional variances remain constant across all covariates. For $m=1,\cdots,n$ and $ j=1,\cdots, p_n,$ the joint negative composite log-likelihood function is simplified as
\begin{align*}
-l_c(\bm{\beta}) & = 
-\sum_{j=1}^{p_n}
\sum_{m=1}^{n}
\log{\alpha_{jj}}
+
\sum_{j=1}^{p_n}
\sum_{m=1}^{p_n}
\alpha_{jj}
\bigg\{
y_{mj} + 
\alpha_{jj}^{-1}
\sum_{i=1, i\neq j}^{p_n}
\Big[
\alpha_{ji}
+ \sum_{h=1}^{H}
x_m^{(h)}[\theta_{ji}]_{h}
\Big]
y_{mi}
\bigg\}^{2}.
\end{align*}
Let $\mathcal{E} = \{ (i,j): \; i < j; \; i,j = 1,\cdots, p_n\}$ denote the ordered off-diagonal index set. For $\forall\; s,w \in \mathcal{E}$ and $\forall \; j=1,\cdots, p_n, h=1,\cdots,H$, the updating equations for the diagonal parameters can be obtained by taking the first derivatives of $Q(\bm{\beta})$, as presented below
\small
\begin{align*}
[\hat{\theta}_{s}]_h &= 
\frac{S
\bigg(
-tr\Big(\bm{T}^{s}\bm{C}_{\theta}^{(h)} \Big)
+
tr\Big(\bm{T}^{s}\big(\bm{T}^{s^c} \odot \big[\bm{B}_{\theta}^{(h)}\big]^T\big)\bm{D}_{\theta}^{(h)}\Big)
+
tr\Big(\bm{T}^{s}\big(\bm{T}^{s^1}
\odot
\bm{B}_{\alpha}^T\big)\bm{C}_{\theta}^{(h)}\Big)
+ 
\Delta,
\lambda
\bigg)}
{tr\Big(\bm{T}^{s}\bm{\Sigma}  \bm{T}^{s} \bm{D}_{\theta}^{(h)}\Big)},
 \\
    \hat{\alpha}_{w} & = 
\frac{S
\bigg(
-tr\Big(\bm{T}^{w}\bm{C}_{\alpha}\Big) 
+ 
tr\Big(\bm{T}^{w}\big(\bm{T}^{w^c} \odot  \bm{B}_{\alpha}^T\big)\bm{C}_{\alpha}\Big)
+
\sum_{h=1}^{H}
tr\Big(\bm{T}^{w}\big(\bm{T}^{w^1}\odot \big[\bm{B}_{\theta}^{(h)}\big]^T\big)\bm{C}_{\theta}^{(h)}\Big)
, 
\lambda
\bigg)}
{tr\Big(\bm{T}^{w}\bm{\Sigma} \bm{T}^{w} \bm{C}_{\alpha}\Big)} ,
\\
    \hat{\alpha}_{jj} &=  
\frac{2\sum_{m=1}^{n}
\Big(
\sum_{i=1,i\neq j}^{p_n}
\left[
\alpha_{ji} +
\sum_{h=1}^{H}
x_m^{(h)}[\theta_{ji}]_{h}
\right]y_{mi}
\Big)^2}
{
-n+ \sqrt{
n^2+4
\Big(
\sum_{m=1}^{n}y_{mj}^{2}
\Big)
\bigg\{
\sum_{m=1}^{n}
\Big(
\sum_{i=1,i\neq j}^{p_n}
\left[
\alpha_{ji} +
\sum_{h=1}^{H} 
x_m^{(h)} [\theta_{ji}]_{h}
\right] y_{mi}
\Big)^2
\bigg\}}},
\end{align*}
\normalsize 
where $s^c$ and $w^c$ denote the complement of $s$ and $w$ respectively, and 
\begin{align*}
& \bm{C}_{\theta}^{(h)}=
\frac{1}{n}
\Big(\bm{X}^{(h)}\odot \bm{Y}\Big)^T \bm{Y} ,
&& 
\bm{D}_{\theta}^{(h)}=
\frac{1}{n}\Big(\bm{X}^{(h)} \odot \bm{Y}\Big)^T \Big(\bm{X}^{(h)}\odot \bm{Y}\Big) ,
\\
&	\bm{G}_{\theta}^{(h,l)} = 
\frac{1}{n}
\Big(\bm{X}^{(l)} \odot \bm{Y}\Big)^T
\Big(\bm{X}^{(h)}\odot \bm{Y}\Big) ,
 && 	
\bm{C}_{\alpha}=\frac{1}{n}\bm{Y}^T\bm{Y} ,
 \\
& \Delta = 
 \sum_{l=1, l\neq h}^{H}
tr \Big(\bm{T}^{s}\big(\bm{T}^{s^1}
\odot
\big[\bm{B}_{\theta}^{(l)}\big]^T\big)\bm{G}_{\theta}^{(h,l)}\Big).
\end{align*}
The $n \times p_n$ matrix $\bm{X}^{(h)}$ is defined as 
\begin{align}
\bm{X}^{(h)} = 
\begin{bmatrix}
    x_{1}^{(h)}       & x_{1}^{(h)} & \dots & x_{1}^{(h)} 
    \\
    x_{2}^{(h)}       & x_{2}^{(h)} & \dots & x_{2}^{(h)} \\
    \vdots & \vdots & \ddots & \vdots \\
    x_{n}^{(h)}       & x_{n}^{(h)} & \cdots & x_{n}^{(h)}
\end{bmatrix}_{n \times p_n} .
\end{align}
Let $\bm{\Sigma}$  denote a $p_n\times p_n$ diagonal matrix $ \text{diag}(\alpha_{11}^{-1},...,\alpha_{p_np_n}^{-1})$, and let $\odot$ denote the component-wise product. Each entry of $\bm{B}_{\theta}^{(h)}$ and $\bm{B}_{\alpha}$ takes the following form
\begin{align}       
\big(\bm{B}_{\theta}^{(h)}\big)_{ij}& = 
\left\{                
\begin{array}{ll}   
-[\theta_{ij}]_h\alpha_{jj}^{-1} \quad & {i \neq j;\quad i,j = 1,...,p_n},
\\
0 & {i=j,}
\end{array}\right. 
\\
\big(\bm{B}_{\alpha}\big)_{ij}& = \left\{
\begin{array}{ll}   
-
\alpha_{ij}\alpha_{jj}^{-1} \quad & {i \neq j;\quad i,j = 1,...,p_n,}\\
    	0 & {i=j,}
    \end{array}\right.
\end{align}
Let $\bm{T}^{s}$ and $\bm{T}^{w}$ denote the edge adjacency matrix, e.g. $(\bm{T}^{s})_{ij}=1$ if $(i,j) = s$ or $(j,i) = s$, and $(\bm{T}^{s})_{ij}=0$ otherwise. Let  $\bm{T}^{s^1}$ be defined as: if $(i,j) = s$, all of the elements in the $ith$ and $jth$ row of the matrix $\bm{T}^{s^1}$ are equal to 1 except the diagonal entries, and all other entries are equal to zero. The matrix $\bm{T}^{w^1}$ is defined similarly.

\section{Sensitivity Analysis with Real Dataset}
We perform a sensitivity analysis on the influenza vaccination dataset to investigate the impact of $\gamma$ on the estimation results. Specifically, we conduct the analysis across 5 levels of $\gamma$, namely $\gamma = 0, 0.25, 0.50, 0.75, 1$, with the tuning parameter $\lambda$ determined based on minimum EBIC \citep{ChenJiahua2008EBic, Foygel2010, GaoXin2010CLBI}. 
Table~\ref{Supplementary: Table 1} summarizes the number of estimated nonzero parameters under each scenario and Figure~\ref{Supplementary: Figure 1} provides the sparsity structures across 4 levels of $\gamma$. Note that the estimation results are identical for $\gamma = 0$ and $\gamma = 0.25$ (e.g. the optimal $\lambda$ selected by EBIC is the same for both cases). Therefore, we just present the sparsity structure for $\gamma = 0.25$, omitting $\gamma = 0$ for simplicity. Table~\ref{Supplementary: Table 1} and Figure~\ref{Supplementary: Figure 1} indicate that the number of estimated nonzero parameters declines, leading to greater sparsity in the graph structure as $\gamma$ increases from 0 to 1. This result reflects the expected property of EBIC as reported
in \citet{ChenJiahua2008EBic, GaoXin2010CLBI}. The following \eqref{Supplementary: formula 1} gives the EBIC for Gaussian graphical models,
\begin{align} \label{Supplementary: formula 1}
EBIC_{\gamma} 
&= 
- 2l_c(\hat{\bm{\beta}_c}) + df\log{n}
+ 4 df \gamma \log{p_n}.
\end{align}
In this formulation, the first term corresponds to minus twice the composite log-likelihood, capturing the goodness-of-fit for the current model. The second component penalizes model complexity, while the third term promotes sparsity.  More specifically, the parameter $\gamma \in [0, 1]$ determines the emphasis on sparsity, with higher values leading to smaller, simpler models. Previous studies \citep{ChenJiahua2008EBic, ChenJiahua2012EBFS} have demonstrated that the EBIC sacrifices a small degree of sensitivity but effectively manages the false positive instances. This feature makes it especially valuable for variable selection tasks involving moderate sample sizes with a high-dimensional covariate space, as often encountered in genomic studies. The consistency of the composite likelihood-based EBIC in selecting the true underlying model has been established by \citet{GaoXin2010CLBI}. In general, as $\gamma$ becomes larger, some moderate and weak interactions are filtered out, whereas strong interactions between genes can still be captured.

\begin{table}
\caption{Sensitivity analysis regarding $\gamma$ applied to the influenza dataset}  
\begin{tabular}{>{\centering\arraybackslash}p{2.3cm}>{\centering\arraybackslash}p{2.3cm}>{\centering\arraybackslash}p{2.3cm}>{\centering\arraybackslash}p{2.3cm}>
{\centering\arraybackslash}p{2.3cm}>
{\centering\arraybackslash}p{2.3cm}
}
\toprule
Matrix & $\gamma = 0$ 
&  $\gamma = 0.25$
&  $\gamma = 0.50$
&  $\gamma = 0.75$
&  $\gamma = 1$\\
\midrule
$Q_0$  & 604 & 604  & 500  & 380 & 306  \\
\addlinespace
$Q_1$  & 404 & 404  & 312  & 239 & 178\\
\bottomrule
\end{tabular}
\label{Supplementary: Table 1}
\begin{tablenotes}
\item This table shows the number of nonzero estimated parameters across different levels of $\gamma$ in EBIC.  
\end{tablenotes}
\end{table}

\begin{figure}
\begin{center}
\centerline{\includegraphics[width=15cm, height=7.5cm]{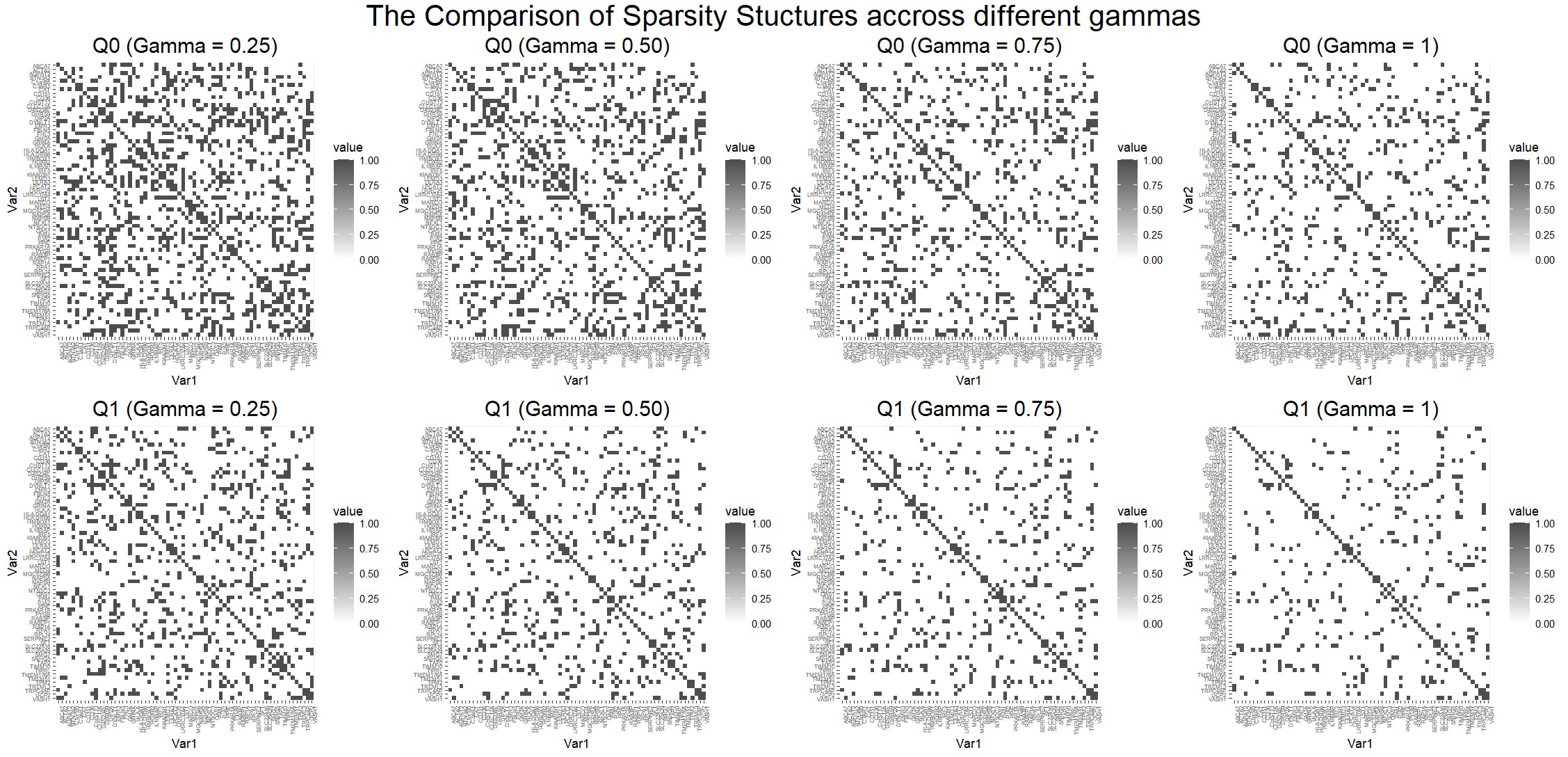}}
\end{center}
\caption{The comparison of estimated sparsity patterns across 4 levels of $\gamma$. Each subplot has 68 rows and 68 columns, representing the estimated elements of $\bm{Q}_0, \bm{Q}_1$ respectively.  The corresponding cell is marked in black if the estimated entry is nonzero. Otherwise, it is white.}
\label{Supplementary: Figure 1}
\end{figure}

\section{Simulation}

\subsection{Simulation Procedure for Matrices Discussed in Section 4.3}

The simulation procedure for $\bm{Q}_0, \bm{P}_1, \bm{P}_2$ referenced in Section 4.3 of our manuscript is provided below.  
\begin{enumerate}
  \item Generate $\bm Q_0, \bm Q_1, \bm Q_2$ as random sparse positive definite matrices as outlined in the previous simulation. Let 
  $(\bm Q_0)_{ij} = \alpha_{ij},$ 
  $(\bm Q_1)_{ij} = \eta_{ij},$ 
  and $(\bm Q_2)_{ij} = \pi_{ij} $ represent the $(i,j)th$ element of $\bm Q_0,$ $\bm Q_1,$ and $\bm Q_2$ respectively.
  \item Define 
  $\bm N_1^{\ast} = 
  \Lambda_1 \bm Q_1 \Lambda_1
  $,  
  $ \bm N_1^{\ast\ast} = 
  \Lambda_0 \bm N_1^{\ast} 
  \Lambda_0
  $,
  and 
  $\bm N_2^{\ast} = 
  \Lambda_2 \bm Q_2 \Lambda_2
  $, 
  $ \bm N_2^{\ast\ast} = 
  \Lambda_0 \bm N_2^{\ast} 
  \Lambda_0
  $,
  where 
  $
  \Lambda_0 = diag\Big(
  \sqrt{
  \frac{1}{2}\alpha_{11}
  },\cdots,
  \sqrt{
  \frac{1}{2}
  \alpha_{p_n p_n}}
  \Big)
  $,
  $
  \Lambda_1 =
  diag\Big(\sqrt{\eta_{11}^{-1}}, 
  \cdots,
  \sqrt{\eta_{p_n p_n}^{-1}}
  \Big),
  $
  $
  \Lambda_2 =
  diag\Big(
  \sqrt{\pi_{11}^{-1}}, 
  \cdots,
  \sqrt{\pi_{p_n p_n}^{-1}}
  \Big)
  $.
  \item Generate $\bm P_1, \bm P_2$ by 
  $
  \bm P_1 = 
  \bm N_1^{\ast \ast} - 
  \frac{1}{2} \bm Q_0, \;
  \bm P_2 = 
  \bm N_2^{\ast \ast} -
  \frac{1}{2} \bm Q_0
  $.
\end{enumerate}

\subsection{Simulation Results for $p_n=100$}
\begin{table}
\caption{\label{Supplementary: Table 2}Simulation results of penalized composite likelihood estimator for high-dimensional covariate dependent GGMs: $p_n = 100$}
\centering 
\begin{threeparttable}
\begin{tabular}{>{\centering\arraybackslash}p{1.0cm}>{\centering\arraybackslash}p{1.5cm}>{\centering\arraybackslash}p{1.5cm}>{\centering\arraybackslash}p{2.4cm}>{\centering\arraybackslash}p{2.5cm}>
{\centering\arraybackslash}p{2.5cm}>
{\centering\arraybackslash}p{2.5cm}} \toprule
$p_n$ & Matrix &  \# of Para. &
  \# of Nonzero Para. &
  Sensitivity & Specificity & MCC   \\
    \midrule
    \multirow{2}{*}{$100$} & 
     $\bm{Q}_0$ & 5050 & 165 &
    0.7403(0.1248) & 0.8701(0.0537) & 0.2084(0.0281) \\
    \addlinespace
    & $\bm{Q}_1$ & 5050  & 163 & 
         0.5473(0.1593) & 0.8856(0.0308) & 0.1484(0.0429)  \\
    \bottomrule  
\end{tabular} 
\begin{tablenotes} \item In this experiment, the sample size is $3000$ and the standard errors for each evaluation metric are shown in parentheses. 
\end{tablenotes}
\end{threeparttable}
\end{table}

We increase the dimension $p_n$ to 100 to see the performance of our proposed method. The simulation results over 100 repetitions are displayed in  Table~\ref{Supplementary: Table 2}. To reduce the computational workload, we simulate the special case where the diagonal elements of the precision matrix are constant across all values of the covariate. Despite the challenges of high-dimensional data, the proposed method still demonstrates acceptable performance.

\section{Technical Lemmas}
Without loss of generality, the proofs and lemmas presented in this section are derived based on one covariate for simplicity. Nonetheless, extending these arguments to scenarios with multiple covariates can be justified in the same way. We begin with some notations shown below.
\begin{definition}
\label{Definition 3.1} 
A random variable $Y$ with mean $\mu = E\{Y\}$ is sub-exponential if there are non-negative parameters $(v^2, c)$ such that 
\begin{align*}
E\Big\{\exp\big\{\lambda(Y-\mu)\big\}
\Big\} \leq 
\exp\Big\{ 
\frac{v^2\lambda^2}{2}
\Big\} \;\;\;\;\; 
\text{for all } \abs{\lambda} < \frac{1}{c}.
\end{align*}
\end{definition}
\justify \textbf{Notation 3.1}: If the random vector $\bm{Y}_m=(Y_{m1},\dots,Y_{mp_n})^T,$ $m=1,\dots,n,$ follows  multivariate normal distribution $N(\bm{0},\bm{\Sigma}_m)$, then we have 
\begin{align*}
Y_{mj}^2  \sim \; \text{Sub-Exponential} \; (v_{m,jj}^2, c_{m,jj}),
\;\;\;
Y_{mi}Y_{mj} \sim \; \text{Sub-Exponential} \; (v_{m,ij}^2, c_{m,ij}) ,
\end{align*}
where $ m = 1,...,n, i = 1,...,p_n, j = 1,...,p_n$ and $i \neq j.$  

\justify \textbf{Notation 3.2}: The proofs are based on the following parameterization $
\bm \Sigma_{m}^{-1} = \bm Q_0 + x_m (\bm Q_1 - \bm Q_0) = x_m \bm Q_1 + (1-x_m) \bm Q_0
$,
where
$(\bm Q_0)_{ij}=\alpha_{ij}, (\bm Q_1)_{ij}=\theta_{ij}$. However, one may replace $x_m \theta_{ij} + (1-x_m) \alpha_{ij}$ with $\alpha_{ij} + \sum_{h=1}^{H} x_m^{(h)} [\theta_{ij}]_h$ for multiple covariates scenarios, where $i,j=1,\cdots,p_n$. Note that the notation $\theta_{ij}$ is used under one-covariate scenario to denote $(\bm Q_1)_{ij}$, whereas $[\theta_{ij}]_h$ is used to denote $(\bm P_h)_{ij}$ for multiple covariate situation. 
\justify 
\textbf{Notation 3.3}: We denote
$\abs{\mathcal{S}_1}, \abs{\mathcal{S}_2} and \abs{\bm{\mathcal{S}}}$ as the cardinality of $\mathcal{S}_1, \mathcal{S}_2, $ and $\bm{\mathcal{S}}$ respectively. 
In particular, 
$
|\bm{\mathcal{S}}| =
\text{total \# of true nonzero parameters} = q_n + p_n(H+1).
$
Furthermore, we write 
$-l_c(\bm{Y}, \bm \beta) = -l_c(\bm \beta)$ when there is no ambiguity. 
\justify 
\textbf{Notation 3.4}: We define a vector $\bm v = (v_1,\cdots, v_d) \in \mathbb{R}^{d}$ and write its $\ell_q$ norm as $\lVert \bm v \rVert_q = \big(\sum_{i=1}^{d}\abs{v_i}^q\big)^{\frac{1}{q}}$. In particular, the $\ell_{\infty}$ norm of a vector is defined as $\lVert \bm v \rVert_{\infty} = sup_{1 \leq i \leq d}\abs{v_i}$. 
\justify 
\textbf{Notation 3.5}: For any matrix $\bm A \in \mathbb{R}^{d_1 \times d_2}$, we write its spectral norm, infinity norm, max norm, and Frobenius norm as $\vertiii{\bm A}_2 = \Lambda_{\max} (\bm A)$, $\vertiii{\bm A}_{\infty} = \max_{1\leq i \leq d_1} \sum_{j=1}^{d_2} \abs{a_{ij}}
$, $\vertiii{\bm A}_{\max} = 
\max_{ij}\abs{a_{ij}}$, $\vertiii{\bm A}_{F} = \sqrt{\sum_{i=1}^{d_1}\sum_{j=1}^{d_2} \abs{a_{ij}}^2}$, where $\Lambda_{\max}(\cdot)$ denotes the largest singular value of a matrix and $a_{ij}$ is the $(i,j)$th element of $\bm A$. Moreover, we let $\Lambda_{\min}(\bm A), \lambda_{\min}(\bm A), \lambda_{\max}(\bm A)$ represent the smallest singular value, and smallest and largest eigenvalues of $\bm{A}$ respectively.

\begin{lemma} \label{lemma 2.1}
If the random vector $\bm{Y}=(Y_1,\dots,Y_{p_n})^T$ follows  a multivariate normal distribution $N(\bm{0},\bm{\Sigma})$, 
with $\Sigma_{jj} =\sigma_{jj}^2< \infty$, $j = 1,...,p_n$, 
then there exists some constant $0 < \mathcal{M}_1 < \infty$ such that
\begin{align*}
&
\abs{E\{Y_{i}^{h_1} Y_{j}^{h_2} \}} 
< \mathcal{M}_1,
\quad
\abs{E\{Y_{i}^{h_1} Y_{j}^{h_2} Y_{l}^{h_3} \} 
} 
< \mathcal{M}_1,
\quad
\abs{E\{Y_{i}^{h_1} Y_{j}^{h_2} Y_{k}^{h_3} Y_{l}^{h_4} \}}  
< \mathcal{M}_1, 
\end{align*}
where $h_1, h_2, h_3, h_4$ are finite positive integers and $i,j,k,l = 1,...p_n.$ 
\end{lemma}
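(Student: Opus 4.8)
The plan is to dominate each mixed moment in absolute value by the expectation of the corresponding product of absolute powers, reduce that to a product of single-coordinate absolute Gaussian moments via H\"older's inequality, and then bound each such moment using the closed-form expression for absolute Gaussian moments together with a uniform upper bound on the marginal variances.

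First I would note that, by Jensen's inequality,
\[
\bigl| E\{Y_i^{h_1} Y_j^{h_2} Y_k^{h_3} Y_l^{h_4}\} \bigr| \;\le\; E\bigl\{ |Y_i|^{h_1} |Y_j|^{h_2} |Y_k|^{h_3} |Y_l|^{h_4} \bigr\},
\]
and analogously for the two- and three-fold products, so it suffices to bound expectations of nonnegative integrands. Next I would apply the generalized H\"older inequality with equal exponents; for the four-fold product this gives
\[
E\bigl\{ |Y_i|^{h_1} |Y_j|^{h_2} |Y_k|^{h_3} |Y_l|^{h_4} \bigr\} \;\le\; \bigl(E|Y_i|^{4h_1}\bigr)^{1/4}\bigl(E|Y_j|^{4h_2}\bigr)^{1/4}\bigl(E|Y_k|^{4h_3}\bigr)^{1/4}\bigl(E|Y_l|^{4h_4}\bigr)^{1/4},
\]
while the two-fold (resp.\ three-fold) product is at most a product of factors of the form $\bigl(E|Y_\bullet|^{2h_\bullet}\bigr)^{1/2}$ (resp.\ $\bigl(E|Y_\bullet|^{3h_\bullet}\bigr)^{1/3}$). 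Hence the whole statement reduces to controlling $E|Y_j|^{r}$ for a single coordinate $Y_j\sim N(0,\sigma_{jj}^2)$ and a finite integer $r$ no larger than $4\max\{h_1,h_2,h_3,h_4\}$.

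Then I would use the identity $E|Y_j|^{r} = \sigma_{jj}^{\,r}\, m_r$, where $m_r = 2^{r/2}\Gamma\bigl((r+1)/2\bigr)/\sqrt{\pi}$ is the $r$-th absolute moment of a standard normal, a finite constant depending only on $r$. The marginal variances $\sigma_{jj}^2$ are finite by hypothesis, and in the high-dimensional regime the standing assumptions --- which keep the eigenvalues of the precision matrices, and hence of $\bm\Sigma_m$, bounded away from zero and infinity --- guarantee the uniform bound $\bar\sigma^2 := \sup_{m,j}\sigma_{m,jj}^2 < \infty$. Therefore $E|Y_j|^{r}\le \bar\sigma^{\,r} m_r$, and substituting into the H\"older bounds yields, in each of the three cases, an upper bound that depends only on $\bar\sigma$, on absolute standard-normal moments, and on the fixed finite exponents $h_1,\dots,h_4$, and in particular is independent of the indices $i,j,k,l$ and of $p_n$. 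Taking $\mathcal{M}_1$ to be the maximum of these three bounds plus one gives the strict inequalities claimed.

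No step is genuinely delicate here; the only point needing care is the uniformity of $\mathcal{M}_1$ --- that it does not grow with $p_n$ or with the sample index $m$ --- which is exactly what the uniform variance bound $\bar\sigma^2 < \infty$ delivers. An alternative, equally short route replaces the H\"older step by Isserlis' (Wick's) theorem, expanding $E\{Y_i^{h_1}Y_j^{h_2}\cdots\}$ as a finite sum over pairings of products of entries of $\bm\Sigma$ and using $|\sigma_{ij}|\le\sqrt{\sigma_{ii}\sigma_{jj}}\le\bar\sigma^2$ to bound each summand.
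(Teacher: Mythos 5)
Your proposal is correct and follows essentially the same route as the paper's own proof: reduce to absolute values via Jensen, split the product across coordinates by H\"older (the paper uses iterated Cauchy--Schwarz, i.e.\ H\"older with exponent $2$ applied in nested pairs, where you use the generalized H\"older inequality with equal exponents), and then invoke the closed-form finiteness of absolute Gaussian moments of a single coordinate. Your explicit remarks on the uniformity of $\mathcal{M}_1$ over indices and the alternative via Isserlis' theorem are sound additions, but they do not change the substance of the argument.
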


\begin{proof}
It is a well-established fact that the central absolute moments for all even orders of normal random variables with mean zero are given by $
E\{ \abs{Y_j}^{r}\} = \sigma_{jj}^{r} (r-1)!!,
$
where 
$
(r-1)!!$ denotes the double factorial, that is, $(r-1)!! = (r-1)\cdot (r-3)\cdots 1$ and $r$ is some even number. By Jensen's inequality and H\"older's inequality, $\forall \; i,j,k,l = 1,...p,$ we have 
\begin{align*}
& \vert E\{Y_{i}^{h_1}Y_{j}^{h_2} \} \vert \leq E\{\vert Y_{i}^{h_1}Y_{j}^{h_2} \vert \} \leq 
\Big\{ E\{ \vert  Y_{i}^{h_1} \vert ^2 \} \Big\} ^{\frac{1}{2}} \Big\{E\{ \vert  Y_{j}^{h_2} \vert ^2 \} \Big\}^{\frac{1}{2}} 
\leq \mathcal{M}_1,
\\
&
\abs{E\{Y_{i}^{h_1} Y_{j}^{h_2} Y_{l}^{h_3} \}    
}  
\leq 
E\{
\abs{Y_i^{h_1} Y_j^{h_2} Y_l^{h_3}}
\}
\leq
\Big\{ E\{ \vert  Y_{i}^{h_1} Y_{j}^{h_2} \vert ^2 \} \Big\} ^{\frac{1}{2}} \Big\{ E\{ \vert  Y_{l}^{h_3} \vert ^2 \} \Big\} ^{\frac{1}{2}} 
\leq \mathcal{M}_1,
\\
&   \abs{E\{Y_{i}^{h_1} Y_{j}^{h_2} Y_{k}^{h_3} Y_{l}^{h_4} \}}  \leq 
E\{
\abs{Y_{i}^{h_1} Y_{j}^{h_2} Y_{k}^{h_3} Y_{l}^{h_4}}
\}
\leq 
\Big\{ E\{
\vert Y_{i}^{h_1} Y_{j}^{h_2} \vert ^2 
\}
\Big\}^{\frac{1}{2}}
\Big\{ E\{
\vert Y_{k}^{h_3} Y_{l}^{h_4} \vert ^2 
\}
\Big\}^{\frac{1}{2}}
< \mathcal{M}_1.
\end{align*}
\end{proof}

\begin{lemma} \label{lemma 2.2}
For $\forall \;
s,w = \{(a,b)\} \in \mathcal{E}
$ and $j=1,\cdots, p_n$, 
\begin{align*}
\max_{s}
\abs{
\frac{\partial - {l}_c(\bm{\beta^0})}{\partial \theta_{s}}
} &=  \mathcal{O}_p
\bigg\{
(ns_n^2\log{p_n})^{\frac{1}{2}} 
\bigg\} ,
&
\max_{w}
\abs{
\frac{\partial - {l}_c(\bm{\beta^0})}{\partial \alpha_{w}}
} &=  \mathcal{O}_p
\bigg\{
(ns_n^2\log{p_n})^{\frac{1}{2}} 
\bigg\} ,
\\
\max_{1\leq j \leq p_n}
\abs{\frac{\partial - {l}_c(\bm{\beta^0})}{\partial \theta_{jj}}} &=  \mathcal{O}_p
\bigg\{
(ns_n^4\log{p_n})^{\frac{1}{2}}
\bigg\} ,
&
\max_{1\leq j \leq p_n}
\abs{\frac{\partial - {l}_c(\bm{\beta^0})}{\partial \alpha_{jj}}} &=  \mathcal{O}_p
\bigg\{
(ns_n^4\log{p_n})^{\frac{1}{2}}
\bigg\}.
\end{align*}
Furthermore, for $\forall \; \epsilon > 0,$ there exist some positive constants $\mathcal{K}_1, C_1, c_{z,max}$ and $ c_{\mathcal{I}_2, max}$ such that
\begin{align*}
P\bigg\{
\Big\|
\frac{1}{n} l_c^{(1)} (\bm \beta^0)_{\mathcal{S}^c}
\Big\|_{\infty}
\geq
\epsilon
\bigg\}
&
\leq
8\exp\Bigg\{
-\min \bigg\{
\frac{C_1 n \epsilon^2}{72s_n^2\mathcal{K}_1},
\frac{n\epsilon}{12s_n c_{z,max}}
\bigg\}
+ \log{s_n}
+ \log{(p_n^2 - p_n - q_n)}
\Bigg\}
\\
&
+
2\exp\Bigg\{
-\min \bigg\{
\frac{C_1 n \epsilon^2}{18\mathcal{K}_1},
\frac{n\epsilon}{6c_{\mathcal{I}_2, max}}
\bigg\}
+ \log{(p_n^2 - p_n - q_n)}
\Bigg\},
\\
P\bigg\{
\Big\|
\frac{1}{n} l_c^{(1)} (\bm \beta^0)_{\mathcal{S}_1}
\Big\|_{\infty}
\geq
\epsilon
\bigg\}
&
\leq
8\exp\Bigg\{
-\min \bigg\{
\frac{C_1n\epsilon^2}{72s_n^2\mathcal{K}_1},
\frac{n\epsilon}{12s_nc_{z,max}}
\bigg\}
+ \log{s_n}
+ \log{q_n}
\Bigg\}
\\
&
+
2\exp\Bigg\{
-\min \bigg\{
\frac{C_1n\epsilon^2}{18\mathcal{K}_1},
\frac{n\epsilon}{6c_{\mathcal{I}_2, max}}
\bigg\}
+ \log{q_n}
\Bigg\} .
\end{align*}
\end{lemma}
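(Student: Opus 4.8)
The plan is to establish the order-of-magnitude bounds for the score components by writing each partial derivative of $-l_c(\bm\beta^0)$ as a sum over $m$ of a linear combination of the sub-exponential random variables $Y_{mi}Y_{mj}$, $Y_{mj}^2$, then applying a Bernstein-type concentration inequality for sums of independent sub-exponential variables, followed by a union bound. First I would compute the explicit form of each $\partial(-l_c(\bm\beta^0))/\partial\theta_s$, $\partial(-l_c(\bm\beta^0))/\partial\alpha_w$, $\partial(-l_c(\bm\beta^0))/\partial\theta_{jj}$, $\partial(-l_c(\bm\beta^0))/\partial\alpha_{jj}$ from the composite loglikelihood: each of these is a finite sum of terms of the form $\sum_{m=1}^n x_m^{(h)} c_{mik}(Y_{mi}Y_{mk} - E\{Y_{mi}Y_{mk}\})$ plus a centering term, where the coefficients $c_{mik}$ are built from the entries of $\bm Q_0,\bm Q_1$ at $\bm\beta^0$ and are bounded by Assumption~\ref{assumption 2.1}. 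Crucially, for an off-diagonal parameter $\theta_s$ with $s=(a,b)$, the derivative involves only $\mathcal{O}(s_n)$ nonzero coefficients (coming from row $a$ and row $b$ of the true sparse precision matrices, which have at most $s_n$ nonzero entries per row), so the relevant $v^2$ parameter of the aggregated sub-exponential sum scales like $n s_n^2$; for a diagonal parameter $\theta_{jj}$ the squared inner sum $\big(\sum_{i\neq j}[\cdots]y_{mi}\big)^2$ produces a double sum over $\mathcal{O}(s_n)\times\mathcal{O}(s_n)$ nonzero index pairs, giving $v^2 \asymp n s_n^4$.

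The second step is the concentration argument. Since each $Y_{mi}Y_{mk}$ is sub-exponential with parameters bounded uniformly in $m$ (by Notation~3.1 together with Assumption~\ref{assumption 2.1} which bounds the eigenvalues and hence the entries of $\bm\Sigma_m$, and Lemma~\ref{lemma 2.1}), a finite linear combination with bounded coefficients is again sub-exponential with parameters $(v_m^2, c_m)$; summing over the $n$ independent observations gives a sub-exponential variable with parameters $(\sum_m v_m^2, \max_m c_m)$. Applying the standard sub-exponential tail bound
\[
P\Big\{\big|\textstyle\sum_m Z_m\big| \geq t\Big\} \leq 2\exp\Big\{-\tfrac{1}{2}\min\big(\tfrac{t^2}{\sum_m v_m^2}, \tfrac{t}{\max_m c_m}\big)\Big\}
\]
with $t \asymp (n s_n^2 \log p_n)^{1/2}$ (resp.\ $(n s_n^4 \log p_n)^{1/2}$) makes each individual tail probability of order $p_n^{-c}$ for a suitable constant, and a union bound over the $\mathcal{O}(p_n^2)$ off-diagonal indices (resp.\ $\mathcal{O}(p_n)$ diagonal indices) kills the logarithm, yielding the four $\mathcal{O}_p$ statements. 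The refined two-sided displays for $\|\tfrac1n l_c^{(1)}(\bm\beta^0)_{\mathcal{S}^c}\|_\infty$ and $\|\tfrac1n l_c^{(1)}(\bm\beta^0)_{\mathcal{S}_1}\|_\infty$ are obtained by the same route but keeping track of the bookkeeping: the score component at a zero off-diagonal parameter is itself a sum of $\mathcal{O}(s_n)$ inner terms (hence the extra factor $\log s_n$ from a union bound over which of the $s_n$ summands dominates), so one gets the first exponential with $v^2 \propto s_n^2$ and the additional $+\log s_n + \log(\#\text{ of such indices})$ inside the exponent; the second exponential comes from the mean-correction / quadratic piece $\mathcal{I}_2$ which does not carry the $s_n$ blow-up, hence $v^2\propto 1$ and only the index count $\log(p_n^2-p_n-q_n)$ or $\log q_n$. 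The constants $\mathcal{K}_1, C_1, c_{z,\max}, c_{\mathcal{I}_2,\max}$ are simply names for the uniform bounds on $v^2$-densities, on the Bernstein constant, and on the sub-exponential $c$-parameters of the two types of summand.

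The main obstacle is the careful combinatorial accounting in the second pair of displays: one must correctly identify which pieces of $\partial(-l_c)/\partial\beta_u$ at a \emph{zero} coordinate inherit the factor $s_n$ (the linear-in-$Y$ cross terms, whose coefficients range over an $\mathcal{O}(s_n)$-sized support and thus require their own union bound, contributing $\log s_n$) versus which do not (the $\mathcal{I}_2$-type quadratic/mean-correction terms), and to verify that the sub-exponential parameters of every summand are bounded uniformly in $m$ — this uses Assumption~\ref{assumption 2.1} to bound $\|\bm\Sigma_m\|_2$ and hence all entries of $\bm\Sigma_m$ and of $\bm Q_0^0,\bm Q_1^0$, and Lemma~\ref{lemma 2.1} to bound the requisite moments. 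Once the $(v^2,c)$ parameters are pinned down with the correct $s_n$-dependence, the rest is a mechanical application of Bernstein's inequality and a union bound, and the stated probability bounds follow by collecting terms.
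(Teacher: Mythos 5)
Your proposal is correct and follows essentially the same route as the paper's proof: decompose each score component into the linear-in-$Y$ cross terms (whose index support has size $\mathcal{O}(s_n)$ by row sparsity, handled by a Bonferroni bound over the $s_n$ inner summands at level $t/s_n$, which produces the $s_n^2$ in the exponent denominator and the $+\log s_n$ term) plus the $\mathcal{I}_2$-type term $2\sum_m x_m y_{ma}y_{mb}$ with no $s_n$ inflation, apply the sub-exponential Bernstein tail to each piece, and union-bound over the relevant index set. Your bookkeeping of which pieces carry the $s_n$ factor, the $s_n^4$ scaling for diagonal parameters from the squared inner sum, and the origin of the two separate exponential terms in the refined displays all match the paper.
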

\begin{proof}
We start with $\frac{\partial - {l}_c(\bm{\beta^0})}{\partial \theta_{s}}$ as follows. For $\forall \; s=\{(a, b) \} \in \mathcal{E}$, we show that 
\begin{align*}
\frac{\partial - {l}_c(\bm{\beta^0})}{\partial \theta_{s}}
& =
\underbrace{
\sum_{m=1}^{n}
\left[
x_m \theta_{aa} + (1-x_m)\alpha_{aa}
\right]^{-1}
\bigg(
\sum_{i=1,i\neq a}^{p_n}
\left[
x_m\theta_{ai} + (1-x_m)\alpha_{ai}
\right]
y_{mi}
\bigg)
x_m y_{mb}
}_{\mathcal{I}_{1.1}} 
\\
& +
\underbrace{
\sum_{m=1}^{n}
\left[
x_m \theta_{bb} + (1-x_m)\alpha_{bb}
\right]^{-1}
\bigg(
\sum_{i=1,i\neq b}^{p_n}
\left[
x_m\theta_{bi} + (1-x_m)\alpha_{bi}
\right]
y_{mi}
\bigg)
x_m y_{ma}
}_{\mathcal{I}_{1.2}}
\\
& +
\underbrace{
2\sum_{m=1}^{n}
x_m y_{ma} y_{mb}
}_{\mathcal{I}_2}.
\end{align*}
The first term $\mathcal{I}_{1.1}$ can be rewritten as
\begin{align*}
\mathcal{I}_{1.1} 
&=
\sum_{m=1}^{n}
\left[
x_m \theta_{aa} + (1-x_m)\alpha_{aa}
\right]^{-1}
\bigg(
\sum_{i=1,i\neq a}^{p_n}
x_m\theta_{ai}
y_{mi}
\bigg)
x_m y_{mb} \\
&
+
\sum_{m=1}^{n}
\left[
x_m \theta_{aa} + (1-x_m)\alpha_{aa}
\right]^{-1}
\bigg(
\sum_{i=1,i\neq a}^{p_n}
(1-x_m)\alpha_{ai}
y_{mi}
\bigg)
x_m y_{mb}.
\end{align*}
It can be shown that
\begin{align*}
&
\sum_{m=1}^{n}
\left[
x_m \theta_{aa} + (1-x_m)\alpha_{aa}
\right]^{-1}
\bigg(
\sum_{i=1,i\neq a}^{p_n}
x_m\theta_{ai}
y_{mi}
\bigg)
x_m y_{mb}
\\
=&
\sum_{i=1, i\neq a}^{p_n}
\underbrace{
\bigg(
\sum_{m=1}^{n}
x_m^2
\left[
x_m \theta_{aa} + (1-x_m)\alpha_{aa}
\right]^{-1}
\theta_{ai} y_{mi} y_{mb}
\bigg)
}_{Z_{i,ab}^{(1)}}
=
\sum_{i=1, i\neq a}^{p_n}
Z_{i,ab}^{(1)},
\\
&
\sum_{m=1}^{n}
\left[
x_m \theta_{aa} + (1-x_m)\alpha_{aa}
\right]^{-1}
\bigg(
\sum_{i=1,i\neq a}^{p_n}
(1-x_m)\alpha_{ai}
y_{mi}
\bigg)
x_m y_{mb}
\\
=&
\sum_{i=1, i\neq a}^{p_n}
\underbrace{
\bigg(
\sum_{m=1}^{n}
x_m(1-x_m)
\left[
x_m \theta_{aa} + (1-x_m)\alpha_{aa}
\right]^{-1}
\alpha_{ai} y_{mi} y_{mb}
\bigg)
}_{Z_{i,ab}^{(2)}}
=
\sum_{i=1, i\neq a}^{p_n}
Z_{i,ab}^{(2)},
\end{align*}
where each $Z_{i,ab}^{(1)}$ and $Z_{i,ab}^{(2)}$ follows sub-exponential distribution with parameters $\Big(\big[v_{z_{i,ab}}^{(1)}\big]^2, c_{z_{i,ab}}^{(1)}\Big)$ and $
\Big(\big[v_{z_{i,ab}}^{(2)}\big]^2, c_{z_{i,ab}}^{(2)}\Big),$ respectively. The parameters take the following form
\begin{align*}
&
\big[v_{z_{i,ab}}^{(1)}\big]^2 =
\sum_{m=1}^{n}x_m^4
\left[
x_m \theta_{aa} + (1-x_m)\alpha_{aa}
\right]^{-2}
\theta_{ai}^2
v_{m, ib}^2,
\\
&
c_{z_{i,ab}}^{(1)} =
\max_{1\leq m \leq n}
\Big\{
x_m^2
\left[
x_m \theta_{aa} + (1-x_m)\alpha_{aa}
\right]^{-1}
\abs{\theta_{ai}}
c_{m, ib}
\Big\},
\\
&
\big[v_{z_{i,ab}}^{(2)}\big]^2 = \sum_{m=1}^{n}x_m^2
(1-x_m)^2 
\left[
x_m \theta_{aa} + (1-x_m)\alpha_{aa}
\right]^{-2}
\alpha_{ai}^2
v_{m, ib}^2,
\\
&
c_{z_{i,ab}}^{(2)} =
\max_{1\leq m \leq n}
\Big\{x_m (1-x_m)
\left[
x_m \theta_{aa} + (1-x_m)\alpha_{aa}
\right]^{-1}
\abs{\alpha_{ai}}
c_{m, ib}
\Big\}.
\end{align*}
According to the property of sub-exponential distribution, we have 
\begin{align*}
P\Big\{\abs{Z_{i,ab}^{(1)} - E\big\{Z_{i,ab}^{(1)}\big\}} \geq t \Big\} & \leq 
\begin{cases}
2\exp\Big\{
-\frac{t^2}{2\big[v_{z_{i,ab}}^{(1)}\big]^2}
\Big\} 
& \text{if}\; 0 \leq t \leq 
\frac{\big[v_{z_{i,ab}}^{(1)}\big]^2}{c_{z_{i,ab}}^{(1)}}, 
\\ 
2\exp\Big\{
-\frac{t}{2c_{z_{i,ab}}^{(1)}}
\Big\} & \text{if}\; 
t > \frac{\big[v_{z_{i,ab}}^{(1)}\big]^2}{c_{z_{i,ab}}^{(1)}} .
\end{cases}
\end{align*}
Using Bonferroni inequalities, we show that
\begin{align*}
& P\Big\{\abs{\mathcal{I}_{1.1} - E\big\{
\mathcal{I}_{1.1}
\big\}} \geq t \Big\} 
=
P\Bigg\{
\abs{
\sum_{i=1,i\neq a}^{p_n}
\left[
Z_{i,ab}^{(1)} - E\big\{Z_{i,ab}^{(1)}\big\}
\right]
+  
\sum_{i=1,i\neq a}^{p_n}
\left[
Z_{i,ab}^{(2)} - E\big\{Z_{i,ab}^{(2)}\big\}
\right]
}
\geq t
\Bigg\} 
\\
&
\leq 
P\Bigg\{
\abs{
\sum_{i=1,i\neq a}^{p_n}
\left[
Z_{i,ab}^{(1)} - E\big\{Z_{i,ab}^{(1)}\big\}
\right]
}
\geq \frac{t}{2}
\Bigg\} 
+
P\Bigg\{
\abs{
\sum_{i=1,i\neq a}^{p_n}
\left[
Z_{i,ab}^{(2)} - E\big\{Z_{i,ab}^{(2)}\big\}
\right]
}
\geq \frac{t}{2}
\Bigg\} 
\\
&
\leq 
\sum_{i=1,i\neq a}^{p_n}
P\Big\{
\abs{Z_{i,ab}^{(1)} - E\big\{Z_{i,ab}^{(1)}\big\}}
\geq 
\frac{t}{2s_n}
\Big\}
+ 
\sum_{i=1,i\neq a}^{p_n}
P\Big\{
\abs{Z_{i,ab}^{(2)} - E\big\{Z_{i,ab}^{(2)}\big\}}
\geq 
\frac{t}{2s_n}
\Big\}
\\
& 
\leq 
4\exp\Bigg\{
-\min \bigg\{
\frac{t^2}{8s_n^2v_{z \backslash a,max}^2},
\frac{t}{4s_n c_{z \backslash a,max}}
\bigg\}
+ \log{s_n}
\Bigg\} ,
\end{align*}
where 
\begin{align*}
&
v_{z \backslash a,max}^2 = \max_{\substack{ 1\leq i \leq p_n, \\ i \neq a}}\Big\{ \big[ v_{z_{i,ab}}^{(1)}\big]^2,
\big[ v_{z_{i,ab}}^{(2)}\big]^2
\Big\}, &
&
c_{z \backslash a,max} = \max_{\substack{ 1\leq i \leq p_n, \\ i \neq a}} \Big\{
c_{z_{i, ab}}^{(1)},
c_{z_{i, ab}}^{(2)}
\Big\}.
\end{align*}
Likewise, for the second term $\mathcal{I}_{1.2}$, we can show 
\begin{align*}
\mathcal{I}_{1.2}
&=
\sum_{m=1}^{n}
\left[
x_m \theta_{bb} + (1-x_m)\alpha_{bb}
\right]^{-1}
\bigg(
\sum_{i=1,i\neq b}^{p_n}
x_m\theta_{bi}
y_{mi}
\bigg)
x_m y_{ma} \\
&
+
\sum_{m=1}^{n}
\left[
x_m \theta_{bb} + (1-x_m)\alpha_{bb}
\right]^{-1}
\bigg(
\sum_{i=1,i\neq b}^{p_n}
(1-x_m)\alpha_{bi}
y_{mi}
\bigg)
x_m y_{ma}
\\
&
=
\sum_{i=1,i\neq b}^{p_n}
Z_{i, ba}^{(1)}
+ \sum_{i=1,i\neq b}^{p_n}
Z_{i, ba}^{(2)},
\end{align*}
where each $Z_{i, ba}^{(1)}$ and $ Z_{i, ba}^{(2)}$ follows sub-exponential distribution with parameters $\Big(\big[v_{z_{i,ba}}^{(1)}\big]^2, c_{z_{i,ba}}^{(1)}\Big),$ and $
\Big(\big[v_{z_{i,ba}}^{(2)}\big]^2, c_{z_{i,ba}}^{(2)}\Big)
,$ respectively. The parameters take the following form
\begin{align*}
&
\big[v_{z_{i,ba}}^{(1)}\big]^2 =
\sum_{m=1}^{n}x_m^4
\left[
x_m \theta_{bb} + (1-x_m)\alpha_{bb}
\right]^{-2}
\theta_{bi}^2
v_{m, ia}^2,
\\
&
c_{z_{i,ba}}^{(1)} =
\max_{1\leq m \leq n}
\Big\{
x_m^2
\left[
x_m \theta_{bb} + (1-x_m)\alpha_{bb}
\right]^{-1}
\abs{\theta_{bi}}
c_{m, ia}
\Big\},
\\
&
\big[v_{z_{i,ba}}^{(2)}\big]^2 = \sum_{m=1}^{n}x_m^2
(1-x_m)^2 
\left[
x_m \theta_{bb} + (1-x_m)\alpha_{bb}
\right]^{-2}
\alpha_{bi}^2
v_{m, ia}^2,
\\
&
c_{z_{i,ba}}^{(2)} =
\max_{1\leq m \leq n}
\Big\{x_m (1-x_m)
\left[
x_m \theta_{bb} + (1-x_m)\alpha_{bb}
\right]^{-1}
\abs{\alpha_{bi}}
c_{m, ia}
\Big\}.
\end{align*}
Therefore, it leads to similar result
\begin{align*}
& P\Big\{\abs{\mathcal{I}_{1.2} - E\big\{
\mathcal{I}_{1.2}
\big\}} \geq t \Big\} 
=
P\Bigg\{
\abs{
\sum_{i=1,i\neq b}^{p_n}
\left[
Z_{i,ba}^{(1)} - E\big\{Z_{i,ba}^{(1)}\big\}
\right]
+  
\sum_{i=1,i\neq b}^{p_n}
\left[
Z_{i,ba}^{(2)} - E\big\{Z_{i,ba}^{(2)}\big\}
\right]
}
\geq t
\Bigg\} 
\\
&
\leq 
P\Bigg\{
\abs{
\sum_{i=1,i\neq b}^{p_n}
\left[
Z_{i,ba}^{(1)} - E\big\{Z_{i,ba}^{(1)}\big\}
\right]
}
\geq \frac{t}{2}
\Bigg\} 
+
P\Bigg\{
\abs{
\sum_{i=1,i\neq b}^{p_n}
\left[
Z_{i,ba}^{(2)} - E\big\{Z_{i,ba}^{(2)}\big\}
\right]
}
\geq \frac{t}{2}
\Bigg\} 
\\
& 
\leq 
4\exp\Bigg\{
-\min \bigg\{
\frac{t^2}{8s_n^2v_{z \backslash b,max}^2},
\frac{t}{4s_n c_{z\backslash b,max}}
\bigg\}
+ \log{s_n}
\Bigg\} ,
\end{align*}
where 
\begin{align*}
&
v_{z\backslash b,max}^2 = \max_{\substack{ 1\leq i \leq p_n, \\ i \neq b}}\Big\{ \big[ v_{z_{i,ba}}^{(1)}\big]^2,
\big[ v_{z_{i,ba}}^{(2)}\big]^2
\Big\}, &
&
c_{z \backslash b,max} = \max_{\substack{ 1\leq i \leq p_n, \\ i \neq b}} \Big\{
c_{z_{i, ba}}^{(1)},
c_{z_{i, ba}}^{(2)}
\Big\}.
\end{align*}
Lastly, the third term 
$
\mathcal{I}_2 
=
2\sum_{m=1}^{n}x_m y_{ma}y_{mb}
$
follows a sub-exponential distribution with parameters
\begin{align*}
v_{\mathcal{I}_{2,ab}}^2 = 
4\sum_{m=1}^{n}
x_m^2v_{m,ab}^2 ,
\qquad
c_{\mathcal{I}_{2,ab}} = 
\max_{1\leq m \leq n} 
\big\{ 
2x_m c_{m, ab}
\big\} .
\end{align*}
Hence, for any $t > 0,$
\begin{align*}
P\Big\{
\abs{\mathcal{I}_{2} - E\big\{\mathcal{I}_{2}\big\}} \geq t \Big\} 
\leq 
2\exp\Bigg\{
-\min \bigg\{
\frac{t^2}{2v_{\mathcal{I}_{2,ab}}^2},
\frac{t}{2c_{\mathcal{I}_{2,ab}}}
\bigg\}
\Bigg\} .
\end{align*}
As a result, 
\begin{align*}
& P\Bigg\{
\abs{
\frac{\partial -l_c(\bm{\beta^0})}{\partial \theta_s}
- 
E\bigg\{
\frac{\partial -l_c(\bm{\beta^0})}{\partial \theta_s}
\bigg\}
} 
\geq t 
\Bigg\} 
\\
&
=
P\Big\{
\abs{
\mathcal{I}_{1.1}
- 
E\{
\mathcal{I}_{1.1}
\}
+
\mathcal{I}_{1.2}
- 
E\{
\mathcal{I}_{1.2}
\}
+
\mathcal{I}_{2}
- 
E\{
\mathcal{I}_{2}
\}
} 
\geq t 
\Big\} 
\\
&
\leq
P\Big\{
\abs{
\mathcal{I}_{1.1}
- 
E\{
\mathcal{I}_{1.1}
\}
} \geq 
\frac{t}{3}
\Big\}
+
P\Big\{
\abs{
\mathcal{I}_{1.2}
- 
E\{
\mathcal{I}_{1.2}
\}
} \geq 
\frac{t}{3}
\Big\}
+
P\Big\{
\abs{
\mathcal{I}_{2}
- 
E\{
\mathcal{I}_{2}
\}
} \geq 
\frac{t}{3}
\Big\}
\\
& 
\leq 
8\exp\Bigg\{
-\min \bigg\{
\frac{t^2}{72s_n^2v_{z,max}^2},
\frac{t}{12s_n c_{z,max}}
\bigg\}
+ \log{s_n}
\Bigg\}
+
2\exp\Bigg\{
-\min \bigg\{
\frac{t^2}{18v_{I_{2,ab}}^2},
\frac{t}{6c_{\mathcal{I}_{2,ab}}}
\bigg\}
\Bigg\},
\end{align*}
where $m=1,\cdots,n; i,j,k = 1,\cdots, p_n; i \neq j, j \neq k,$ and 
\begin{align*}
v_{z,max}^2 = \max_{m,i,j,k} \Big\{
&
\sum_{m=1}^{n}
x_m^4 \left[
x_m \theta_{jj}
+ (1-x_m) \alpha_{jj}
\right]^{-2}
\theta_{ji}^2v_{m, ik}^2,
\\
&
\sum_{m=1}^{n}
x_m^2 (1-x_m)^2 \left[
x_m \theta_{jj}
+ (1-x_m) \alpha_{jj}
\right]^{-2} \alpha_{ji}^2
v_{m, ik}^2
\Big\}
\\
c_{z, max} =
\max_{m, i,j,k} \Big\{
&
x_m^2 \left[
x_m \theta_{jj}
+ (1-x_m) \alpha_{jj}
\right]^{-1}
\abs{\theta_{ji}}c_{m, ik},
\\
&
x_m(1-x_m) \left[
x_m \theta_{jj}
+ (1-x_m) \alpha_{jj}
\right]^{-1}
\abs{\alpha_{ji}}c_{m, ik}
\Big\}.
\end{align*}
Define the constants $\mathcal{W}_1, \mathcal{K}_1$ as
\begin{align*}
\mathcal{W}_1 =
\max_{m, i, j, k} \Big\{
& 4 x_m^2 v_{m, jk}^2,
x_m^4 \left[
x_m \theta_{jj}
+ (1-x_m)\alpha_{jj}
\right]^{-2}
\theta_{ji}^2 v_{m, ik}^2,
\\
&
x_m^2 (1-x_m)^2 \left[
x_m \theta_{jj}
+ (1-x_m)\alpha_{jj}
\right]^{-2} \alpha_{ji}^2 v_{m,ik}^2
\Big\},
\\
\mathcal{K}_1 = C_1 \mathcal{W}_1 \; & \; \text{for any constant } C_1 > 0, 
\end{align*}
and let $t = (\mathcal{K}_1 n s_n^2 \log{p_n})^{\frac{1}{2}}$,
the probability becomes
\begin{align*}
&
P\Bigg\{
\abs{
\frac{\partial -l_c(\bm{\beta^0})}{\partial \theta_s}
- 
E\bigg\{
\frac{\partial -l_c(\bm{\beta^0})}{\partial \theta_s}
\bigg\}
} 
\geq (\mathcal{K}_1n s_n^2 \log{p_n})^{\frac{1}{2}} 
\Bigg\} 
\\
&
\leq 
8\exp\Bigg\{
-\min \bigg\{
\frac{ \mathcal{K}_1 n s_n^2 \log{p_n} }{72s_n^2v_{z,max}^2},
\frac{(\mathcal{K}_1 n s_n^2 \log{p_n})^{\frac{1}{2}}}{12s_n c_{z,max}}
\bigg\}
+ \log{s_n}
\Bigg\}
\\
&
+
2\exp\Bigg\{
-\min \bigg\{
\frac{\mathcal{K}_1 n s_n^2 \log{p_n}}{18v_{I_{2,ab}}^2},
\frac{(\mathcal{K}_1 n s_n^2 \log{p_n})^{\frac{1}{2}}}{6c_{\mathcal{I}_{2,ab}}}
\bigg\}
\Bigg\},
\\
& \leq
8 \exp\Bigg\{
-\min \bigg\{
\frac{C_1\log{p_n}}{72},
\frac{(\mathcal{K}_1n\log{p_n})^{\frac{1}{2}}}{12 c_{z,max}}
\bigg\}
+ \log{s_n}
\Bigg\}
\\
&
+ 2 \exp\Bigg\{
-\min \bigg\{
\frac{C_1s_n^2\log{p_n}}{18},
\frac{(\mathcal{K}_1ns_n^2\log{p_n})^{\frac{1}{2}}}{6 c_{\mathcal{I}_{2,ab}}}
\bigg\}
\Bigg\}.
\end{align*}
By the Bonferroni inequality,
\begin{align*}
&
P\Bigg\{
\max_{s}
\abs{
\frac{\partial -l_c(\bm{\beta^0})}{\partial \theta_s}
} 
\geq (\mathcal{K}_1ns_n^2\log{p_n})^{\frac{1}{2}} 
\Bigg\}
\leq 
\frac{ p_n (p_n-1) }{2}
P\Bigg\{
\abs{
\frac{\partial -l_c(\bm{\beta^0})}{\partial \theta_s}
} 
\geq (\mathcal{K}_1ns_n^2\log{p_n})^{\frac{1}{2}} 
\Bigg\}
\\
& 
\leq 
8 \exp\Bigg\{
-\min \bigg\{
\frac{C_1\log{p_n}}{72},
\frac{(\mathcal{K}_1n\log{p_n})^{\frac{1}{2}}}{12 c_{z,max}}
\bigg\}
+ \log{s_n} + \log{p_n} + \log{(p_n-1)}
\Bigg\}
\\
& 
+
\exp\Bigg\{
-\min \bigg\{
\frac{C_1s_n^2\log{p_n}}{18},
\frac{(\mathcal{K}_1ns_n^2\log{p_n})^{\frac{1}{2}}}{6 c_{\mathcal{I}_{2,max}}}
\bigg\}
+ \log{p_n} + \log{(p_n-1)}
\Bigg\},
\end{align*}
where $c_{\mathcal{I}_{2,max}} = \max_{m, a, b} \Big\{
2 x_m c_{m,ab}
\Big\}$. By choosing the constant $\mathcal{C}_1 > 216,$ the upper bound probability approaches to zero as $n$ increases. The proofs of the remaining results are analogous to the one provided above and thus, are omitted.  
\end{proof}

\begin{lemma} \label{lemma 2.3}
All the second order partial derivatives of the negative composite log-likelihood $-l_c(\bm \beta)$ at $\bm{\beta} = \bm{\beta}^0$ are of the order $\mathcal{O}_p(ns_n^2)$.
\end{lemma}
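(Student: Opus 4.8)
The plan is to write out the second-order partial derivatives of $-l_c(\bm\beta)$, evaluate each at $\bm\beta^0$, and control it by a Bernstein-type concentration bound. Since $-l_c(\bm\beta)=\sum_{m=1}^n\sum_{j=1}^{p_n} l_c(Y_{mj},\bm\beta)$ and the conditional term $l_c(Y_{mj},\bm\beta)$ involves only the $j$-th rows of $\bm Q_0,\bm P_1,\dots,\bm P_H$, differentiating twice collapses the sum over $j$ to $\mathcal{O}(1)$ surviving indices: an off-diagonal parameter $\theta_s$ (or $\alpha_w$) with $s=(a,b)$ appears only in the $j=a$ and $j=b$ conditionals, and a diagonal parameter $\alpha_{jj}$ or $[\theta_{jj}]_h$ only in the $j$-th conditional. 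So every second derivative at $\bm\beta^0$ is a single sum $\sum_{m=1}^n g_m$ over observations. First I would record the shape of $g_m$ in the three cases (off-diagonal$\times$off-diagonal, off-diagonal$\times$diagonal, diagonal$\times$diagonal), starting from the explicit first derivatives $\partial(-l_c)/\partial\theta_s$ used in the proof of Lemma~\ref{lemma 2.2} and the diagonal derivatives in \eqref{diagonal equations}. The most complex terms arise in the diagonal$\times$diagonal case, where both derivatives may leave the factor $\big(\sum_{i\neq j}[\alpha_{ji}+\sum_h x_m^{(h)}[\theta_{ji}]_h]y_{mi}\big)^2$ (or its descendants) intact---e.g. $\partial^2(-l_c)/\partial\alpha_{jj}^2$ contains $\sum_m[\alpha_{jj}+\sum_h x_m^{(h)}[\theta_{jj}]_h]^{-3}\big(\sum_{i\neq j}[\cdots]y_{mi}\big)^2$ up to a constant; these terms set the $s_n^2$ rate, while the off-diagonal cases only produce $\mathcal{O}(s_n)$- or $\mathcal{O}(1)$-fold structures.

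Next I would bound the coefficients and count nonzero contributions at $\bm\beta^0$. The covariates satisfy $x_m^{(h)}\in[0,1]$; by Assumption~\ref{assumption 2.1} (equivalently the eigenvalue condition in Assumption~\ref{assumption 3.1}) every diagonal entry $\alpha_{jj}^0+\sum_h x_m^{(h)}[\theta_{jj}^0]_h=(\bm\Sigma_m^{-1})_{jj}$ lies in $[L_1,L_2]$, so all powers $[\cdots]^{-k}$ are bounded, and all off-diagonal parameters satisfy $|\alpha_{ji}^0|,|[\theta_{ji}^0]_h|\le L_2$; hence every coefficient occurring in $g_m$ is $\mathcal{O}(1)$. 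At $\bm\beta^0$ the inner sum $\sum_{i\neq j}[\alpha_{ji}^0+\sum_h x_m^{(h)}[\theta_{ji}^0]_h]y_{mi}$ has at most $\mathcal{O}(s_n)$ nonzero summands, so its square expands into at most $\mathcal{O}(s_n^2)$ products of the form (bounded constant)$\times y_{mi_1}y_{mi_2}$. Consequently, in every case $g_m$ is a sum of at most $C s_n^2$ terms, each a bounded multiple of a product of two coordinates of $\bm Y_m$, plus a deterministic $\mathcal{O}(1)$ term coming from the $\log[\cdots]$ pieces whose total over $m$ is a harmless deterministic $\mathcal{O}(n)$.

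Finally I would apply concentration as in the proof of Lemma~\ref{lemma 2.2}. By Notation~3.1 each $y_{mi}y_{mj}$ (and $y_{mj}^2$) is sub-exponential with parameters bounded uniformly in $m$ (the entries of $\bm\Sigma_m$ being bounded), and by Lemma~\ref{lemma 2.1} its mean is bounded by $\mathcal{M}_1$; hence $|E\{g_m\}|=\mathcal{O}(s_n^2)$ and $g_m-E\{g_m\}$ has sub-exponential parameters of order $s_n^4$ and $s_n^2$ (subadditivity of the sub-exponential norm over the $\mathcal{O}(s_n^2)$ summands), so by independence across $m$ the centered sum $\sum_m(g_m-E\{g_m\})$ is sub-exponential with parameters of order $ns_n^4$ and $s_n^2$. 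A Bernstein inequality with $t=Cns_n^2$ then yields $P\{|\sum_m(g_m-E\{g_m\})|\ge Cns_n^2\}\le 2\exp(-c\,n)$ for suitable constants, so the fixed second derivative equals $E\{\sum_m g_m\}+\mathcal{O}_p(ns_n^2)=\mathcal{O}(ns_n^2)+\mathcal{O}_p(ns_n^2)=\mathcal{O}_p(ns_n^2)$ (this is a crude bound; the fluctuation is in fact only $\mathcal{O}_p(\sqrt{n}\,s_n^2)$). A union bound over the $\mathcal{O}(p_n^2H^2)$ pairs of parameters, together with $\log p_n=o(n)$ from Assumption~\ref{assumption 3.2}, then shows that all second derivatives are simultaneously $\mathcal{O}_p(ns_n^2)$.

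The main obstacle is the first step: the careful enumeration of term types in the three second-derivative cases, and the verification in each that the number of nonzero summands at $\bm\beta^0$ is $\mathcal{O}(s_n^2)$ (the binding case being the diagonal$\times$diagonal derivatives that retain the squared inner sum), along with uniform boundedness of all coefficients and sub-exponential parameters. Once each second derivative has been put into the canonical form ``$\sum_m$ of $\mathcal{O}(s_n^2)$ bounded$\times$(product of two $y$'s)'', the concentration step is routine and parallels the computation already carried out for the scores in Lemma~\ref{lemma 2.2}.
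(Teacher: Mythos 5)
Your proposal is correct and follows essentially the same route as the paper: write out each second derivative explicitly, observe that the off-diagonal$\times$off-diagonal, off-diagonal$\times$diagonal, and diagonal$\times$diagonal cases retain $\mathcal{O}(1)$, $\mathcal{O}(s_n)$, and $\mathcal{O}(s_n^2)$ nonzero summands per observation at $\bm{\beta}^0$ (the squared inner sum in the diagonal$\times$diagonal case being the binding one), with all coefficients bounded, so each derivative is a sum over $m$ of terms of size $\mathcal{O}_p(s_n^2)$. The paper stops there and reads off the pointwise orders from the bounded moments of products of Gaussian coordinates (Lemma~\ref{lemma 2.1}), whereas your additional Bernstein-plus-union-bound layer establishes a uniform version over all parameter pairs; that extra concentration is not needed for the lemma as stated and is essentially the machinery the paper defers to Lemma~\ref{lemma 2.5}.
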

\begin{proof}
For $\forall \; s=\{(a,b)\} \in \mathcal{E}$, $j=1,\cdots,p_n$ and $\bm{\beta} = \bm{\beta}^0$, we show that 
\begin{align*}
\frac{\partial^2 - {l}_c(\bm{\beta})}{\partial \theta_{s}^2}
& =
\sum_{m=1}^{n}
\left[
x_m \theta_{aa} + (1-x_m)\alpha_{aa}
\right]^{-1}
x_m^2 y_{mb}^2
+
\sum_{m=1}^{n}
\left[
x_m \theta_{bb} + (1-x_m)\alpha_{bb}
\right]^{-1}
x_m^2 y_{ma}^2 =
\mathcal{O}_p(n),
\\
\frac{\partial^2 - {l}_c(\bm{\beta})}{\partial \theta_{s} \partial \theta_{jj}} 
&=
-\sum_{m=1}^{n}
x_m
\left[
x_m \theta_{aa} + (1-x_m)\alpha_{aa}
\right]^{-2}
\Big( 
\sum_{i=1,i\neq a}^{p_n}
\left[x_m\theta_{ai}+(1-x_m)\alpha_{ai} \right]
y_{mi}
\Big)
x_my_{mb} 
\\
& =
\mathcal{O}_p(ns_n), \; \text{ when } j = a,
\\
\frac{\partial^2 - {l}_c(\bm{\beta})}{\partial \theta_{s} \partial \theta_{jj}} 
&=
-\sum_{m=1}^{n}
x_m
\left[
x_m \theta_{bb} + (1-x_m)\alpha_{bb}
\right]^{-2}
\Big( 
\sum_{i=1,i\neq b}^{p_n}
\left[x_m\theta_{bi}+(1-x_m)\alpha_{bi} \right]
y_{mi}
\Big)
x_my_{ma} 
\\
& =
\mathcal{O}_p(ns_n),
\; \text{ when } j = b.
\end{align*}
If $j\neq a$ and $j \neq b,$ we have $\frac{\partial^2 - {l}_c(\bm{\beta})}{\partial \theta_{s} \partial \theta_{jj}}=0.$ Lastly,
\begin{align*}
\frac{\partial^2 - {l}_c(\bm{\beta})}{\partial \theta_{jj}^2}
&=
\frac{1}{2}\sum_{m=1}^{n}
x_m^2
\left[
x_m \theta_{jj} + (1-x_m)\alpha_{jj}
\right]^{-2}
\\
& + 
\sum_{m=1}^{n}
x_m^2
\left[
x_m \theta_{jj} + (1-x_m)\alpha_{jj}
\right]^{-3}
\Big( 
\sum_{i=1,i\neq j}^{p_n}
\left[x_m\theta_{ji}+(1-x_m)\alpha_{ji} \right]
y_{mi}
\Big)^2
\\
&= \mathcal{O}_p(ns_n^2).
\end{align*}
The same principle applies to other terms, as illustrated above.
\end{proof}

\begin{lemma} \label{lemma 2.4}
(\textbf{Incoherence Condition}) Under \textbf{Assumptions 3-5} and $q_n^2p_n^{d}\log{p_n} = o(n)$,  the following condition
$$
\vertiii{  
\frac{1}{n}
l_c ^{(2)}(\bm{\beta^0})_{\mathcal{S}^{c}\mathcal{S}_1}
\bigg(
\frac{1}{n}
l_c ^{(2)}(\bm{\beta^0})_{\mathcal{S}_1\mathcal{S}_1}
\bigg)^{-1}
}_{\infty} \leq 1-\frac{1}{3}\xi^2 
$$
holds with probability
tending to one.
\end{lemma}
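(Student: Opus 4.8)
The plan is to run the standard sample-to-population perturbation argument for incoherence conditions (in the spirit of the graphical-model analysis of \citealp{RavikumarPradeep2011Hceb}). Write $\widehat{\bm\Gamma} := -\tfrac1n l_c^{(2)}(\bm\beta^0)$ and $\bm\Gamma := H(\bm\beta^0) = E\{\widehat{\bm\Gamma}\}$, so that the target quantity is $\vertiii{\widehat{\bm\Gamma}_{\mathcal{S}^c\mathcal{S}_1}(\widehat{\bm\Gamma}_{\mathcal{S}_1\mathcal{S}_1})^{-1}}_{\infty}$; set $\bm W := \widehat{\bm\Gamma} - \bm\Gamma$ and $\bm R := \bm\Gamma_{\mathcal{S}^c\mathcal{S}_1}(\bm\Gamma_{\mathcal{S}_1\mathcal{S}_1})^{-1} = H^0_{\mathcal{S}^c\mathcal{S}_1}(H^0_{\mathcal{S}_1\mathcal{S}_1})^{-1}$, for which \textbf{Assumption~\ref{assumption 3.3}} gives $\vertiii{\bm R}_{\infty} \le 1-\xi$. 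Using $(\widehat{\bm\Gamma}_{\mathcal{S}_1\mathcal{S}_1})^{-1} - (\bm\Gamma_{\mathcal{S}_1\mathcal{S}_1})^{-1} = -(\bm\Gamma_{\mathcal{S}_1\mathcal{S}_1})^{-1}\bm W_{\mathcal{S}_1\mathcal{S}_1}(\widehat{\bm\Gamma}_{\mathcal{S}_1\mathcal{S}_1})^{-1}$, I would decompose
\[
\widehat{\bm\Gamma}_{\mathcal{S}^c\mathcal{S}_1}(\widehat{\bm\Gamma}_{\mathcal{S}_1\mathcal{S}_1})^{-1} - \bm R
= \underbrace{-\,\bm R\,\bm W_{\mathcal{S}_1\mathcal{S}_1}(\widehat{\bm\Gamma}_{\mathcal{S}_1\mathcal{S}_1})^{-1}}_{T_1}
+ \underbrace{\bm W_{\mathcal{S}^c\mathcal{S}_1}(\bm\Gamma_{\mathcal{S}_1\mathcal{S}_1})^{-1}}_{T_2}
- \underbrace{\bm W_{\mathcal{S}^c\mathcal{S}_1}(\bm\Gamma_{\mathcal{S}_1\mathcal{S}_1})^{-1}\bm W_{\mathcal{S}_1\mathcal{S}_1}(\widehat{\bm\Gamma}_{\mathcal{S}_1\mathcal{S}_1})^{-1}}_{T_3},
\]
and show $\vertiii{T_1}_{\infty},\vertiii{T_2}_{\infty},\vertiii{T_3}_{\infty}$ are all $o_p(1)$; then $\vertiii{\widehat{\bm\Gamma}_{\mathcal{S}^c\mathcal{S}_1}(\widehat{\bm\Gamma}_{\mathcal{S}_1\mathcal{S}_1})^{-1}}_{\infty} \le (1-\xi) + o_p(1)$, which is $\le 1-\tfrac13\xi^2$ (as $\tfrac13\xi^2 < \xi$ for $\xi\in(0,1)$) with probability tending to one, proving the lemma.

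Next I would assemble the two inputs. For the deterministic input, $\vertiii{(\bm\Gamma_{\mathcal{S}_1\mathcal{S}_1})^{-1}}_{\infty} \le \sqrt{q_n}\,\vertiii{(\bm\Gamma_{\mathcal{S}_1\mathcal{S}_1})^{-1}}_{2} \le \sqrt{q_n}/\lambda_{\min}(H(\bm\beta^0)) = \mathcal{O}(\sqrt{q_n})$, using $\vertiii{A}_{\infty}\le\sqrt{(\#\text{columns})}\,\vertiii{A}_{2}$, Cauchy interlacing, and the eigenvalue bound of \textbf{Assumption~\ref{assumption 3.1}}. For the stochastic input I would establish $\vertiii{\bm W}_{\max} = \mathcal{O}_p(\sqrt{\log p_n/n})$: by the computations underlying \textbf{Lemma~\ref{lemma 2.3}}, each edge-by-edge entry of $\widehat{\bm\Gamma}$ is an average over $m$ of a bounded linear combination of products of at most two coordinates of $\bm Y_m$ (the inner sum over $i$ in the score is annihilated by a second differentiation in two edge parameters), hence an average of independent sub-exponential terms, so a Bernstein bound plus a union bound over $\mathcal{O}(p_n^4)$ index pairs, exactly as in the proof of \textbf{Lemma~\ref{lemma 2.2}}, gives the claim. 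I would then use vertex-sparsity of the composite-likelihood Hessian, namely $\partial^2 l_c(\bm\beta^0)/\partial\beta_u\partial\beta_v = 0$ unless the edges carrying $\beta_u$ and $\beta_v$ share a vertex, so every row of $\bm W_{\mathcal{S}^c\mathcal{S}_1}$ and of $\bm W_{\mathcal{S}_1\mathcal{S}_1}$ has at most $\mathcal{O}(s_n)$ nonzero entries; hence $\vertiii{\bm W_{\mathcal{S}^c\mathcal{S}_1}}_{\infty},\ \vertiii{\bm W_{\mathcal{S}_1\mathcal{S}_1}}_{\infty} = \mathcal{O}_p(s_n\sqrt{\log p_n/n})$.

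Combining these, $\vertiii{(\bm\Gamma_{\mathcal{S}_1\mathcal{S}_1})^{-1}\bm W_{\mathcal{S}_1\mathcal{S}_1}}_{\infty} = \mathcal{O}_p(\sqrt{q_n s_n^2\log p_n/n})$, which is $o_p(1)$ because $q_n s_n^2\log p_n \le q_n^2 p_n^{d}\log p_n = o(n)$ (using $s_n^4 = \mathcal{O}(p_n^d)$, $q_n\ge 1$, and the hypothesis $q_n^2 p_n^d\log p_n = o(n)$); a Neumann-series bound then yields $\vertiii{(\widehat{\bm\Gamma}_{\mathcal{S}_1\mathcal{S}_1})^{-1}}_{\infty} \le 2\vertiii{(\bm\Gamma_{\mathcal{S}_1\mathcal{S}_1})^{-1}}_{\infty} = \mathcal{O}_p(\sqrt{q_n})$ on an event of probability tending to one. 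Submultiplicativity of $\vertiii{\cdot}_{\infty}$ now gives $\vertiii{T_1}_{\infty} \le \vertiii{\bm R}_{\infty}\,\vertiii{\bm W_{\mathcal{S}_1\mathcal{S}_1}}_{\infty}\,\vertiii{(\widehat{\bm\Gamma}_{\mathcal{S}_1\mathcal{S}_1})^{-1}}_{\infty} = \mathcal{O}_p(\sqrt{q_n s_n^2\log p_n/n}) = o_p(1)$, $\vertiii{T_2}_{\infty} \le \vertiii{\bm W_{\mathcal{S}^c\mathcal{S}_1}}_{\infty}\,\vertiii{(\bm\Gamma_{\mathcal{S}_1\mathcal{S}_1})^{-1}}_{\infty} = \mathcal{O}_p(\sqrt{q_n s_n^2\log p_n/n}) = o_p(1)$, and $\vertiii{T_3}_{\infty} \le \vertiii{\bm W_{\mathcal{S}^c\mathcal{S}_1}}_{\infty}\,\vertiii{(\bm\Gamma_{\mathcal{S}_1\mathcal{S}_1})^{-1}}_{\infty}\,\vertiii{\bm W_{\mathcal{S}_1\mathcal{S}_1}}_{\infty}\,\vertiii{(\widehat{\bm\Gamma}_{\mathcal{S}_1\mathcal{S}_1})^{-1}}_{\infty} = \mathcal{O}_p(q_n s_n^2\log p_n/n) = o_p(1)$. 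Summing the three completes the argument.

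The hard part will be the second step: proving $\vertiii{\bm W}_{\max} = \mathcal{O}_p(\sqrt{\log p_n/n})$ cleanly across the several functional forms that the edge-by-edge second derivatives take (depending on which conditional likelihoods the two parameters share and whether each is of $\alpha$- or $\theta$-type) and verifying the vertex-sparsity count so that no hidden $s_n$ or $q_n$ enters the per-row estimates. The remaining work is dimension bookkeeping; the one place where care is genuinely needed is the grouping of $T_1$ as $\bm R\,\bm W_{\mathcal{S}_1\mathcal{S}_1}(\widehat{\bm\Gamma}_{\mathcal{S}_1\mathcal{S}_1})^{-1}$ so that the bound $\vertiii{\bm R}_{\infty}\le 1-\xi$ can be invoked directly — splitting it as $\bm\Gamma_{\mathcal{S}^c\mathcal{S}_1}$ times $(\bm\Gamma_{\mathcal{S}_1\mathcal{S}_1})^{-1}$ (each only $\mathcal{O}(\sqrt{q_n})$ in $\vertiii{\cdot}_{\infty}$) would cost an extra factor of $q_n$ and break the rate condition.
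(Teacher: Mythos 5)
Your proposal is correct and follows essentially the same route as the paper: your three-term expansion of $\widehat{\bm\Gamma}_{\mathcal{S}^c\mathcal{S}_1}\widehat{\bm\Gamma}_{\mathcal{S}_1\mathcal{S}_1}^{-1}-\bm R$ is algebraically identical to the paper's decomposition into $\mathcal{I}_1+\mathcal{I}_2+\mathcal{I}_3+\mathcal{I}_4$ (your $T_1,T_2,T_3$ are exactly $\mathcal{I}_2,\mathcal{I}_3,-\mathcal{I}_4$), with the leading term controlled by Assumption~\ref{assumption 3.3} and the perturbation terms by entrywise sub-exponential concentration of the Hessian combined with the $\mathcal{O}(s_n)$ per-row sparsity count, just as in the paper's Lemmas~\ref{lemma 2.5}--\ref{lemma 2.6}. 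The only cosmetic difference is that you control $\vertiii{\widehat{\bm\Gamma}_{\mathcal{S}_1\mathcal{S}_1}^{-1}}_{\infty}$ via a Neumann-series perturbation bound, whereas the paper bounds it by $\sqrt{|\mathcal{S}_1|}$ times the spectral norm using the minimum-eigenvalue concentration of Lemma~\ref{lemma 2.8}; both cost the same $\sqrt{q_n}$ factor and yield the same rate condition.
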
 

\begin{proof}
First we show
\begin{align*}
&\frac{1}{n}
l_c ^{(2)}(\bm{\beta^0})_{\mathcal{S}^{c}\mathcal{S}_1}
\bigg(
\frac{1}{n}
l_c ^{(2)}(\bm{\beta^0})_{\mathcal{S}_1\mathcal{S}_1}
\bigg)^{-1} 
\\
&= \bigg[
H_{\mathcal{S}^{c}\mathcal{S}_1}^{0} 
-
\frac{1}{n}
l_c ^{(2)}(\bm{\beta^0})_{\mathcal{S}^{c}\mathcal{S}_1}
- H_{\mathcal{S}^{c}\mathcal{S}_1}^{0} 
\bigg] 
\cdot
\bigg[
\big(
H_{\mathcal{S}_1\mathcal{S}_1}^{0}\big)^{-1} 
-
\bigg(
\frac{1}{n}
l_c ^{(2)}(\bm{\beta^0})_{\mathcal{S}_1\mathcal{S}_1}
\bigg)^{-1} 
- \big(H_{\mathcal{S}_1\mathcal{S}_1}^{0}\big)^{-1} 
\bigg] 
\\
& 
= 
\underbrace{H_{\mathcal{S}^{c}\mathcal{S}_1}^{0} 
\big(H_{\mathcal{S}_1\mathcal{S}_1}^{0}\big)^{-1}
}_{\mathcal{I}_1} 
+
\underbrace{H_{\mathcal{S}^{c}\mathcal{S}_1}^{0}
\bigg[
\bigg(
-\frac{1}{n}
l_c ^{(2)}(\bm{\beta^0})_{\mathcal{S}_1\mathcal{S}_1}
\bigg)^{-1} 
- \big(H_{\mathcal{S}_1\mathcal{S}_1}^{0}\big)^{-1} 
\bigg]}_{\mathcal{I}_2}
\\
& 
+
\underbrace{
\bigg[
-\frac{1}{n}
l_c ^{(2)}(\bm{\beta^0})_{\mathcal{S}^{c}\mathcal{S}_1}
- H_{\mathcal{S}^{c}\mathcal{S}_1}^{0} 
\bigg]\big(H_{\mathcal{S}_1\mathcal{S}_1}^{0}\big)^{-1}}_{\mathcal{I}_3}
\\
&
+ 
\underbrace{
\bigg[
-\frac{1}{n}
l_c ^{(2)}(\bm{\beta^0})_{\mathcal{S}^{c}\mathcal{S}_1}
- H_{\mathcal{S}^{c}\mathcal{S}_1}^{0} 
\bigg]
\bigg[
\bigg(
-\frac{1}{n}
l_c ^{(2)}(\bm{\beta^0})_{\mathcal{S}_1\mathcal{S}_1}
\bigg)^{-1} 
- \big(H_{\mathcal{S}_1\mathcal{S}_1}^{0}\big)^{-1} 
\bigg]}_{\mathcal{I}_4}.
\end{align*}
\justifying
Therefore, $\vertiii{\frac{1}{n}
l_c ^{(2)}(\bm{\beta^0})_{\mathcal{S}^{c}\mathcal{S}_1}
\big(
\frac{1}{n}
l_c ^{(2)}(\bm{\beta^0})_{\mathcal{S}_1\mathcal{S}_1}
\big)^{-1}
}_{\infty} \leq 
\vertiii{\mathcal{I}_1}_{\infty} +
\vertiii{\mathcal{I}_2}_{\infty} +
\vertiii{\mathcal{I}_3}_{\infty} +
\vertiii{\mathcal{I}_4}_{\infty}
.$ By \textbf{Assumption 5}, $\vertiii{\mathcal{I}_1}_{\infty} \leq 1-\xi.$ The second term can be formulated as
\begin{small}
\begin{align*}
\mathcal{I}_2 & = 
H_{\mathcal{S}^{c}\mathcal{S}_1}^{0}
\bigg[
\bigg(
- \frac{1}{n}
l_c ^{(2)}(\bm{\beta^0})_{\mathcal{S}_1\mathcal{S}_1}
\bigg)^{-1} 
- \big(H_{\mathcal{S}_1\mathcal{S}_1}^{0}\big)^{-1} 
\bigg] 
\\
&
= 
H_{\mathcal{S}^{c}\mathcal{S}_1}^{0}
\big(H_{\mathcal{S}_1\mathcal{S}_1}^{0}\big)^{-1}
\bigg[
H_{\mathcal{S}_1\mathcal{S}_1}^{0}
-
\bigg(
-\frac{1}{n} l_c ^{(2)}(\bm{\beta^0})_{\mathcal{S}_1\mathcal{S}_1}
\bigg)
\bigg]
\bigg(
-
\frac{1}{n} l_c ^{(2)}(\bm{\beta^0})_{\mathcal{S}_1\mathcal{S}_1}
\bigg)^{-1}.
\end{align*}
\end{small}
By \textbf{Lemma} \ref{lemma 2.6}, \textbf{Lemma} \ref{lemma 2.8} and \textbf{Assumption 5}, we show that
\begin{align*}
\vertiii{\mathcal{I}_2}_{\infty} 
&
\leq
\vertiii{
H_{\mathcal{S}^{c}\mathcal{S}_1}^{0}
\big(
H_{\mathcal{S}_1\mathcal{S}_1}^{0}\big)^{-1}
}_{\infty} 
\vertiii{
-\frac{1}{n} l_c ^{(2)}(\bm{\beta^0})_{\mathcal{S}_1\mathcal{S}_1}
- H_{\mathcal{S}_1\mathcal{S}_1}^{0}
}_{\infty}
\vertiii{
\bigg(
-
\frac{1}{n} l_c ^{(2)}(\bm{\beta^0})_{\mathcal{S}_1\mathcal{S}_1}
\bigg)^{-1}
}_{\infty} 
\\
& 
\leq
(1-\xi)  
\vertiii{
-
\frac{1}{n} l_c ^{(2)}(\bm{\beta^0})_{\mathcal{S}_1\mathcal{S}_1}
- H_{\mathcal{S}_1\mathcal{S}_1}^{0}
}_{\infty}
\sqrt{|\mathcal{S}_1|}
\vertiii{
\bigg(
-
\frac{1}{n} l_c ^{(2)}(\bm{\beta^0})_{\mathcal{S}_1\mathcal{S}_1}
\bigg)^{-1}
}_{2} 
\\
&
\leq
(1-\xi)\frac{\tau_{-} \epsilon}{\sqrt{\abs{\mathcal{S}_1}}}
\frac{\sqrt{\abs{\mathcal{S}_1}} }{\tau_{-}}
=
(1-\xi) \epsilon ,
\end{align*}
\justifying with a probability at least
$
1- 8 \exp \Big\{ - \min 
\big\{
\frac{C_3 n (\epsilon^{\prime})^{2} }{128\mathcal{K}_3 q_n^2},
\frac{n \epsilon^{\prime}}{16 q_n c_{\ast}}
\big\}
+ \log{q_n}
\Big\}
- 8  \exp \Big\{ - \min 
\big\{
\frac{C_3 n (\epsilon^{\prime})^{2} }{128 \mathcal{K}_3 q_n^2 s_n^2 }, \\
\frac{n \epsilon^{\prime}}{16 q_n s_n c_{\ast}}
\big\}
+ \log{q_n} + \log{s_n}
\Big\}
- 8 \exp \Big\{ - \min
\big\{
\frac{C_3 n \tau_{-}^2 \epsilon^2 }{128 \mathcal{K}_3 q_n},
\frac{n \tau_{-} \epsilon }{16 \sqrt{q_n} c_{\ast} }
\big\}
+ \log{q_n}
\Big\}
- 8 \exp \Big\{ - \min
\big\{
\frac{C_3 n \tau_{-}^2 \epsilon^2 }{128 \mathcal{K}_3 s_n^2 q_n}, \\
\frac{n \tau_{-} \epsilon }{16 \sqrt{q_n} s_n c_{\ast} }
\big\}
+ \log{q_n} + \log{s_n}
\Big\}. 
$ \\
Likewise, by \textbf{Lemmas} \ref{lemma 2.6}, \ref{lemma 2.7} and \ref{lemma 2.8}, with a probability at least
$
1 - 8 \exp \Big\{ - \min 
\big\{
\frac{C_3 n (\epsilon^{\prime})^{2} }{128\mathcal{K}_3 q_n^2},
\frac{n \epsilon^{\prime}}{16 q_n c_{\ast}}
\big\} \\
+ \log{q_n}
\Big\}
- 8 \exp \Big\{ - \min 
\big\{
\frac{C_3 n (\epsilon^{\prime})^{2} }{128 \mathcal{K}_3 q_n^2 s_n^2 }, 
\frac{n \epsilon^{\prime}}{16 q_n s_n c_{\ast}}
\big\}
+ \log{q_n} + \log{s_n}
\Big\}  
-
8 \exp \Big\{ - \min
\big\{
\frac{C_3 n \tau^2 \epsilon^{2} }{128\mathcal{K}_3 q_n},
\frac{n \tau \epsilon}{16 \sqrt{q_n} c_{\ast}}
\big\} 
+ \log{(p_n^2 - p_n - q_n)}
\Big\}
- 
8 \exp \Big\{ - \min
\big\{
\frac{C_3 n \tau^2 \epsilon^{2} }{128\mathcal{K}_3 q_n s_n^2},
\frac{n \tau \epsilon}{16 \sqrt{q_n} s_n c_{\ast}}
\big\}
+ \log{(p_n^2 - p_n - q_n)} + \log{s_n}
\Big\},
$
the upper bound for $\vertiii{\mathcal{I}_3}_{\infty}$ and $\vertiii{\mathcal{I}_4}_{\infty}$ 
can be formulated as follows
\begin{align*}
\vertiii{\mathcal{I}_3}_{\infty} 
&
\leq
\vertiii{
-
\frac{1}{n} l_c ^{(2)}(\bm{\beta^0})_{\mathcal{S}^{c}\mathcal{S}_1}
- H_{\mathcal{S}^{c}\mathcal{S}_1}^{0}
}_{\infty}
\vertiii{
\big(
H_{\mathcal{S}_1\mathcal{S}_1}^{0}\big)^{-1}
}_{\infty} 
\\
&
\leq 
\vertiii{
-
\frac{1}{n} l_c ^{(2)}(\bm{\beta^0})_{\mathcal{S}^{c}\mathcal{S}_1}
- H_{\mathcal{S}^{c}\mathcal{S}_1}^{0}
}_{\infty}
\sqrt{|\mathcal{S}_1|}
\vertiii{
\big(
H_{\mathcal{S}_1\mathcal{S}_1}^{0}\big)^{-1}
}_{2} 
\leq 
\frac{\tau\epsilon}{\sqrt{\abs{\mathcal{S}_1}}}
\frac{\sqrt{\abs{\mathcal{S}_1}}}{\tau}
=
\epsilon.
\end{align*}
For $\mathcal{I}_4$, we first show that 
\begin{align*}
\bigg( 
-
\frac{1}{n} l_c ^{(2)}(\bm{\beta^0})_{\mathcal{S}_1\mathcal{S}_1} 
\bigg)^{-1}
-
\big(
H^{0}_{\mathcal{S}_1\mathcal{S}_1}\big)^{-1}
=
\big(H_{\mathcal{S}_1\mathcal{S}_1}^{0}\big)^{-1}
\bigg[
H_{\mathcal{S}_1\mathcal{S}_1}^{0}
- \bigg(
-
\frac{1}{n} l_c ^{(2)}(\bm{\beta^0})_{\mathcal{S}_1\mathcal{S}_1} 
\bigg)
\bigg]
\bigg\{
-
\frac{1}{n} l_c ^{(2)}(\bm{\beta^0})_{\mathcal{S}_1\mathcal{S}_1} 
\bigg\}^{-1}.
\end{align*}
The infinity norm becomes
\small
\begin{align*}
& \vertiii{  
\bigg\{ 
-
\frac{1}{n} l_c ^{(2)}(\bm{\beta^0})_{\mathcal{S}_1\mathcal{S}_1} 
\bigg\}^{-1}
-
\big(H_{\mathcal{S}_1\mathcal{S}_1}^{0}\big)^{-1}
}_{\infty} 
=
\vertiii{
\big(H_{\mathcal{S}_1\mathcal{S}_1}^{0}\big)^{-1}
\bigg[
H_{\mathcal{S}_1\mathcal{S}_1}^{0}
+
\frac{1}{n} l_c ^{(2)}(\bm{\beta^0})_{\mathcal{S}_1\mathcal{S}_1} 
\bigg]
\bigg\{
-
\frac{1}{n} l_c ^{(2)}(\bm{\beta^0})_{\mathcal{S}_1\mathcal{S}_1} 
\bigg\}^{-1}
}_{\infty} 
\\
&
\leq 
\sqrt{|\mathcal{S}_1|} \cdot
\vertiii{
\big(H_{\mathcal{S}_1\mathcal{S}_1}^{0}\big)^{-1}
\bigg[
H_{\mathcal{S}_1\mathcal{S}_1}^{0}
-
\bigg(-
\frac{1}{n} l_c ^{(2)}(\bm{\beta^0})_{\mathcal{S}_1\mathcal{S}_1} 
\bigg)
\bigg]
\bigg\{
-
\frac{1}{n} l_c ^{(2)}(\bm{\beta^0})_{\mathcal{S}_1\mathcal{S}_1} 
\bigg\}^{-1}
}_{2} 
\\
&
\leq 
\sqrt{|\mathcal{S}_1|} \cdot
\vertiii{
\big(H_{\mathcal{S}_1\mathcal{S}_1}^{0}\big)^{-1}
}_{2} 
\cdot
\vertiii{
H_{\mathcal{S}_1\mathcal{S}_1}^{0}
-
\bigg(
-
\frac{1}{n} l_c ^{(2)}(\bm{\beta^0})_{\mathcal{S}_1\mathcal{S}_1}
\bigg) 
}_{2}
\cdot
\vertiii{
\bigg\{
-
\frac{1}{n} l_c ^{(2)}(\bm{\beta^0})_{\mathcal{S}_1\mathcal{S}_1} 
\bigg\}^{-1}
}_{2} .
\end{align*}
\normalsize
By \textbf{Lemma} \ref{lemma 2.8}, we show that with the probability at least 
$
1- 8 \exp \Big\{ -\min \big\{
\frac{C_3 n (\epsilon^{\prime})^2}{128 \mathcal{K}_3 q_n^2},
\frac{n \epsilon^{\prime}}{16 q_n c_{\ast}}
\big\}
+ \log{q_n}
\Big\}
-
8 \exp \Big\{ -\min \big\{
\frac{C_3 n (\epsilon^{\prime})^2}{128 \mathcal{K}_3 q_n^2 s_n^2},
\frac{n \epsilon^{\prime}}{16 q_n s_n c_{\ast}}
\big\}
+ \log{q_n} + \log{s_n}
\Big\}
$
and some constant $\tau_{-} = \tau - \epsilon^{\prime} > 0,$
$$
\vertiii{
\bigg\{
-
\frac{1}{n} l_c ^{(2)}(\bm{\beta^0})_{\mathcal{S}_1\mathcal{S}_1} 
\bigg\}^{-1}
}_{2} 
\leq \frac{1}{\tau_{-}}.
$$
We also have
$
\vertiii{
\big(H_{\mathcal{S}_1\mathcal{S}_1}^{0}\big)^{-1}
}_{2} \leq \frac{1}{\tau},
$ and thus, we show that
\begin{align*}
\vertiii{  
\bigg\{ 
-
\frac{1}{n} l_c ^{(2)}(\bm{\beta^0})_{\mathcal{S}_1\mathcal{S}_1} 
\bigg\}^{-1}
-
\big(H_{\mathcal{S}_1\mathcal{S}_1}^{0}\big)^{-1}
}_{\infty} 
\leq 
\frac{\sqrt{|\mathcal{S}_1|}}{\tau \cdot \tau_{-}}
\vertiii{
H_{\mathcal{S}_1\mathcal{S}_1}^{0}
-
\bigg(
-
\frac{1}{n} l_c ^{(2)}(\bm{\beta^0})_{\mathcal{S}_1\mathcal{S}_1} 
\bigg)
}_{2}.
\end{align*}
Finally, we can calculate the upper bound of $\vertiii{\mathcal{I}_4}_{\infty}$ as
\begin{align*}
\vertiii{\mathcal{I}_4}_{\infty} 
&
\leq
\vertiii{
-
\frac{1}{n}
l_c ^{(2)}(\bm{\beta^0})_{\mathcal{S}^{c}\mathcal{S}_1}
- H_{\mathcal{S}^{c}\mathcal{S}_1}^{0} 
}_{\infty}
\vertiii{
\bigg(
-
\frac{1}{n}
l_c ^{(2)}(\bm{\beta^0})_{\mathcal{S}_1\mathcal{S}_1}
\bigg)^{-1} 
- \big(H_{\mathcal{S}_1\mathcal{S}_1}^{0}\big)^{-1}
}_{\infty}
\\
&
\leq 
\vertiii{
-
\frac{1}{n}
l_c ^{(2)}(\bm{\beta^0})_{\mathcal{S}^{c}\mathcal{S}_1}
- H_{\mathcal{S}^{c}\mathcal{S}_1}^{0} 
}_{\infty}
\frac{\sqrt{|\mathcal{S}_1|}}{\tau \cdot \tau_{-}}
\vertiii{
H_{\mathcal{S}_1\mathcal{S}_1}^{0}
-
\bigg(
-
\frac{1}{n} l_c ^{(2)}(\bm{\beta^0})_{\mathcal{S}_1\mathcal{S}_1} 
\bigg)
}_{2}
\\
&
\leq
\frac{\tau\epsilon}{\sqrt{|\mathcal{S}_1|}}
\frac{\sqrt{|\mathcal{S}_1|}}{\tau \cdot \tau_{-} }
\epsilon^{\prime}
=
\frac{\epsilon^{\prime}} {\tau_{-}}
\epsilon .
\end{align*}
We let $\epsilon^{\prime} < \tau_{-} = \tau - \epsilon^{\prime} (\text{which is equivalent to } \epsilon^{\prime} < 0.5\tau), \epsilon < \frac{\xi}{3},$ then we have 
\begin{align*}
\vertiii{\frac{1}{n}
l_c ^{(2)}(\bm{\beta^0})_{\mathcal{S}^{c}\mathcal{S}_1}
\bigg(
\frac{1}{n}
l_c ^{(2)}(\bm{\beta^0})_{\mathcal{S}_1\mathcal{S}_1}
\bigg)^{-1}}_{\infty}
&
\leq
\vertiii{\mathcal{I}_1}_{\infty}
+
\vertiii{\mathcal{I}_2}_{\infty}
+
\vertiii{\mathcal{I}_3}_{\infty}
+
\vertiii{\mathcal{I}_4}_{\infty}
\\
&
\leq 
1-\xi + (1-\xi)\epsilon + \epsilon +
\frac{\epsilon^{\prime}} {\tau_{-}}
\epsilon
\\
&
\leq
1 - \xi + \xi - \frac{1}{3} \xi^2 = 1 - \frac{1}{3}\xi^2
\end{align*}
with the probability at least
$
1 - 8 \exp \Big\{ - \min 
\big\{
\frac{C_3 n (\epsilon^{\prime})^{2} }{128\mathcal{K}_3 q_n^2},
\frac{n \epsilon^{\prime}}{16 q_n c_{\ast}}
\big\}
+ \log{q_n}
\Big\}
- 8  \exp \Big\{ - \min 
\big\{
\frac{C_3 n (\epsilon^{\prime})^{2} }{128 \mathcal{K}_3 q_n^2 s_n^2 }, \\
\frac{n \epsilon^{\prime}}{16 q_n s_n c_{\ast}}
\big\}
+ \log{q_n} + \log{s_n}
\Big\}
- 8 \exp \Big\{ - \min
\big\{
\frac{C_3 n \tau_{-}^2 \epsilon^2 }{128 \mathcal{K}_3 q_n},
\frac{n \tau_{-} \epsilon }{16 \sqrt{q_n} c_{\ast} }
\big\}
+ \log{q_n}
\Big\}
- 8 \exp \Big\{ - \min
\big\{
\frac{C_3 n \tau_{-}^2 \epsilon^2 }{128 \mathcal{K}_3 s_n^2 q_n}, \\
\frac{n \tau_{-} \epsilon }{16 \sqrt{q_n} s_n c_{\ast} }
\big\}
+ \log{q_n} + \log{s_n}
\Big\}
- 8 \exp \Big\{ - \min
\big\{
\frac{C_3 n \tau^2 \epsilon^{2} }{128\mathcal{K}_3 q_n},
\frac{n \tau \epsilon}{16 \sqrt{q_n} c_{\ast}}
\big\}
+ \log{(p_n^2 - p_n - q_n)}
\Big\}
- 
8 \exp \Big\{ - \min
\big\{
\frac{C_3 n \tau^2 \epsilon^{2} }{128\mathcal{K}_3 q_n s_n^2},
\frac{n \tau \epsilon}{16 \sqrt{q_n} s_n c_{\ast}}
\big\}
+ \log{(p_n^2 - p_n - q_n)} + \log{s_n}
\Big\}.
$
\end{proof}

\begin{lemma} \label{lemma 2.5}
For 
$
\forall \;
s, w \in \mathcal{E}
$ and 
$ \forall \; j = 1,...,p_n
$, regarding all the second order partial derivatives of $-l_c(\bm{\beta})$ at $\bm{\beta} = \bm{\beta}^0,$ we have
\begin{align*}
\max_{s} \abs{ \frac{\partial^2 - {l}_c(\bm{\beta^0})}{\partial \theta_{s} \partial \theta_{w}}
- E\{ \frac{\partial^2 - {l}_c(\bm{\beta^0})}{\partial \theta_{s}
\partial \theta_{w}} \}} 
&= \mathcal{O}_p
\big\{
(n\log{p_n})^{\frac{1}{2}}
\big\},
\\
\max_{s,j} \abs{ \frac{\partial^2 - {l}_c(\bm{\beta^0})}{\partial \theta_{s} \partial \theta_{jj}}
- E\{ \frac{\partial^2 - {l}_c(\bm{\beta^0})}{\partial \theta_{s}
\partial \theta_{jj}} \}} 
&= \mathcal{O}_p
\big\{
(ns_n^2\log{p_n})^{\frac{1}{2}}
\big\},
\\
\max_{j} \abs{ \frac{\partial^2 - {l}_c(\bm{\beta^0})}{\partial \theta_{jj}^2}
- E\{ \frac{\partial^2 - {l}_c(\bm{\beta^0})}{\partial \theta_{jj}^2} \}} 
&= \mathcal{O}_p
\big\{
(ns_n^4\log{p_n})^{\frac{1}{2}}
\big\}.
\end{align*}

\begin{proof}
Following the approach in \textbf{Lemma} \ref{lemma 2.2}, for $\forall \; \epsilon > 0$, we are able to provide the probabilities outlined below, 
\begin{align*}
&
P\bigg\{
\abs{
\frac{\partial^2 -{l}_c(\bm{\beta^0})}
{\partial \theta_{s} \partial \theta_{w}} 
-
E\big\{  
\frac{\partial^2 -{l}_c(\bm{\beta^0})}
{\partial \theta_{s} \partial \theta_{w}}  
\big\}
}
\geq \epsilon
\bigg\} 
\leq 
2 \exp \bigg\{ -\min
\Big\{
\frac{C_3\epsilon^2}{2n \mathcal{K}_3},
\frac{\epsilon}{2c_{\ast}}
\Big\}
\bigg\},
\\
& 
P\bigg\{
\abs{
\frac{\partial^2 -{l}_c(\bm{\beta^0})}
{\partial \theta_{s}^2} 
-
E\big\{  
\frac{\partial^2 -{l}_c(\bm{\beta^0})}
{\partial \theta_{s}^2}  
\big\}
}
\geq \epsilon
\bigg\} 
\leq 
4 \exp \bigg\{ -\min
\Big\{
\frac{C_3\epsilon^2}{8n \mathcal{K}_3},
\frac{\epsilon}{4c_{\ast}}
\Big\}
\bigg\},
\\
& 
P\bigg\{
\abs{
\frac{\partial^2 -{l}_c(\bm{\beta^0})}
{\partial \theta_{s} \partial \theta_{jj}} 
-
E\big\{  
\frac{\partial^2 -{l}_c(\bm{\beta^0})}
{\partial \theta_{s} \partial \theta_{jj}}  
\big\}
}
\geq \epsilon
\bigg\} 
\leq
4 \exp \bigg\{ -\min
\Big\{
\frac{C_3\epsilon^2}{8\mathcal{K}_3ns_n^2},
\frac{\epsilon}{4c_{\ast}s_n}
\Big\}
+ \log{s_n}
\bigg\},
\\
&
P\bigg\{
\abs{
\frac{\partial^2 -{l}_c(\bm{\beta^0})}
{\partial \theta_{jj}^2} 
-
E\big\{  
\frac{\partial^2 -{l}_c(\bm{\beta^0})}
{\partial \theta_{jj}^2}  
\big\}
}
\geq \epsilon
\bigg\} 
\leq
6 \exp \bigg\{ -\min
\Big\{
\frac{C_3\epsilon^2}{98\mathcal{K}_3ns_n^2},
\frac{\epsilon}{14c_{\ast}s_n}
\Big\}
+ \log{s_n}
\bigg\}
\\
& +
4 \exp \bigg\{ -\min
\Big\{
\frac{2C_3\epsilon^2}{49\mathcal{K}_3ns_n^4},
\frac{\epsilon}{7c_{\ast}s_n^2}
\Big\}
+ \log{s_n} + \log{(s_n-1)}
\bigg\},
\end{align*}
for some positive constants $C_3, \mathcal{K}_3, c_{\ast} > 0$ and $\forall \; j=1,\cdots, p_n$, and $\forall \; s, w \in \mathcal{E}$, $s \neq w$. The detailed proof is similar to \textbf{Lemma} \ref{lemma 2.2} and therefore, is omitted.
\end{proof}
\end{lemma}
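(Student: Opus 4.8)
The plan is to mimic the proof of Lemma~\ref{lemma 2.2}: for each second-order partial derivative of $-l_c(\bm\beta)$ evaluated at $\bm\beta^0$, I would write it as a sum over the $n$ independent observations of sub-exponential random variables, apply a Bernstein-type two-regime tail inequality to each such sum, take a Bonferroni union bound over all relevant index pairs, and finally choose the deviation level $t$ so that the resulting probability vanishes.

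First I would start from the explicit formulas for $\partial^2(-l_c)/\partial\theta_s^2$, $\partial^2(-l_c)/(\partial\theta_s\partial\theta_{jj})$ and $\partial^2(-l_c)/\partial\theta_{jj}^2$ recorded in the proof of Lemma~\ref{lemma 2.3}. Since $0\le x_m\le 1$ and, by Assumption~\ref{assumption 2.1} (equivalently the eigenvalue bounds of Assumption~\ref{assumption 3.1}), the conditional-variance factors $[x_m\theta^0_{jj}+(1-x_m)\alpha^0_{jj}]^{-1}$ and the parameters $\theta^0_{ij},\alpha^0_{ij}$ are uniformly bounded, every deterministic coefficient appearing in those formulas is bounded. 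By Notation~3.1 each product $y_{mi}y_{mj}$ and each square $y_{mj}^2$ is sub-exponential, and by Lemma~\ref{lemma 2.1} its sub-exponential variance proxy is bounded by a fixed constant; a finite sum over the independent index $m$ is therefore again sub-exponential with variance proxy of order $n$ and scale parameter of order $1$. This directly disposes of $\partial^2(-l_c)/\partial\theta_s^2 = \sum_m([\cdot]^{-1}x_m^2 y_{mb}^2+[\cdot]^{-1}x_m^2 y_{ma}^2)$, which yields the $(n\log p_n)^{1/2}$ rate.

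The two remaining cases are where the sparsity of $\bm\beta^0$ matters. For the mixed derivative $\partial^2(-l_c)/(\partial\theta_s\partial\theta_{jj})$, nonzero only when $j\in\{a,b\}$ with $s=(a,b)$, there is an inner sum $\sum_{i\ne a}[x_m\theta^0_{ai}+(1-x_m)\alpha^0_{ai}]y_{mi}$ containing at most $s_n$ nonzero terms by the row sparsity; interchanging summation orders as in Lemma~\ref{lemma 2.2}, the derivative becomes $\sum_i Z_i$ over at most $s_n$ variables $Z_i$, each sub-exponential (over $m$) with variance proxy of order $n$ and scale of order $1$, so a Bonferroni split at level $t/s_n$ gives a tail bound of the form $2\exp\{-\min(c_1 t^2/(ns_n^2),c_2 t/s_n)+\log s_n\}$, i.e.\ an effective variance proxy of order $ns_n^2$ and scale of order $s_n$. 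For $\partial^2(-l_c)/\partial\theta_{jj}^2$, the squared inner sum $(\sum_{i\ne j}[\cdot]y_{mi})^2$ expands into $O(s_n^2)$ cross terms $y_{mi}y_{mi'}$, so the same argument with a split over $O(s_n^2)$ pieces produces an effective variance proxy of order $ns_n^4$ and scale of order $s_n^2$ (the extra additive term $\tfrac12\sum_m x_m^2[\cdot]^{-2}$ is deterministic and irrelevant to the deviation). Applying the two-regime sub-exponential bound and then a Bonferroni union bound over the $O(p_n^2)$ pairs $(s,w)$, over the $O(p_n^2)$ pairs $(s,j)$ with nonzero mixed derivative, and over the $O(p_n)$ indices $j$, and choosing $t=(\mathcal{K}n\log p_n)^{1/2}$, $(\mathcal{K}ns_n^2\log p_n)^{1/2}$, $(\mathcal{K}ns_n^4\log p_n)^{1/2}$ respectively with $\mathcal{K}$ large enough, the quadratic branch of each exponent is of order $\log p_n$ while the linear branch is of the larger order $(n\log p_n)^{1/2}$ (using $\log p_n=o(n)$ from Assumption~\ref{assumption 3.2}), so each union-bound probability is at most a constant times $\exp\{-c'\log p_n+\log s_n\}\to 0$, again because $\log s_n=O(\log p_n)$ by Assumption~\ref{assumption 3.2}. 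This gives the three $\mathcal{O}_p$ rates.

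The only genuinely delicate part — hence the main obstacle — is the bookkeeping of the sub-exponential parameters $(v^2,c)$ through the inner-sum decompositions: one must verify that the row sparsity $s_n$ really caps the number of nonzero inner terms, that the intermediate Bonferroni split combined with the independent-sum accumulation $\sum_m v_{m,ij}^2=O(n)$ (which uses the bounded-moment Lemma~\ref{lemma 2.1}) produces exactly the claimed $ns_n^2$ and $ns_n^4$ variance proxies, and that the scale parameters inflate only by the factors $s_n$ and $s_n^2$. Everything downstream is the same Bernstein-plus-union-bound computation already carried out in full for the score vector in Lemma~\ref{lemma 2.2}, so it can be referenced rather than reproduced.
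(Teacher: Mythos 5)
Your proposal is correct and follows essentially the same route as the paper: the authors themselves only sketch this proof by asserting the four Bernstein-type tail bounds and deferring to the sub-exponential decomposition, intermediate Bonferroni split over the (at most $s_n$ or $s_n(s_n-1)$) nonzero inner-sum terms, and final union bound already carried out in Lemma~2.2. Your bookkeeping of the variance proxies ($n$, $ns_n^2$, $ns_n^4$) and scale parameters ($1$, $s_n$, $s_n^2$), and your choice of deviation levels $t$, reproduce exactly the rates and the structure of the stated probability bounds, up to immaterial constants.
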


\begin{lemma} \label{lemma 2.6}
For $\forall \; \epsilon > 0$, we show that 
\begin{align*}
& 
P\Bigg\{
\vertiii{ -\frac{1}{n} l_c ^{(2)}(\bm{\beta^0})_{\mathcal{S}_1\mathcal{S}_1}
- 
E\bigg\{
-
\frac{1}{n}
l_c ^{(2)}(\bm{\beta^0})_{\mathcal{S}_1\mathcal{S}_1}
\bigg\}
}_{\infty}
\geq \epsilon
\Bigg\}
\\
&
\leq 
8 \exp\Bigg\{ -\min 
\bigg\{
\frac{C_3n\epsilon^2}{128\mathcal{K}_3},
\frac{n\epsilon}{16c_{\ast}}
\bigg\}
+
\log{q_n}
\Bigg\} 
+ 
8 \exp \Bigg\{ -\min 
\bigg\{
\frac{C_3n\epsilon^2}{128\mathcal{K}_3s_n^2},
\frac{n \epsilon}{16s_nc_{\ast}}
\bigg\}
+
\log{q_n} + \log{s_n}
\Bigg\} ,
\\
&
P\Bigg\{
\vertiii{ -\frac{1}{n} l_c ^{(2)}(\bm{\beta^0})_{\mathcal{S}^c\mathcal{S}_1}
- 
E\bigg\{
-
\frac{1}{n}
l_c ^{(2)}(\bm{\beta^0})_{\mathcal{S}^c\mathcal{S}_1}
\bigg\}
}_{\infty}
\geq \epsilon
\Bigg\}
\\
&
\leq 
8 \exp\Bigg\{ -\min 
\bigg\{
\frac{C_3n\epsilon^2}{128\mathcal{K}_3},
\frac{n\epsilon}{16c_{\ast}}
\bigg\}
+
\log{(p_n^2 - p_n - q_n)}
\Bigg\} 
\\
&
+ 
8 \exp \Bigg\{ -\min 
\bigg\{
\frac{C_3n\epsilon^2}{128\mathcal{K}_3s_n^2},
\frac{n \epsilon}{16s_nc_{\ast}}
\bigg\}
+
\log{(p_n^2 - p_n - q_n)}
+ 
\log{s_n}
\Bigg\} .
\end{align*}
\end{lemma}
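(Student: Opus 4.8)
The plan is to reduce the matrix $\ell_\infty$-norm concentration to the entrywise concentration already established in \textbf{Lemma} \ref{lemma 2.5}, using the sparsity of the composite-likelihood Hessian to bound the number of relevant terms in each row. Since $\vertiii{\bm A}_{\infty}=\max_{r}\sum_{c}\abs{a_{rc}}$, it suffices to bound, for each fixed row index $r$ (ranging over $\mathcal{S}_1$ in the first inequality and over $\mathcal{S}^c$ in the second), the probability that $\sum_{c\in\mathcal{S}_1}\frac1n\abs{\{l_c^{(2)}(\bm\beta^0)\}_{rc}-E\{l_c^{(2)}(\bm\beta^0)\}_{rc}}\geq\epsilon$, and then to take a Bonferroni union bound over the $q_n$ (resp.\ $p_n^2-p_n-q_n$) rows; this is the source of the $+\log q_n$ and $+\log(p_n^2-p_n-q_n)$ terms in the exponents.

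First I would record the key structural fact: a second-order partial derivative of $-l_c$ with respect to two off-diagonal parameters vanishes unless their two index pairs share a common vertex, since the conditional log-likelihood attached to vertex $j$ only couples parameters incident to $j$ (cf.\ the vanishing of $\partial^2(-l_c)/\partial\theta_s\partial\theta_{jj}$ for $j\neq a,b$ in \textbf{Lemma} \ref{lemma 2.3}). Using the definition of $s_n$ as the maximal per-row number of nonzero parameters of $\bm Q_0,\bm P_1,\dots,\bm P_H$, it follows that every row of $l_c^{(2)}(\bm\beta^0)_{\mathcal{S}_1\mathcal{S}_1}$ and of $l_c^{(2)}(\bm\beta^0)_{\mathcal{S}^c\mathcal{S}_1}$ has only $\mathcal{O}(s_n)$ nonzero entries. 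Within such a row I would separate the $\mathcal{O}(1)$ ``self-type'' entries --- those sharing the same index pair as the row, e.g.\ $\partial^2(-l_c)/\partial\theta_s^2$ or $\partial^2(-l_c)/\partial\alpha_s\partial\theta_s$, each of which is a sum of two sub-exponential partial sums --- from the remaining $\mathcal{O}(s_n)$ ``off-diagonal'' entries, each of which is a single sub-exponential partial sum with bounded sub-exponential parameters (the inverses of the diagonal terms being bounded near $\bm\beta^0$ by \textbf{Assumption} \ref{assumption 3.1}, and $x_m\in[0,1]$).

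The concentration step then follows mechanically. Allocating half the budget to the self-type entries and invoking the sub-exponential tail bound behind \textbf{Lemma} \ref{lemma 2.5} with deviation of order $\epsilon$ gives, after the row union, a contribution of the form $8\exp\{-\min\{C_3n\epsilon^2/(128\mathcal{K}_3),\,n\epsilon/(16c_{\ast})\}+\log q_n\}$ (the factor $8$ coming from splitting each such entry into its two partial sums and combining with the row union); allocating the other half across the $\mathcal{O}(s_n)$ off-diagonal entries forces each to be controlled at level of order $\epsilon/s_n$, which after a union over the $\mathcal{O}(s_n)$ entries and then over the rows yields the second term $8\exp\{-\min\{C_3n\epsilon^2/(128\mathcal{K}_3 s_n^2),\,n\epsilon/(16s_nc_{\ast})\}+\log q_n+\log s_n\}$. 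Replacing $q_n$ by $p_n^2-p_n-q_n$ throughout (noting that a row of $l_c^{(2)}(\bm\beta^0)_{\mathcal{S}^c\mathcal{S}_1}$ still carries at most one self-type entry, namely the one indexed by the $\alpha$/$\theta$ companion of the row's index pair) produces the bound for the $\mathcal{S}^c\mathcal{S}_1$ block.

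I expect the main obstacle to be the bookkeeping rather than any hard inequality: one must verify the $\mathcal{O}(s_n)$ row-sparsity uniformly over all rows of both blocks from the definition of $s_n$, and --- more delicately --- keep straight which entries concentrate at the fast rate $(n\log p_n)^{1/2}$ and which, only after the within-row union bound, degrade to the slower rate $(ns_n^2\log p_n)^{1/2}$; getting this dichotomy right is exactly what produces the two separate exponential terms with their respective $s_n$-dependent denominators. Once the split is fixed, the remainder reduces to \textbf{Lemma} \ref{lemma 2.5} and elementary Bonferroni arguments.
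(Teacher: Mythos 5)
Your proposal is correct and follows essentially the same route as the paper's proof: rewrite the $\vertiii{\cdot}_{\infty}$ norm as a maximum absolute row sum, use the vertex-sharing structure of the composite-likelihood Hessian to show each row has at most $\mathcal{O}(s_n)$ nonzero entries, split the deviation budget between the $\mathcal{O}(1)$ self-type entries and the $\mathcal{O}(s_n)$ cross entries (controlled at level $\epsilon/s_n$, whence the $s_n^2$ and $\log s_n$ in the second exponential), invoke \textbf{Lemma} \ref{lemma 2.5}, and finish with Bonferroni over the $q_n$ (resp.\ $p_n^2-p_n-q_n$) rows. The decomposition, the counting of nonzero terms per row, and the resulting constants all match the paper's argument.
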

\begin{proof}
For any matrix $A$, the $\infty$ norm is defined as 
$$
\vertiii{A}_{\infty} = 
\max_{1\leq i \leq p_n} 
\sum_{j=1}^{p_n}
|
A_{ij} 
|.
$$
Let $l_{kl}^{(2)}(\bm{\beta^0})$ denote the $(k,l)th$ entry of $l_c^{(2)}(\bm{\beta^0})$. The probability can be formulated as
\begin{align*}
&
P\Bigg\{
\vertiii{ -\frac{1}{n} l_c ^{(2)}(\bm{\beta^0})_{\mathcal{S}_1\mathcal{S}_1}
- 
E\bigg\{
-
\frac{1}{n}
l_c ^{(2)}(\bm{\beta^0})_{\mathcal{S}_1\mathcal{S}_1}
\bigg\}
}_{\infty}
\geq \epsilon
\Bigg\}
\\
&
=
P\Bigg\{
\max_{k \in \mathcal{S}_1}
\frac{1}{n}\sum_{l \in \mathcal{S}_1}
\abs{
-
l_{kl}^{(2)} (\bm{\beta^0}) 
- E\big\{
-
l_{kl}^{(2)} (\bm{\beta^0}) 
\big\}
} > \epsilon
\Bigg\}.
\end{align*}
\justifying 
In particular, for $\forall \; s, w \in \mathcal{S}_1, s \neq w$,
the row summation may include
\begin{align*}
& \abs{
\frac{\partial^2 -{l}_c(\bm{\beta^0})}
{\partial \theta_{s}^2} 
-
E\big\{  
\frac{\partial^2 -{l}_c(\bm{\beta^0})}
{\partial \theta_{s}^2}  
\big\}
} ,
\quad
& \abs{
\frac{\partial^2 -{l}_c(\bm{\beta^0})}
{\partial \theta_{s} \partial \alpha_{s}} 
-
E\big\{  
\frac{\partial^2  -{l}_c(\bm{\beta^0})}
{\partial \theta_{s} \partial \alpha_{s}}  
\big\}
}  ,
\\
& \abs{
\frac{\partial^2 -{l}_c(\bm{\beta^0})}
{\partial \theta_{s} \partial \alpha_{w}} 
-
E\big\{  
\frac{\partial^2  -{l}_c(\bm{\beta^0})}
{\partial \theta_{s} \partial \alpha_{w}}  
\big\}
} ,
\quad
& \abs{
\frac{\partial^2 -{l}_c(\bm{\beta^0})}
{\partial \theta_{s} \partial \theta_{w}} 
-
E\big\{  
\frac{\partial^2  -{l}_c(\bm{\beta^0})}
{\partial \theta_{s} \partial \theta_{w}}  
\big\}
} ,
\end{align*}
or
\begin{align*}
& \abs{
\frac{\partial^2 -{l}_c(\bm{\beta^0})}
{\partial \alpha_{s}^2} 
-
E\big\{  
\frac{\partial^2 -{l}_c(\bm{\beta^0})}
{\partial \alpha_{s}^2}  
\big\}
} ,
\quad
& \abs{
\frac{\partial^2 -{l}_c(\bm{\beta^0})}
{\partial \alpha_{s} \partial \theta_{s}} 
-
E\big\{  
\frac{\partial^2 -{l}_c(\bm{\beta^0})}
{\partial \alpha_{s} \partial \theta_{s}}  
\big\}
} ,
\\
& \abs{
\frac{\partial^2 -{l}_c(\bm{\beta^0})}
{\partial \alpha_{s} \partial \alpha_{w}} 
-
E\big\{  
\frac{\partial^2  -{l}_c(\bm{\beta^0})}
{\partial \alpha_{s} \partial \alpha_{w}}  
\big\}
} ,
\quad
&
\abs{
\frac{\partial^2 -{l}_c(\bm{\beta^0})}
{\partial \alpha_{s} \partial \theta_{w}} 
-
E\big\{  
\frac{\partial^2  -{l}_c(\bm{\beta^0})}
{\partial \alpha_{s} \partial \theta_{w}}  
\big\}
} .
\end{align*}

Without loss of generality, we consider the first case. For $\forall \; s=\{(a,b)\} \in \mathcal{S}_1$, we have
\begin{align*}
\frac{\partial^2  -{l}_c(\bm{\beta^0})}
{\partial \theta_{s} \partial \theta_{w}}
\begin{cases}
  = \sum_{m=1}^{n}x_m^2 \left[ 
  x_m \theta_{aa}
  + (1-x_m) \alpha_{aa}
  \right]^{-1} y_{mb} y_{mc},& \text{if}\; w=\{(a,c) \} \in \mathcal{E},  \\
  = \sum_{m=1}^{n}x_m^2 \left[ 
  x_m \theta_{bb}
  + (1-x_m) \alpha_{bb}
  \right]^{-1} y_{ma} y_{mc},& \text{if}\; w=\{(b,c) \} \in \mathcal{E},  \\
=0, & \text{otherwise}.
\end{cases}
\end{align*}

For each $\forall \; s = \{(a,b)\},$ the total number of nonzero terms $\frac{\partial^2  {l}_c(\bm{\beta^0})}
{\partial \theta_{s} \partial \theta_{w}}$ is equal to the number of possible choices for $w$, which is less than $2s_n$ in the subset $\mathcal{S}_1$. Likewise, for each $s$, the total number of nonzero terms $\frac{\partial^2  {l}_c(\bm{\beta^0})}
{\partial \theta_{s} \partial \alpha_{w}}$ is less than $2s_n$ as well. 

As a result, for $\forall \; k \in \mathcal{S}_1$, by \textbf{Lemma} \ref{lemma 2.5}, we have
\begin{align*}
& P\bigg\{
\frac{1}{n}\sum_{l: l \in \mathcal{S}_1}
\abs{
-
l_{kl}^{(2)} (\bm{\beta^0}) 
- E\big\{
-
l_{kl}^{(2)} (\bm{\beta^0}) 
\big\}
} \geq \epsilon
\bigg\}
\\
& \leq
 8 \exp\Bigg\{ -\min 
\bigg\{
\frac{C_3n\epsilon^2}{128\mathcal{K}_3},
\frac{n\epsilon}{16c_{\ast}}
\bigg\}
\Bigg\} 
+ 
8 \exp \Bigg\{ -\min 
\bigg\{
\frac{C_3n\epsilon^2}{128\mathcal{K}_3s_n^2},
\frac{n \epsilon}{16s_nc_{\ast}}
\bigg\}
+ \log{s_n}
\Bigg\} .
\end{align*}
Through Bonferroni inequality, we show that
\begin{align*}
&
P\Bigg\{
\vertiii{ -\frac{1}{n} l_c ^{(2)}(\bm{\beta^0})_{\mathcal{S}_1\mathcal{S}_1}
- 
E\bigg\{
-
\frac{1}{n}
l_c ^{(2)}(\bm{\beta^0})_{\mathcal{S}_1\mathcal{S}_1}
\bigg\}
}_{\infty}
\geq \epsilon
\Bigg\}
\\
&
\leq \abs{\mathcal{S}_1}
P\bigg\{
\frac{1}{n}\sum_{l: l \in \mathcal{S}_1}
\abs{
-
l_{kl}^{(2)} (\bm{\beta^0}) 
- E\big\{
-
l_{kl}^{(2)} (\bm{\beta^0}) 
\big\}
} \geq \epsilon
\bigg\}
\\
&
\leq 
8 \exp\Bigg\{ -\min 
\bigg\{
\frac{C_3n\epsilon^2}{128\mathcal{K}_3},
\frac{n\epsilon}{16c_{\ast}}
\bigg\}
+
\log{q_n}
\Bigg\} 
+ 
8 \exp \Bigg\{ -\min 
\bigg\{
\frac{C_3n\epsilon^2}{128\mathcal{K}_3s_n^2},
\frac{n \epsilon}{16s_nc_{\ast}}
\bigg\}
+ \log{s_n} +
\log{q_n}
\Bigg\} .
\end{align*}
Likewise, we can show similar results for another subset $\mathcal{S}^c$ as follows, 
\begin{align*}
&
P\Bigg\{
\vertiii{ -\frac{1}{n} l_c ^{(2)}(\bm{\beta^0})_{\mathcal{S}^c\mathcal{S}_1}
- 
E\bigg\{
-
\frac{1}{n}
l_c ^{(2)}(\bm{\beta^0})_{\mathcal{S}^c\mathcal{S}_1}
\bigg\}
}_{\infty}
\geq \epsilon
\Bigg\}
\\
&
\leq 
8 \exp\Bigg\{ -\min 
\bigg\{
\frac{C_3n\epsilon^2}{128\mathcal{K}_3},
\frac{n\epsilon}{16c_{\ast}}
\bigg\}
+
\log{(p_n^2 - p_n - q_n)}
\Bigg\} 
\\
&
+ 
8 \exp \Bigg\{ -\min 
\bigg\{
\frac{C_3n\epsilon^2}{128\mathcal{K}_3s_n^2},
\frac{n \epsilon}{16s_nc_{\ast}}
\bigg\}
+ \log{s_n} +
\log{(p_n^2 - p_n - q_n)}
\Bigg\}.
\end{align*}
\end{proof}

\begin{lemma}\label{lemma 2.7}
For $\forall \; \epsilon > 0$, we show that
\begin{align*}
& 
P\Bigg\{
\vertiii{  E\bigg\{
-
\frac{1}{n}
l_c ^{(2)}(\bm{\beta^0})_{\mathcal{S}_1\mathcal{S}_1}
\bigg\}
- 
\bigg( -
\frac{1}{n} l_c ^{(2)}(\bm{\beta^0})_{\mathcal{S}_1\mathcal{S}_1}
\bigg)
}_{2}
\geq \epsilon
\Bigg\}
\\
&
\leq
8 \exp \Bigg\{
-\min \bigg\{
\frac{C_3n\epsilon^2}
{128\mathcal{K}_3q_n^2},
\frac{n\epsilon}
{16q_nc_{*}}
\bigg\}
+ \log{q_n}
\Bigg\}
\\
&
+ 8
\exp \Bigg\{
-\min \bigg\{
\frac{C_3n\epsilon^2}
{128\mathcal{K}_3q_n^2s_n^2},
\frac{n\epsilon}
{16q_ns_nc_{*}}
\bigg\}
+ \log{q_n}
+ \log{s_n}
\Bigg\} .
\end{align*}
\end{lemma}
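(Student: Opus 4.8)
The plan is to obtain \textbf{Lemma} \ref{lemma 2.7} as an immediate consequence of \textbf{Lemma} \ref{lemma 2.6} by comparing the spectral norm with the maximum-absolute-row-sum norm. Write $\bm D = E\{-\tfrac1n l_c^{(2)}(\bm{\beta^0})_{\mathcal{S}_1\mathcal{S}_1}\} - \big(-\tfrac1n l_c^{(2)}(\bm{\beta^0})_{\mathcal{S}_1\mathcal{S}_1}\big)$, which is a square matrix of size $|\mathcal{S}_1|\times|\mathcal{S}_1| = q_n\times q_n$. For any $q_n\times q_n$ matrix one has the deterministic chain
\[
\vertiii{\bm D}_2 \;\le\; \vertiii{\bm D}_F \;\le\; q_n\,\vertiii{\bm D}_{\max} \;\le\; q_n\,\vertiii{\bm D}_{\infty},
\]
the middle step because $\bm D$ has $q_n^2$ entries and the last because the largest entry is at most the largest row sum. (Since $\bm D$ is in fact symmetric one could invoke the sharper $\vertiii{\bm D}_2\le\vertiii{\bm D}_{\infty}$, but the cruder bound above is all that is needed here, and it is the form that also covers the rectangular block $l_c^{(2)}(\bm{\beta^0})_{\mathcal{S}^c\mathcal{S}_1}$ appearing in \textbf{Lemma} \ref{lemma 2.4}.)

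Consequently $\{\vertiii{\bm D}_2 \ge \epsilon\} \subseteq \{\vertiii{\bm D}_{\infty} \ge \epsilon/q_n\}$, so $P\{\vertiii{\bm D}_2 \ge \epsilon\} \le P\{\vertiii{\bm D}_{\infty} \ge \epsilon/q_n\}$, and the right-hand side is precisely the event controlled by \textbf{Lemma} \ref{lemma 2.6} with $\epsilon$ replaced by $\epsilon/q_n$. Substituting $\epsilon/q_n$ into the two exponential terms of \textbf{Lemma} \ref{lemma 2.6} replaces $n\epsilon^2$ by $n\epsilon^2/q_n^2$ and $n\epsilon$ by $n\epsilon/q_n$ throughout the exponents, which yields exactly the two terms in the statement, with the logarithmic factors $\log q_n$, $\log s_n$ and the numerical constants $128$, $16$ inherited unchanged.

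I do not expect a genuine obstacle beyond what was already handled in \textbf{Lemma} \ref{lemma 2.6}: the whole argument is a single norm comparison followed by a rescaling of $\epsilon$. The only thing that requires a little care is choosing the norm inequality at exactly the right strength, namely $\vertiii{\bm D}_2 \le q_n\vertiii{\bm D}_{\infty}$ rather than one of the tighter variants, so that the conclusion comes out in the precise form used downstream. As a self-contained alternative that avoids quoting \textbf{Lemma} \ref{lemma 2.6}, one can instead bound each entry $D_{sw}$ directly from the rescaled \textbf{Lemma} \ref{lemma 2.5} — using $P\{|D_{sw}|\ge\delta\} = P\{|l_{sw}^{(2)}(\bm{\beta^0}) - E\,l_{sw}^{(2)}(\bm{\beta^0})|\ge n\delta\}$ with $\delta=\epsilon/q_n$ — and take a union bound over the at most $q_n^2$ index pairs; this reproduces the same two-term structure, the $s_n$-dependent term coming from those block entries whose expressions retain an inner sum over $O(s_n)$ vertices, handled as the terms $\mathcal{I}_{1.1}$ and $\mathcal{I}_{1.2}$ were in the proof of \textbf{Lemma} \ref{lemma 2.2}.
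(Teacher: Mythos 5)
Your main argument is correct, and it takes a genuinely different route from the paper's. Writing $\bm D$ for the matrix inside the norm, you deduce the lemma from \textbf{Lemma}~\ref{lemma 2.6} via the deterministic chain $\vertiii{\bm D}_2\le\vertiii{\bm D}_F\le q_n\vertiii{\bm D}_{\max}\le q_n\vertiii{\bm D}_{\infty}$ together with the rescaling $\epsilon\mapsto\epsilon/q_n$, and one can check that this substitution reproduces the stated exponents, constants, and logarithmic terms exactly. The paper instead re-derives the bound from the entrywise tails of \textbf{Lemma}~\ref{lemma 2.5}: it bounds $\vertiii{\bm D}_F$ by the entrywise sum $\sum_{k,l\in\mathcal{S}_1}\abs{D_{kl}}$, observes that only $2q_n$ \emph{same-edge} entries and at most $4q_ns_n$ \emph{shared-vertex} entries are nonzero, allocates thresholds $n\epsilon/(4q_n)$ and $n\epsilon/(8q_ns_n)$ to the two groups, and applies a union bound. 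The two arguments rest on the same sparsity count --- which is why the constants coincide, since \textbf{Lemma}~\ref{lemma 2.6} already encodes the per-row counts $\le 2$ and $\le 4s_n$ --- but yours is shorter, avoids repeating the case analysis, and loses nothing, because the crude factor $q_n$ in the norm comparison is exactly the price the paper's allocation also pays. One caveat concerns only your self-contained alternative: a union bound over all $q_n^2$ index pairs at level $\epsilon/q_n$ would yield $+2\log q_n$ in the exponents rather than the stated $+\log q_n$ (resp.\ $+\log q_n+\log s_n$); to recover the exact constants you must, as the paper does, restrict the union to the structurally nonzero entries and split the threshold non-uniformly between the two groups.
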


\begin{proof}
Since $\vertiii{A}_{2} \leq
\vertiii{A}_{F}$, which leads to 
$
P\big\{ \vertiii{A}_{2} \geq \epsilon \big\} \leq 
P\big\{ \vertiii{A}_{F} \geq \epsilon \big\}
$, 
we have
\begin{align*}
&
\vertiii{  E\bigg\{
-
\frac{1}{n}
l_c ^{(2)}(\bm{\beta^0})_{\mathcal{S}_1\mathcal{S}_1}
\bigg\}
- \bigg(
-
\frac{1}{n} l_c ^{(2)}(\bm{\beta^0})_{\mathcal{S}_1\mathcal{S}_1}
\bigg)
}_{F}
\\
&
\leq 
\sum_{k,l \in \mathcal{S}_1} \abs{
-
\frac{1}{n} l_{kl} ^{(2)}(\bm{\beta^0})
-
E\bigg\{
-
\frac{1}{n} l_{kl} ^{(2)}(\bm{\beta^0})
\bigg\}
} .
\end{align*}
We start with the following probability
\small
\begin{align*}
& P\Bigg\{
\vertiii{  E\bigg\{
-
\frac{1}{n}
l_c ^{(2)}(\bm{\beta^0})_{\mathcal{S}_1\mathcal{S}_1}
\bigg\}
- 
\bigg(
-\frac{1}{n} l_c ^{(2)}(\bm{\beta^0})_{\mathcal{S}_1\mathcal{S}_1}
\bigg)
}_{F}
\geq \epsilon
\Bigg\}
\\
&
\leq 
P\Bigg\{
\sum_{k,l \in \mathcal{S}_1} \abs{
-
l_{kl} ^{(2)}(\bm{\beta^0})
-
E\bigg\{
-
l_{kl} ^{(2)}(\bm{\beta^0})
\bigg\}
}
\geq 
n\epsilon
\Bigg\}  .
\end{align*}
\normalsize
The possible terms included in the summation $\sum_{k,l \in \mathcal{S}_1}$ are as follows. 

\textbf{Case 1:} For $\forall \; s \in \mathcal{S}_1$, the total number of terms for the following three types 
\begin{align*}
&
\abs{
\frac{\partial^2 -l_c(\bm{\beta^0})}{\partial \theta_s^2}
-
E\bigg\{
\frac{\partial^2 -l_c(\bm{\beta^0})}{\partial \theta_s^2}
\bigg\}
},
&
&
\abs{
\frac{\partial^2 -l_c(\bm{\beta^0})}{\partial \alpha_s^2}
-
E\bigg\{
\frac{\partial^2 -l_c(\bm{\beta^0})}{\partial \alpha_s^2}
\bigg\}
},
\\
&
\abs{
\frac{\partial^2 -l_c(\bm{\beta^0})}{\partial \theta_s \partial \alpha_s}
-
E\bigg\{
\frac{\partial^2 -l_c(\bm{\beta^0})}{\partial \theta_s \partial \alpha_s}
\bigg\}
},
\end{align*}
are $2q_n$. 

\textbf{Case 2:} For $\forall \; s \in \mathcal{S}_1, \forall \; w \in \mathcal{S}_1$ and $s \neq w,$ it can be shown that the total number of nonzero terms for the following four types is less than $4q_ns_n$,
\begin{align*}
& \abs{
\frac{\partial^2 -{l}_c(\bm{\beta^0})}
{\partial \theta_{s} \partial \theta_{w}} 
-
E\bigg\{  
\frac{\partial^2 -{l}_c(\bm{\beta^0})}
{\partial \theta_{s} \partial \theta_{w}}  
\bigg\}
} ,
\quad
& \abs{
\frac{\partial^2 -{l}_c(\bm{\beta^0})}
{\partial \alpha_{s} \partial \alpha_{w}} 
-
E\bigg\{  
\frac{\partial^2  -{l}_c(\bm{\beta^0})}
{\partial \alpha_{s} \partial \alpha_{w}}  
\bigg\}
} ,
\\
& \abs{
\frac{\partial^2 -{l}_c(\bm{\beta^0})}
{\partial \theta_{s} \partial \alpha_{w}} 
-
E\bigg\{  
\frac{\partial^2 -{l}_c(\bm{\beta^0})}
{\partial \theta_{s} \partial \alpha_{w}}  
\bigg\}
} ,
\quad
& \abs{
\frac{\partial^2 -{l}_c(\bm{\beta^0})}
{\partial \alpha_{w} \partial \theta_{s}} 
-
E\bigg\{  
\frac{\partial^2  -{l}_c(\bm{\beta^0})}
{\partial \alpha_{w} \partial \theta_{s}}  
\bigg\}
}.
\end{align*}
This is because for $\forall \; s=\{(a,b) \} \in \mathcal{S}_1$, we know that 
\begin{align*}
&
\frac{\partial^2 - {l}_c(\bm{\beta^0})}
{\partial \theta_{s} \partial \theta_{w}}
\begin{cases}
 = \sum_{m=1}^{n}x_m^2 \left[ 
 x_m \theta_{aa} + (1-x_m)\alpha_{aa}
 \right]^{-1}y_{mb}y_{mc},& \text{if}\; w = \{(a,c) \} \in \mathcal{S}_1, 
 \\
 =
 \sum_{m=1}^{n}x_m^2 \left[ 
 x_m \theta_{bb} + (1-x_m)\alpha_{bb}
\right]^{-1}y_{ma}y_{mc},& \text{if}\; w = \{(b,c) \} \in \mathcal{S}_1, 
 \\
=0, & \text{otherwise}.
\end{cases}
\end{align*}
The number of nonzero terms for each row would be less than $4s_n$. Therefore, by \textbf{Lemma} \ref{lemma 2.5}, for $\forall \; \epsilon > 0,$ the probability can be written as 

\begin{align*}
&
P\Bigg\{
\sum_{k,l \in \mathcal{S}_1} \abs{
-
l_{kl} ^{(2)}(\bm{\beta^0})
-
E\bigg\{
-
l_{kl} ^{(2)}(\bm{\beta^0})
\bigg\}
}
\geq 
n\epsilon
\Bigg\} 
\\
&
\leq 
\sum_{2q_n}
P\Bigg\{
\abs{
\text{Any term in Case 1}
}
\geq
\frac{n\epsilon}{4q_n}
\Bigg\}
+
\sum_{4q_ns_n}
P\Bigg\{
\abs{
\text{Any term in Case 2}
}
\geq
\frac{n\epsilon}{8q_ns_n}
\Bigg\}
\\
&
\leq
8 \exp \Bigg\{
-\min \bigg\{
\frac{C_3n\epsilon^2}
{128\mathcal{K}_3q_n^2},
\frac{n\epsilon}
{16q_n c_{*}}
\bigg\}
+ \log{q_n}
\Bigg\} 
\\
&
+
8\exp \Bigg\{
-\min \bigg\{
\frac{C_3n\epsilon^2}
{128\mathcal{K}_3q_n^2s_n^2},
\frac{n\epsilon}
{16q_n s_n c_{*}}
\bigg\}
+ \log{q_n} + \log{s_n}
\Bigg\},
\end{align*}
which completes the proof. 
\end{proof}

\normalsize

\begin{lemma}\label{lemma 2.8}
Under \textbf{Assumptions 3-5} and $q_n^2p_n^{d}\log{p_n} = o(n)$, there exists some constant $\tau_{-} > 0$ such that $-\frac{1}{n}l_c ^{(2)}(\bm{\beta^0})_{\mathcal{S}_1\mathcal{S}_1}$ has minimum eigenvalue lower bounded by $\tau_{-}$ and is invertible with  probability tending to one.
\end{lemma}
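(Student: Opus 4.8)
The plan is to control the empirical restricted Hessian $-\frac{1}{n}l_c^{(2)}(\bm{\beta}^0)_{\mathcal{S}_1\mathcal{S}_1}$ by comparing it to its population version $H_{\mathcal{S}_1\mathcal{S}_1}^0 = E\{-\frac{1}{n}l_c^{(2)}(\bm{\beta}^0)_{\mathcal{S}_1\mathcal{S}_1}\}$ through a matrix perturbation bound. First I would note that, by \textbf{Assumption 3}, the expected Hessian $H(\bm{\beta}^0)$ has smallest eigenvalue bounded below by a constant, and since $H_{\mathcal{S}_1\mathcal{S}_1}^0$ is the $\mathcal{S}_1 \times \mathcal{S}_1$ principal submatrix of $H(\bm{\beta}^0)$, the Courant--Fischer min-max characterization gives $\lambda_{\min}(H_{\mathcal{S}_1\mathcal{S}_1}^0) \ge \tau$ for some constant $\tau > 0$ (the same one appearing in the proof of \textbf{Lemma} \ref{lemma 2.4}). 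Both matrices being symmetric, Weyl's inequality yields
\begin{align*}
\lambda_{\min}\!\left(-\frac{1}{n}l_c^{(2)}(\bm{\beta}^0)_{\mathcal{S}_1\mathcal{S}_1}\right)
&\ge
\lambda_{\min}\!\left(H_{\mathcal{S}_1\mathcal{S}_1}^0\right)
-
\vertiii{-\frac{1}{n}l_c^{(2)}(\bm{\beta}^0)_{\mathcal{S}_1\mathcal{S}_1} - H_{\mathcal{S}_1\mathcal{S}_1}^0}_2 \\
&\ge
\tau - \vertiii{-\frac{1}{n}l_c^{(2)}(\bm{\beta}^0)_{\mathcal{S}_1\mathcal{S}_1} - H_{\mathcal{S}_1\mathcal{S}_1}^0}_2 .
\end{align*}
So it suffices to show that this spectral-norm deviation is below some fixed $\epsilon' \in (0, \tau)$ with probability tending to one; then $\tau_{-} := \tau - \epsilon' > 0$ works, and a positive definite matrix is automatically invertible.

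Second, I would apply \textbf{Lemma} \ref{lemma 2.7} with $\epsilon = \epsilon'$ held fixed. Its bound is a sum of terms of the form $8\exp\{-\min\{c\,n/q_n^2,\, c\,n/q_n\} + \log q_n\}$ and $8\exp\{-\min\{c\,n/(q_n^2 s_n^2),\, c\,n/(q_n s_n)\} + \log q_n + \log s_n\}$, where the generic constant $c$ absorbs $\epsilon'$, $C_3$, $\mathcal{K}_3$ and $c_{*}$. Since $q_n s_n \to \infty$, for large $n$ the minimum in each exponent is attained by the first term, so the whole bound vanishes provided $q_n^2 s_n^2 \log(q_n s_n) = o(n)$. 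This follows from the hypotheses: $q_n = \mathcal{O}(p_n s_n)$, $s_n^4 = \mathcal{O}(p_n^d)$ (hence $s_n^2 = \mathcal{O}(p_n^{d/2}) = \mathcal{O}(p_n^d)$) and $\log q_n, \log s_n = \mathcal{O}(\log p_n)$, which together give $q_n^2 s_n^2 \log(q_n s_n) = \mathcal{O}(q_n^2 p_n^d \log p_n) = o(n)$ under the stated condition $q_n^2 p_n^d \log p_n = o(n)$. Consequently $P\{\vertiii{-\frac{1}{n}l_c^{(2)}(\bm{\beta}^0)_{\mathcal{S}_1\mathcal{S}_1} - H_{\mathcal{S}_1\mathcal{S}_1}^0}_2 \ge \epsilon'\} \to 0$.

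Combining the two steps, with probability tending to one we obtain $\lambda_{\min}(-\frac{1}{n}l_c^{(2)}(\bm{\beta}^0)_{\mathcal{S}_1\mathcal{S}_1}) \ge \tau - \epsilon' = \tau_{-} > 0$, which makes the block positive definite and hence invertible, giving the claim. I expect the only mildly delicate point to be the rate bookkeeping in the second step — checking that the single condition $q_n^2 p_n^d \log p_n = o(n)$, combined with the structural relations between $q_n$, $s_n$ and $p_n$, is strong enough to suppress the exponential tails arising from the roughly $q_n^2$ entries of the $\mathcal{S}_1$-block; the rest is a routine application of Weyl's inequality together with \textbf{Lemma} \ref{lemma 2.7}.
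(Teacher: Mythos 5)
Your proposal is correct and follows essentially the same route as the paper: the paper also lower-bounds $\lambda_{\min}\big(-\tfrac{1}{n}l_c^{(2)}(\bm{\beta}^0)_{\mathcal{S}_1\mathcal{S}_1}\big)$ by $\tau$ minus the spectral-norm deviation from $H^0_{\mathcal{S}_1\mathcal{S}_1}$ (deriving the Weyl-type bound by hand from the Courant--Fischer representation rather than citing Weyl's inequality), and then controls that deviation via the Frobenius norm and \textbf{Lemma}~\ref{lemma 2.7} with a fixed $0<\epsilon<\tau$. Your explicit rate bookkeeping showing the exponential tails vanish under $q_n^2 p_n^d \log p_n = o(n)$ is a detail the paper leaves implicit, but it is consistent with the stated hypotheses.
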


\begin{proof}
According to the \textbf{Assumption 3}, we know that $H(\bm{\beta^0}) = 
E\big\{
-\frac{1}{n} l_c ^{(2)}(\bm{\beta^0})
\big\}
$
has eigenvalues bounded away from zero and infinity. Hence, for the sub-matrix $
H_{\mathcal{S}_1\mathcal{S}_1}^{0} = 
E\big\{
-\frac{1}{n} l_c ^{(2)}(\bm{\beta^0})_{\mathcal{S}_1 \mathcal{S}_1}
\big\},
$ 
there exists some constant $\tau > 0$ such that 
$
\lambda_{\min}(H_{\mathcal{S}_1\mathcal{S}_1}^{0}) \geq \tau .
$
By the Courant-Fischer variational representation, we have 
\begin{align*}
\lambda_{\min} (H_{\mathcal{S}_1\mathcal{S}_1}^{0})
&=
\lambda_{\min} \bigg(
-
\frac{1}{n}l_c ^{(2)}(\bm{\beta^0})_{\mathcal{S}_1\mathcal{S}_1}
+
H_{\mathcal{S}_1\mathcal{S}_1}^{0}
+
\frac{1}{n}l_c ^{(2)}(\bm{\beta^0})_{\mathcal{S}_1\mathcal{S}_1}
\bigg)
\\
& 
= \min_{||v||_2=1} \bigg\{
v^{T}
\bigg(
-
\frac{1}{n}l_c ^{(2)}(\bm{\beta^0})_{\mathcal{S}_1\mathcal{S}_1}
+
H_{\mathcal{S}_1\mathcal{S}_1}^{0}
+
\frac{1}{n}l_c ^{(2)}(\bm{\beta^0})_{\mathcal{S}_1\mathcal{S}_1}
\bigg)
v
\bigg\}
\\
& 
= \min_{||v||_2=1} \bigg\{
v^{T} \bigg(
-
\frac{1}{n}l_c ^{(2)}(\bm{\beta^0})_{\mathcal{S}_1\mathcal{S}_1}
\bigg)
v
+ v^{T} \bigg(
H_{\mathcal{S}_1\mathcal{S}_1}^{0}
+
\frac{1}{n}l_c ^{(2)}(\bm{\beta^0})_{\mathcal{S}_1\mathcal{S}_1}
\bigg)
v
\bigg\}
\\
& 
\leq
y^{T} \bigg(
-
\frac{1}{n}l_c ^{(2)}(\bm{\beta^0})_{\mathcal{S}_1\mathcal{S}_1}
\bigg)
y 
+
y^{T} \bigg(
H_{\mathcal{S}_1\mathcal{S}_1}^{0}
+
\frac{1}{n}l_c ^{(2)}(\bm{\beta^0})_{\mathcal{S}_1\mathcal{S}_1}
\bigg)
y ,
\end{align*}
where $y$ is a unit-norm eigenvector of $-\frac{1}{n}l_c ^{(2)}(\bm{\beta^0})_{\mathcal{S}_1\mathcal{S}_1}.$
Therefore, we show that
\begin{align*}
&
y^{T} \bigg(
-
\frac{1}{n}l_c ^{(2)}(\bm{\beta^0})_{\mathcal{S}_1\mathcal{S}_1}
\bigg)
y 
\geq 
\lambda_{\min} \bigg(
-
\frac{1}{n} l_c ^{(2)}(\bm{\beta^0})_{\mathcal{S}_1\mathcal{S}_1}
\bigg) 
\geq
\lambda_{\min} (H_{\mathcal{S}_1\mathcal{S}_1}^{0}) 
- 
y^{T} 
\bigg(
H_{\mathcal{S}_1\mathcal{S}_1}^{0}
+
\frac{1}{n}l_c ^{(2)}(\bm{\beta^0})_{\mathcal{S}_1\mathcal{S}_1}
\bigg)
y,
\\
&
\lambda_{\min} \bigg(
-
\frac{1}{n} l_c ^{(2)}(\bm{\beta^0})_{\mathcal{S}_1\mathcal{S}_1}
\bigg)  
\geq
\tau
- \vertiii{
H_{\mathcal{S}_1\mathcal{S}_1}^{0}
+
\frac{1}{n}l_c ^{(2)}(\bm{\beta^0})_{\mathcal{S}_1\mathcal{S}_1}
}_{2} .
\end{align*}
Since $\vertiii{A}_2 \leq \vertiii{A}_{F}$, choose $\epsilon$ such that $0 < \epsilon < \tau,$ by \textbf{Lemma} \ref{lemma 2.7}, we have 
\begin{align*}
&
P\Bigg\{
\vertiii{  
H_{\mathcal{S}_1\mathcal{S}_1}^{0} +
\frac{1}{n} l_c ^{(2)}(\bm{\beta^0})_{\mathcal{S}_1\mathcal{S}_1} 
}_{2}
\geq \epsilon
\Bigg\}
\leq 
P\Bigg\{
\vertiii{  
H_{\mathcal{S}_1\mathcal{S}_1}^{0} +
\frac{1}{n} l_c ^{(2)}(\bm{\beta^0})_{\mathcal{S}_1\mathcal{S}_1} 
}_{F}
\geq \epsilon
\Bigg\} 
\\
&
\leq
8 \exp \Big \{ -\min \big\{
\frac{C_3 n \epsilon^2}{128 \mathcal{K}_3 q_n^2},
\frac{n \epsilon}{16 q_n c_{\ast}}
\big\}
+ \log{q_n}
\Big\}
\\
&
+ 8 \exp \Big \{ -\min \big\{
\frac{C_3 n \epsilon^2}{128 \mathcal{K}_3 q_n^2 s_n^2},
\frac{n \epsilon}{16 q_n s_n c_{\ast}}
\big\}
+ \log{q_n} + \log{s_n}
\Big\} .
\end{align*}
As a result, 
$
\lambda_{\min} \Big(
-
\frac{1}{n} l_c ^{(2)}(\bm{\beta^0})_{\mathcal{S}_1\mathcal{S}_1}
\Big) 
\geq
\tau
- \epsilon  = \tau_{-}
$ holds with probability tending to 1.
\end{proof}

\section{Proofs of Theorems}
\textbf{Proof of Theorem 1}

\begin{proof}
The multivariate version of Lindeberg condition can be bounded by 
\begin{align*}
&n^{-1}
\sum_{m=1}^{n} E\Big\{
 \vert \vert \bm{\mathcal{L}}_m^{(1)}(\bm \beta^0) \vert \vert_2 ^2 \cdot
I\Big(|| \bm{\mathcal{L}}_m^{(1)}(\bm \beta^0) ||_2 \geq \epsilon \sqrt{n}  
\Big)
\Big\} 
\\
& \leq n^{-1}
\sum_{m=1}^{n} E\bigg\{ \vert \vert \bm{\mathcal{L}}_m^{(1)}(\bm \beta^0) \vert \vert_2 ^2
\cdot \bigg\{
\frac{\vert \vert \bm{\mathcal{L}}_m^{(1)}(\bm \beta^0) \vert \vert_2 }{\epsilon \sqrt{n}}
\bigg\}^{2}
\bigg\} 
\\
& = \epsilon^{-2}  n^{-2} \sum_{m=1}^{n} E\bigg\{
\vert \vert \bm{\mathcal{L}}_m^{(1)}(\bm \beta^0) \vert \vert_2 ^{4}
\bigg\}.
\end{align*}
Denote each $(i,j)$th element of $\bm{\Sigma}_{m}$ as $\sigma_{ij,m}^2$ and it can be shown that 
\begin{equation}
\begin{split}
&\frac{\partial \bm{\mathcal{L}}_m(\bm{\beta}) }{\partial [\theta_{u}]_{h}} = 
\begin{cases}
\frac{1}{2}x_m^{(h_1)} \bm[\sigma_{jj,m}^2 - y_{mj}^2\bm]   & \text{w.r.t}\;\; [\theta_{jj}]_{h},
\\
x_m^{(h_1)}[\sigma_{ji,m}^2 - y_{mj}y_{mi}] & \text{w.r.t}\;\; [\theta_{ji}]_{h}, 
\end{cases}
\\
&\frac{\partial \bm{\mathcal{L}}_m(\bm{\beta}) }{\partial \alpha_{s}} = 
\begin{cases}
\frac{1}{2}\bm[\sigma_{jj,m}^2 - y_{mj}^2\bm]  & \text{w.r.t}\;\; \alpha_{jj}, \\ \sigma_{ji,m}^2 - y_{mj}y_{mi} & \text{w.r.t}\;\; \alpha_{ji},
\end{cases}
\end{split}
\end{equation}
for all $h=1,\cdots, H, m=1,\cdots, n$. By \textbf{Assumption 1} and 
\textbf{Lemma}  \ref{lemma 2.1}, we show that
\begin{align*}
 E\Big\{
\vert \vert 
\bm{\mathcal{L}}_m^{(1)}(\bm \beta^0) 
\vert \vert_2 ^{4} 
\Big\}
&
=
E \Bigg\{
\bigg(
\underbrace{
\sum_{u, h_1}
\bigg\{
\frac{\partial \bm{\mathcal{L}}_m  (\bm \beta^0)}{\partial [\theta_{u}]_{h_1}} 
\bigg\}^2
+
\sum_{s}
\bigg\{
\frac{\partial \bm{\mathcal{L}}_m  (\bm \beta^0)}{\partial \alpha_{s}} 
\bigg\}^2}_{\text{the sum of } 0.5(H+1)p(p+1) \text{ terms}}
\bigg)^2
\Bigg\}
= \mathcal{O}(p^4),
\end{align*} 
which leads to 
$
\lim_{n \to \infty}  \epsilon^{-1} n^{-2} 
\sum_{m=1}^{n} 
E\Big\{
\vert \vert \bm{\mathcal{L}}_m^{(1)}(\bm \beta^0) \vert \vert_2 ^{4}
\Big\} 
= 0
$. Therefore, by the multivariate version of the Lindeberg-Feller
central limit theorem, we conclude that  
\begin{equation} \label{eq: First term}
\begin{split}
&
n^{-\frac{1}{2}}\bm{\mathcal{L}} ^{(1)}(\bm \beta^0) = n^{-\frac{1}{2}}
\sum_{m=1}^{n} 
\bm{\mathcal{L}}_m^{(1)}(\bm \beta^0)  \xrightarrow{\enskip d \enskip} 
N \Big(
\bm 0, \bm V 
\Big).
\end{split}
\end{equation}
Our next step is to show the existence of a local maximizer $\hat{\bm \beta}$ of $L(\bm{\beta})$ such that $\vert \vert \hat{\bm \beta} - \bm{\beta}^{0} \vert \vert_2 = \mathcal{O}_p(n^{-\frac{1}{2}})$. Define a ball $\mathcal{A} = \Big\{\bm{\beta}: \bm{\beta} \leq \bm{\beta}^{0} + C n^{-\frac{1}{2}}\bm{v} \Big\}$, where $\bm{v}$ is any unit vector and $C$ is a positive constant. Since $\mathcal{L}(\bm{\beta}) - \mathcal{L}(\bm{\beta}^0)$ is a continuous function with $\mathcal{L}(\bm{\beta}^0) - \mathcal{L}(\bm{\beta}^0) = 0$, if one can show that it is strictly negative on the boundary of the ball, we can conclude that there exists a local maximizer of $\mathcal{L}(\bm{\beta})$ inside $\mathcal{A}$. \newline
By Taylor expansion at $\bm{\beta}^0$, we have
\begin{equation}
\begin{split}
\label{Taylor Expansion 1}
\mathcal{L}(\bm{\beta}) -\mathcal{L}(\bm{\beta}^0)
=& 
\bm{\mathcal{L}}^{(1)}(\bm{\beta}^0)^{T}(\bm \beta - \bm{\beta}^0)
+
\frac{1}{2} (\bm \beta - \bm{\beta}^0)^{T}
\bm{\mathcal{L}}^{(2)}(\bm{\beta}^0)
(\bm \beta - \bm{\beta}^0)
\\
&
+
\frac{1}{6}\sum_{r,t,u}
[(\bm \beta - \bm{\beta}^0)]_{[r]}
[(\bm \beta - \bm{\beta}^0)]_{[t]}
[(\bm \beta - \bm{\beta}^0)]_{[u]}
\frac{\partial^3 \mathcal{L}(\bm{\tilde{\beta}})}{\partial \beta_r \partial \beta_t \partial \beta_u},
\end{split}
\end{equation}
where $\bm{\tilde{\beta}}$ is between $\bm{\beta}$ and $\bm{\beta}^0$. From \eqref{eq: First term}, we have $n^{-\frac{1}{2}}\bm{\mathcal{L}}^{(1)}(\bm{\beta}^0) = \mathcal{O}_p(1)$. The first term on the right-hand side of \eqref{Taylor Expansion 1} is of the order $\mathcal{O}_p(1)$. As $\bm{\mathcal{L}}^{(2)}(\bm{\beta}^0)$ is deterministic, the second term can be written as
\begin{align*}
\frac{1}{2} (\bm \beta - \bm{\beta}^0)^{T}
\bm{\mathcal{L}}^{(2)}(\bm{\beta}^0)
(\bm \beta - \bm{\beta}^0)
= 
-\frac{1}{2} n (\bm \beta - \bm{\beta}^0)^{T}
\frac{\bm{\mathcal{I}}(\bm{\beta}^0)}{n}
(\bm \beta - \bm{\beta}^0).
\end{align*}
From \textbf{Assumption 2}, we know that $\frac{\mathcal{I}(\bm{\beta}^0)}{n} \xrightarrow{} \bm{V}$ as $n \xrightarrow{} \infty$ and $\bm{V}$ is positive definite. Thus, the second term is less than $-\frac{1}{2}C^2\lambda_{\min}(\bm{V})$ as $n$ goes to infinity. By choosing a sufficiently large $C,$ the second term can dominate the first term. The third order partial derivative of $\mathcal{L}(\bm{\beta})$, for example, can be given as
\begin{align}
\label{third derivative}
\frac{\partial^3 \mathcal{L}(\bm{\beta})}{\partial [\theta_u]_{h_1} \partial [\theta_v]_{h_2} \partial \alpha_s}
=
\sum_{m=1}^{n}
x_m^{(h_1)} x_m^{(h_2)}
tr(\bm{T}^u \bm{\Sigma}_m \bm{T}^{s} \bm{\Sigma}_m
\bm{T}^{v} \bm{\Sigma}_m).
\end{align}
Other third order partial derivatives take similar forms as \eqref{third derivative}. From \textbf{Assumption 1}, we have $\frac{\partial^3 \mathcal{L}(\bm{\tilde{\beta}})}{\partial \beta_r \partial \beta_t \partial \beta_u} = \mathcal{O}(n)$ and thus, the third term is of the order $\mathcal{O}(n^{-\frac{1}{2}})$. This is dominated by the second term as well. As a result, we have shown that with probability tending to 1, the function $\mathcal{L}(\bm{\beta}) - \mathcal{L}(\bm{\beta}^0)$ is strictly negative on the ball, which completes the proof of the consistency of the estimator. \newline
Lastly, we show the asymptotic normality for $\hat{\bm{\beta}}.$ 
By Taylor expansion, there exists $\bm{\tilde{\beta}}$ between $\bm{\beta}$ and $\bm{\beta}^0$ such that for $\forall \; r,t,u=1,\cdots, \frac{p(p+1)(H+1)}{2},$
\begin{equation}
\begin{split}
\label{eq:taylor 12-16}
\big(\bm{\mathcal{L}} ^{(1)}(\hat{\bm \beta})\big)_{r} = 0
&=
\big(\bm{\mathcal{L}} ^{(1)}(\bm \beta^0)\big)_{r}
+ 
\sum_{t}
\big(\bm{\mathcal{L}}^{(2)}(\bm{\beta}^0)\big)_{rt} \cdot
\big(\hat{\bm \beta} - \bm \beta^0\big)_{t}
\\
& +
\sum_{t,u}
\frac{\partial^3 \bm{\mathcal{L}}(\bm{\tilde{\beta}})}{\partial \beta_{r} \partial \beta_{t} \partial \beta_{u}}\big(\hat{\bm \beta} - \bm \beta^0\big)_{t}
\big(\hat{\bm \beta} - \bm \beta^0\big)_{u},
\end{split}
\end{equation}
where $r,t,u$ denote the row or column index  of the vector and matrices. The third term on the right-hand side is $\mathcal{O}_p(1),$ whereas the second term has an order of $\mathcal{O}_p(n^{\frac{1}{2}}),$ which dominates the third term. Then \eqref{eq:taylor 12-16} can be written as
\begin{align*}
\frac{1}{n}
\big(\bm{\mathcal{L}} ^{(1)}(\bm \beta^0)\big)_{r}
=
\sum_{t}
\big(-\frac{1}{n}\bm{\mathcal{L}}^{(2)}(\bm{\beta}^0)\big)_{rt} \cdot
\big(\hat{\bm \beta} - \bm \beta^0\big)_{t} (1+o_p(1)).
\end{align*}
In matrix form, we have 
\begin{align}
\label{eq: 12}
\frac{1}{n}
\bm{\mathcal{L}} ^{(1)}(\bm \beta^0)
=
-\frac{1}{n}\bm{\mathcal{L}}^{(2)}(\bm{\beta}^0) 
\big(\hat{\bm \beta} - \bm \beta^0\big) + \Big\{-\frac{1}{n}\bm{\mathcal{L}}^{(2)}
(\bm{\beta}^0) \Big\}
\big(\hat{\bm \beta} - \bm \beta^0\big) \bm{o}_p(1),
\end{align}
where $\bm{o}_p(1)$ denotes the residual vector. Since the set of positive definite matrices is an open convex cone and the matrix $\bm{V}$ is positive definite by \textbf{Assumption 2}, we have the positive definiteness of $-\frac{1}{n}\bm{\mathcal{L}}^{(2)}
(\bm{\beta}^0)$. Taking the inverse transformation, \eqref{eq: 12} becomes
\begin{align*}
n^{\frac{1}{2}}
\big(\hat{\bm \beta} - \bm \beta^0\big)
&
= \Big\{-\frac{1}{n}\bm{\mathcal{L}}^{(2)}
(\bm{\beta}^0)
\Big\}^{-1}
\Big\{
n^{-\frac{1}{2}}
\bm{\mathcal{L}} ^{(1)}(\bm \beta^0)
\Big\}
+ \bm{o}_p(1).
\end{align*}
Using \eqref{eq: First term}, 
we show the asymptotic normality for $\bm{\hat{\beta}}$ as
\begin{equation}
n^{\frac{1}{2}} (\hat{\bm \beta} - \bm \beta^0)  \xrightarrow{\enskip d \enskip}
N\Big(\bm 0, \bm V^{-1}  \Big).
\end{equation}
\end{proof}

\textbf{Proof of Theorem 2}
\begin{proof}
The objective function is given by 
\begin{align*}
\min_{\bm \beta}Q(\bm \beta)= -{l}_c(\bm \beta) 
+ n\sum_{s\in \mathcal{E} }\rho_{\lambda}(|\theta_{s}|) 
+ n\sum_{w \in \mathcal{E} }\rho_{\lambda}(|\alpha_{w}|).
\end{align*}
Define a function $G(\bm {\Delta})=Q(\bm \beta)-Q(\bm \beta^0),$ where
$\bm{\Delta} = \bm{\beta} - \bm{\beta}^0.$ If we take a closed convex set
$\mathcal{A}$ which contains $\bm 0$ and further show that the function $G(\bm \Delta)$ is strictly positive everywhere on the boundary $\partial \mathcal{A}$, then it implies
that G has a local minimum inside the convex set $\mathcal{A}$, since $G(\bm \Delta)$ is continuous and $G(\bm{0})=\bm{0}$.
In particular, we define the closed convext set $\mathcal{A}=\{
\bm{\beta}:  
||\bm{\beta} - \bm{\beta^0}||_2 
\leq
C r_n 
\}$
and its boundary $\partial \mathcal{A} =  \{
\bm{\beta}: 
||\bm{\beta} - \bm{\beta^0}||_2 
=
C r_n 
\},
$
where $C$ is some positive constant and $r_n = \bigg\{
\frac{p_n^{1+d} \log{p_n}}{n}
\bigg\}^{\frac{1}{2}}.$
\\
By the definition, we show that
\begin{align*}
G(\bm \Delta)={l}_c(\bm \beta^0) - {l}_c(\bm \beta) + n\sum_{s\in\mathcal{E}} \Big\{\rho_{\lambda}(|\theta_{s}|) - \rho_{\lambda}(|\theta_{s}^{0}|) \Big\} + n\sum_{w\in \mathcal{E}} \Big\{\rho_{\lambda}(|\alpha_{w}|) - \rho_{\lambda}(|\alpha_{w}^{0}|)\Big\}.
\end{align*}
We start with the composite loglikelihood function and  apply Taylor's expansion to ${l}_c(\bm \beta)$ at $\bm{\beta}=\bm{\beta^0}$:
\begin{equation}
{l}_c(\bm \beta) = 
{l}_c(\bm \beta^0) + 
{l}_c^{(1)}(\bm \beta^0)^T \cdot (\bm \beta - \bm \beta^0) 
+ 
\frac{1}{2}(\bm \beta - \bm \beta^0)^T \cdot {l}_c^{(2)}(\bm \beta^{*})\cdot (\bm \beta - \bm \beta^0),
\end{equation}
where $\bm \beta^{*} = a\bm \beta +(1-a)\bm \beta^0$ for some constant $a \in (0,1).$ The difference between the two composite loglikelihood functions can be written as
\begin{equation}
\begin {split}
{l}_c(\bm \beta^0) - {l}_c(\bm \beta) 
&= 
-{l}_c^{(1)}(\bm \beta^0)^T 
\cdot 
(\bm \beta - \bm \beta^0) 
- 
\frac{1}{2}(\bm \beta - \bm \beta^0)^T 
\cdot 
{l}_c^{(2)}(\bm \beta^{*})
\cdot (\bm \beta - \bm \beta^0) 
\\
&= \underbrace{-{l}_c^{(1)}(\bm \beta^0)^T \cdot (\bm \beta - \bm \beta^0)}_{\mathcal{I}_1} + \underbrace{ \frac{1}{2}(\bm \beta - \bm \beta^0)^T \cdot E\{ - {l}_c^{(2)}(\bm \beta^{*}) \} \cdot (\bm \beta - \bm \beta^0)}_{\mathcal{I}_2} 
\\
& + \underbrace{ \frac{1}{2}(\bm \beta - \bm \beta^0)^T \cdot \bigg\{ -{l}_c^{(2)}(\bm \beta^{*}) - E\{ - {l}_c^{(2)}(\bm \beta^{*}) \}\bigg\} \cdot  (\bm \beta - \bm \beta^0)}_{\mathcal{I}_3}.
\end{split}
\end{equation}
By the assumption that $E\{ -\frac{1}{n} {l}_c^{(2)}(\bm \beta^{*}) \}$ has eigenvalues uniformly bounded away from zero and infinity when $\bm{\beta}^{*}$ is in the $\eta$-neighborhood of $\bm{\beta}^0,$ there exist $\kappa_{-}, \kappa_{+}>0$ such that $0< \kappa_{-} \leq \lambda\Big(E\{ -\frac{1}{n} {l}_c^{(2)}(\bm \beta^{*}) \}\Big) \leq \kappa_{+} < \infty$. Therefore, we have 
$
\bm{v}^T E\{ - {l}_c^{(2)}(\bm \beta^{*}) \} \bm{v} 
=\mathcal{O}_p(n) > 0,
$
where $\bm{v}$ is any unit-norm vector. Let $\bm{\beta} = \bm{\beta^0} + Cr_n\bm v$ for some constant $C > 0$. We show that
\begin{align*}
\mathcal{I}_2 & 
= \frac{1}{2}C^2r_n^2
\cdot \bm{v}^T 
E\{ - {l}_c^{(2)}(\bm \beta^{*}) \} 
\bm{v} 
=  \frac{1}{2}C^2r_n^2\cdot \mathcal{O}_p(n) > 0,
\\
\mathcal{I}_3 &= 
\frac{1}{2}C^2r_n^2\cdot \bm{v}^T 
\bigg\{ 
-{l}_c^{(2)}(\bm \beta^{*}) 
- 
E\{ - {l}_c^{(2)}(\bm \beta^{*}) \}\bigg\} 
\bm{v} 
\\
&= \frac{1}{2}C^2 r_n^2\cdot 
\sum_{r,t} v_{r} 
o_p\bigg\{
\left[
E\{ - {l}_c^{(2)}(\bm \beta^{*}) \}
\right]_{rt}
\bigg\}
v_{t},
\end{align*}
which is dominated by $\mathcal{I}_2$. For $\mathcal{I}_1$, we have
\begin{align*}
\abs{\mathcal{I}_1} &= 
\abs{-{l}_c^{(1)}(\bm \beta^0)^T \cdot (\bm \beta - \bm \beta^0)}
=
\abs{\sum_{t}\big[ -{l}_c^{(1)}(\bm \beta^0) \big]_{[t]} \cdot (\bm \beta - \bm \beta^0)_{[t]}} 
\\
&
\leq
\underbrace{
Cr_n 
\sum_{t_1:diagonal}
\abs{
\big[ -{l}_c^{(1)}(\bm \beta^0) \big]_{[t_1]} 
}
\abs{
v_{t_1}
}
}_{\abs{\mathcal{I}_{1.1}}}
+
\underbrace{
Cr_n
\sum_{t_2: \beta_{t_2}^0 \neq 0}
\abs{
\big[ -{l}_c^{(1)}(\bm \beta^0) \big]_{[t_2]} 
}
\abs{
v_{t_2}
}
}_{\abs{\mathcal{I}_{1.2}}}
\\
&
+
\underbrace{
Cr_n
\sum_{t_3: \beta_{t_3}^0 = 0}
\abs{
\big[ -{l}_c^{(1)}(\bm \beta^0) \big]_{[t_3]} 
}
\abs{
v_{t_3}
}
}_{\abs{\mathcal{I}_{1.3}}}.
\end{align*}
Also note that
\begin{align*}
G(\bm \Delta) &= 
{l}_c(\bm \beta^0) - {l}_c(\bm \beta) + n\sum_{s\in \mathcal{E}} \Big\{
\rho_{\lambda}(|\theta_{s}|) - \rho_{\lambda}(|\theta_{s}^{0}|) 
\Big\} 
+ 
n\sum_{w\in \mathcal{E}} 
\Big\{
\rho_{\lambda}(|\alpha_{w}|) - \rho_{\lambda}(|\alpha_{w}^{0}|)
\Big\} 
\\
&= \mathcal{I}_1 + \mathcal{I}_2 + \mathcal{I}_3 + \underbrace{n\sum_{s\in\mathcal{E}} 
\Big\{
\rho_{\lambda}(|\theta_{s}|) - \rho_{\lambda}(|\theta_{s}^{0}|) 
\Big\}}_{ \mathcal{I}_4} 
+ 
\underbrace{n\sum_{w\in \mathcal{E}} 
\Big\{
\rho_{\lambda}(|\alpha_{w}|) - \rho_{\lambda}(|\alpha_{w}^{0}|)
\Big\}}_{\mathcal{I}_5},
\end{align*}
where 
\begin{align*}
\mathcal{I}_4 &=  n\sum_{ \substack{ s \in\mathcal{E}:\\ \theta_s^0 = 0} } 
\Big\{
\rho_{\lambda}(|\theta_{s}|) - \rho_{\lambda}(|\theta_{s}^{0}|) 
\Big\} 
+ 
n\sum_{ \substack{ s \in\mathcal{E}:\\ \theta_s^0 \neq 0} } 
\Big\{
\rho_{\lambda}(|\theta_{s}|) - \rho_{\lambda}(|\theta_{s}^{0}|) 
\Big\} 
\\ 
&= 
\underbrace{n\sum_{ \substack{ s \in\mathcal{E}:\\ \theta_s^0 = 0} } 
\Big\{
\rho_{\lambda}(|\theta_{s}|) 
\Big\}}_{\mathcal{I}_{4.1}} 
+ 
\underbrace{n\sum_{ \substack{ s \in\mathcal{E}:\\ \theta_s^0 \neq 0} } 
\Big\{
\rho_{\lambda}(|\theta_{s}|) - \rho_{\lambda}(|\theta_{s}^{0}|) 
\Big\}}_{\mathcal{I}_{4.2}},
\\
\mathcal{I}_5 
&= 
n\sum_{ \substack{ w\in\mathcal{E}: \\ \alpha_w^0 = 0}} 
\Big\{
\rho_{\lambda}(|\alpha_{w}|) - \rho_{\lambda}(|\alpha_{w}^{0}|) 
\Big\} 
+ 
n\sum_{ \substack{ w\in\mathcal{E}: \\ \alpha_w^0 \neq 0}} 
\Big\{
\rho_{\lambda}(|\alpha_{w}|) - \rho_{\lambda}(|\alpha_{w}^{0}|) 
\Big\}
\\ 
&= \underbrace{n\sum_{ \substack{ w\in\mathcal{E}: \\ \alpha_w^0 = 0}} 
\Big\{
\rho_{\lambda}(|\alpha_{w}|) 
\Big\}}_{\mathcal{I}_{5.1}} 
+ 
\underbrace{n\sum_{ \substack{ w\in\mathcal{E}: \\ \alpha_w^0 \neq 0}} 
\Big\{
\rho_{\lambda}(|\alpha_{w}|) - \rho_{\lambda}(|\alpha_{w}^{0}|) 
\Big\}}_{\mathcal{I}_{5.2}}.
\end{align*}

We start with the second term $\abs{\mathcal{I}_{1.2}}$ as follows.
Since the \# of nonzero true off-diagonal parameters is $q_n$ and 
$\max_{s} \abs{
\frac{ \partial -l_c(\bm \beta^0) }
{\partial \theta_{s}}
} = 
\max_{s} \abs{
\frac{ \partial -l_c(\bm \beta^0) }
{\partial \alpha_{s}}
} = \mathcal{O}_p 
\Big\{
(n s_n^2 \log{p_n})^{\frac{1}{2}}
\Big\}.$ By Cauchy–Schwarz inequality, $\abs{\mathcal{I}_{1.2}}$ becomes
\begin{align*}
\abs{\mathcal{I}_{1.2}} 
=
C r_n 
\sum_{t_2: \beta_{t_2}^0 \neq 0}
\abs{
\big[ -{l}_c^{(1)}(\bm \beta^0) \big]_{[t_2]} 
}
\abs{
v_{t_2}
}
&
\leq
C r_n 
\sqrt{
\sum_{t_2: \beta_{t_2}^0 \neq 0}
\big[ -{l}_c^{(1)}(\bm \beta^0) \big]_{[t_2]}^2 
}
\sqrt{\
\sum_{t_2: \beta_{t_2}^0 \neq 0} [v_{t_2}]^2
}
\\
&
\leq
C r_n \mathcal{O}_p
\Big\{
(q_n n s_n^2 \log{p_n})^{\frac{1}{2}}
\Big\}.
\end{align*}
Divide $\abs{\mathcal{I}_{1.2}}$ by $\mathcal{I}_2$, we show that
\begin{align*}
\frac{\abs{\mathcal{I}_{1.2}}}{\mathcal{I}_2}
\leq
\frac{C r_n \mathcal{O}_p
\Big\{
(q_n n s_n^2 \log{p_n})^{\frac{1}{2}}
\Big\}}
{\frac{1}{2}C^2 r_n^2 n \kappa_{-}}
=
\frac{2 \mathcal{O}_p
\Big\{
(q_n  s_n^2 \log{p_n})^{\frac{1}{2}}
\Big\} }{Cr_n n^{\frac{1}{2}} \kappa_{-}} .
\end{align*}
Since $r_n =
\Big(
\frac{p_n^{1+d} \log{p_n}}{n}
\Big)^{\frac{1}{2}}
$
and $q_n^{\frac{1}{2}}s_n = \mathcal{O}(p_n^{\frac{1+d}{2}}),
\abs{\mathcal{I}_{1.2}} 
$
can be dominated by $\mathcal{I}_2$ with an appropriate $C>0.$ \\
\raggedright \textbf{Third Term} $\abs{\mathcal{I}_{1.3}}:$ \justifying 
by \textbf{Lemma} \ref{lemma 2.2}, we show that 
\begin{align*}
& \abs{\mathcal{I}_{1.3}} =
\sum_{t_3: \bm \beta_{t_3}^0 = 0}
\abs{
\big[ -{l}_c^{(1)}(\bm \beta^0) \big]_{[t_3]} 
}
\cdot 
\abs{
(\bm \beta - \bm \beta^0)_{[t_3]} 
}
=
Cr_n
\sum_{t_3: \beta_{t_3}^0 = 0}
\abs{
\big[ -{l}_c^{(1)}(\bm \beta^0) \big]_{[t_3]} 
}
\abs{
v_{t_3}
}
\\
&
=
C r_n 
\sum_{t_3: \beta_{t_3}^0 = 0}
\mathcal{O}_p
\Big\{
(n s_n^2 \log{p_n} )^{\frac{1}{2}}
\Big\}
\abs{v_{t_3}},
\\
&
\mathcal{I}_{4.1} + \mathcal{I}_{5.1} 
= 
n \lambda \sum_{s: \theta_s^0 = 0} |\theta_{s}| 
+
n \lambda \sum_{w: \alpha_w^0 = 0} |\alpha_{w}|  
\\
&
=
n \lambda \sum_{t_{3}: \theta_s^0 = 0, \alpha_{w}^0 = 0} C r_n \abs{v_{t_3}}
=
C r_n
\sum_{t_3: \theta_{s}^0 = 0, \alpha_w^0 = 0}
n \lambda
\abs{v_{t_3}}
 > 0.
\end{align*}
We choose $\lambda \geq \delta_1 \bigg(
\frac{ s_n^2\log{p_n}}{n}
\bigg)^{\frac{1}{2}}$ for some constant $\delta_1 > 0$ such that $  \abs{\mathcal{I}_{1.3}} \leq \mathcal{I}_{4.1} + \mathcal{I}_{5.1}.$ 
\\
\textbf{First Term} $\abs{\mathcal{I}_{1.1}}:$ By \textbf{Lemma} \ref{lemma 2.2} and Cauchy-Schwarz inequality, the upper bound is
\begin{align*}
\abs{\mathcal{I}_{1.1}} 
&
=
Cr_n 
\sum_{t_1:diagonal}
\abs{
\big[ -{l}_c^{(1)}(\bm \beta^0) \big]_{[t_1]} 
}
\abs{
v_{t_1}
}
\leq
C r_n \sqrt{
\sum_{t_1}
\big[ -{l}_c^{(1)}(\bm \beta^0) \big]_{[t_1]}^2 
}
\sqrt{
\sum_{t_1} v_{t_1}^2
}
\\
&
\leq
C r_n \mathcal{O}_p\Big\{
(n p_n s_n^4 \log{p_n})^{\frac{1}{2}}
\Big\} .
\end{align*}
Divide $\abs{\mathcal{I}_{1.1}}$ by $\mathcal{I}_2,$ we have
\begin{align*}
\frac{\abs{\mathcal{I}_{1.1}}}{\mathcal{I}_2}
\leq
\frac{ C r_n \mathcal{O}_p\Big\{
(n p_n s_n^4 \log{p_n})^{\frac{1}{2}}
\Big\}}
{
C^2 \kappa_{-} r_n^2 n
}
=
\frac{\mathcal{O}_p\Big\{
(p_n s_n^4 \log{p_n})^{\frac{1}{2}}
\Big\} }
{C\kappa_{-}r_n n^{\frac{1}{2}} }.
\end{align*}
If $r_n = \Big(
\frac{p_n^{1+d} \log{p_n} }{n}
\Big)^{\frac{1}{2}},$ then
\begin{align*}
\frac{\abs{\mathcal{I}_{1.1}}}{\mathcal{I}_2}
\leq
\frac{ \mathcal{O}_p(
s_n^2
)}
{C \kappa_{-} p_n^{d/2}}.
\end{align*}
$\mathcal{I}_{2}$ can dominate  $\abs{\mathcal{I}_{1.1}}$ with an appropriate constant $C>0$ if  $s_n^4 = \mathcal{O}( p_n^d )$.
\\
Recall that
\begin{align*}
G(\bm \Delta) &= 
{l}_c(\bm \beta^0) - {l}_c(\bm \beta) + n\sum_{s\in\mathcal{E}} \Big\{
\rho_{\lambda}(|\theta_{s}|) - \rho_{\lambda}(|\theta_{s}^{0}|) 
\Big\} 
+ 
n\sum_{w\in \mathcal{E}} 
\Big\{
\rho_{\lambda}(|\alpha_{w}|) - \rho_{\lambda}(|\alpha_{w}^{0}|)
\Big\} 
\\
&= \mathcal{I}_1 + \mathcal{I}_2 + \mathcal{I}_3 + \underbrace{n\sum_{s\in\mathcal{E}} 
\Big\{
\rho_{\lambda}(|\theta_{s}|) - \rho_{\lambda}(|\theta_{s}^{0}|) 
\Big\}}_{ \mathcal{I}_4} 
+ 
\underbrace{n\sum_{w\in \mathcal{E}} 
\Big\{
\rho_{\lambda}(|\alpha_{w}|) - \rho_{\lambda}(|\alpha_{w}^{0}|)
\Big\}}_{\mathcal{I}_5}.
\end{align*}
We further decompose the penalty terms $\mathcal{I}_{4}$ and $\mathcal{I}_{5},$
\begin{equation}
\begin{split}
\mathcal{I}_4 &=  n\sum_{ \substack{ s \in \mathcal{E}:\\ \theta_s^0 = 0} } 
\Big\{
\rho_{\lambda}(|\theta_{s}|) - \rho_{\lambda}(|\theta_{s}^{0}|) 
\Big\} 
+ 
n\sum_{ \substack{ s \in \mathcal{E}:\\ \theta_s^0 \neq 0} } 
\Big\{
\rho_{\lambda}(|\theta_{s}|) - \rho_{\lambda}(|\theta_{s}^{0}|) 
\Big\} 
\\ 
&= 
\underbrace{n\sum_{ \substack{ s \in \mathcal{E}:\\ \theta_s^0 = 0} } 
\Big\{
\rho_{\lambda}(|\theta_{s}|) 
\Big\}}_{\mathcal{I}_{4.1}} 
+ 
\underbrace{n\sum_{ \substack{ s \in\mathcal{E}:\\ \theta_s^0 \neq 0} } 
\Big\{
\rho_{\lambda}(|\theta_{s}|) - \rho_{\lambda}(|\theta_{s}^{0}|) 
\Big\}}_{\mathcal{I}_{4.2}},
\\
\mathcal{I}_5 
&= 
n\sum_{ \substack{ w\in \mathcal{E}: \\ \alpha_w^0 = 0}} 
\Big\{
\rho_{\lambda}(|\alpha_{w}|) - \rho_{\lambda}(|\alpha_{w}^{0}|) 
\Big\} 
+ 
n\sum_{ \substack{ w\in\mathcal{E}: \\ \alpha_w^0 \neq 0}} 
\Big\{
\rho_{\lambda}(|\alpha_{w}|) - \rho_{\lambda}(|\alpha_{w}^{0}|) 
\Big\}
\\ 
&= \underbrace{n\sum_{ \substack{ w\in\mathcal{E}: \\ \alpha_w^0 = 0}} 
\Big\{
\rho_{\lambda}(|\alpha_{w}|) 
\Big\}}_{\mathcal{I}_{5.1}} 
+ 
\underbrace{n\sum_{ \substack{ w\in\mathcal{E}: \\ \alpha_w^0 \neq 0}} 
\Big\{
\rho_{\lambda}(|\alpha_{w}|) - \rho_{\lambda}(|\alpha_{w}^{0}|) 
\Big\}}_{\mathcal{I}_{5.2}}.
\end{split}
\end{equation}
To begin with, we show that
\begin{align*}
\rho_{\lambda}(|\theta_{s}|) -
\rho_{\lambda}(|\theta_{s}^0|)
=\begin{cases}
\lambda (\theta_s - \theta_{s}^0), & \text{if } \theta_s \geq 0, \theta_{s}^0 >0; \\
-\lambda (\theta_s - \theta_{s}^0), & \text{if } \theta_s \leq 0, \theta_{s}^0 < 0; \\
-\lambda \theta_s - \lambda \theta_{s}^0, & \text{if } \theta_s < 0, \theta_{s}^0 >0; \\
\lambda \theta_s + \lambda \theta_{s}^0, & \text{if } \theta_s > 0, \theta_{s}^0 < 0;
\end{cases} 
\end{align*}
and 
\begin{align*}
&
\text{when } \theta_s < 0, \theta_{s}^0 >0,\,
\abs{ -\lambda \theta_s - \lambda \theta_{s}^0} \leq
\abs{ \lambda (\theta_s - \theta_{s}^0)} , 
\\
&
\text{when } \theta_s > 0, \theta_{s}^0 < 0,\,
\abs{\lambda \theta_s + \lambda \theta_{s}^0} \leq
\abs{ \lambda (\theta_s - \theta_{s}^0)}.
\end{align*}
As a result, we can conclude that
\begin{align*}
\abs{
\mathcal{I}_{4.2}
}
&
= \Big| n\sum_{\substack{ s \in\mathcal{E}:\\ \theta_s^0 \neq 0}} 
\Big\{
\rho_{\lambda}(|\theta_{s}|) - \rho_{\lambda}(|\theta_{s}^{0}|) 
\Big\}
\Big |
\leq
n 
\sum_{\substack{ s \in \mathcal{E}:\\ \theta_s^0 \neq 0}} 
\abs{
\rho_{\lambda}(|\theta_{s}|) - \rho_{\lambda}(|\theta_{s}^{0}|) 
}
\\
&
\leq
n 
\sum_{\substack{ s \in\mathcal{E}:\\ \theta_s^0 \neq 0}} 
\abs{
\lambda (\theta_s - \theta_{s}^0)
}
=
n \lambda \sum_{ \substack{ s \in\mathcal{E}:\\ \theta_s^0 \neq 0} } 
C r_n \abs{v_s}
\leq
n \lambda C r_n \sum_{ \substack{ s \in \mathcal{E}:\\ \theta_s^0 \neq 0} } \abs{v_s} \cdot {1},
\\
\abs{
\mathcal{I}_{5.2}
}
&
\leq
n 
\sum_{\substack{ w \in \mathcal{E}:\\ \alpha_w^0 \neq 0}} 
\abs{
\rho_{\lambda}(|\alpha_{w}|) - \rho_{\lambda}(|\alpha_{w}^{0}|) 
}
\leq
n \lambda C r_n \sum_{ \substack{ w \in\mathcal{E}:\\ \alpha_w^0 \neq 0} } \abs{v_w} \cdot {1}.
\end{align*}
Since the \# of terms included in the summation $\big( 
\sum_{\substack{ s \in\mathcal{E}:\\ \theta_s^0 \neq 0}} 
+ \sum_{\substack{ w \in \mathcal{E}:\\ \alpha_w^0 \neq 0}} 
\big)$ is $q_n$, by Cauchy-Schwarz inequality, the upper bound of $\abs{\mathcal{I}_{4.2}}+ \abs{\mathcal{I}_{5.2}}$ can be given as
\begin{equation}
\begin{split}
\abs{\mathcal{I}_{4.2}} 
+
\abs{\mathcal{I}_{5.2}}
&
\leq 
n \lambda C r_n 
\bigg(
\sum_{ \substack{ s \in \mathcal{E}:\\ \theta_s^0 \neq 0} } \abs{v_s} \cdot {1}
+
\sum_{ \substack{ w \in \mathcal{E}:\\ \alpha_w^0 \neq 0} } \abs{v_w} \cdot {1}
\bigg)
\\
&
\leq
n \lambda C r_n
\sqrt{
\sum_{s,w}
(v)^2
}
\sqrt{
\sum_{s,w}
(1)^2
}
\\
& \leq  
n \lambda C r_n  \cdot (q_n)^{\frac{1}{2}}.
\end{split}
\end{equation}
We divide it by $\mathcal{I}_2$
\begin{align*}
\frac{\abs{\mathcal{I}_{4.2}} 
+
\abs{\mathcal{I}_{5.2}}}{\mathcal{I}_2}
\leq
\frac{2n\lambda C r_n q_n^{\frac{1}{2}}}
{
 C^2 r_n^2 n \kappa_{-}
}
=
\frac{2\lambda q_n^{\frac{1}{2}}}
{Cr_n \kappa_{-}}
\leq
\frac{ 2 \delta_2 } {C \kappa_{-} }
\bigg(
\frac{s_n^3 q_n}{p_n^{1+d} }
\bigg)^{\frac{1}{2}}.
\end{align*}
The numerator can be dominated by $\mathcal{I}_2$ with an appropriate constant $C>0$
if $\lambda \leq \delta_2 \Big(\frac{s_n^3 \log{p_n}}{n}\Big)^{\frac{1}{2}}$ for some constant $\delta_2 > 0$.

In summary, we have shown that $G(\bm \Delta)  > 0$ on the boundary $\partial \mathcal{A}$
by choosing $C>0$ sufficiently large and $\delta_1 \Big(\frac{s_n^2 \log{p_n}}{n}\Big)^{\frac{1}{2}} \leq \lambda \leq \delta_2 \Big(\frac{s_n^3 \log{p_n}}{n}\Big)^{\frac{1}{2}}$ for some constants $\delta_1, \delta_2 >0$,
which completes the proof.
\end{proof}

\textbf{Proof of Theorem 3}
\begin{proof}
Since the property of sign consistency is regarding the edge parameters, 
we treat diagonal parameters as known in this theorem. Recall that the composite likelihood estimator $\bm{\hat{\beta}}$ is obtained by
\begin{align} \label{eq:Thm3-1}
\bm{\hat{\beta}} =
\argmin_{\bm{\beta}}
Q(\bm{\beta}) = 
\argmin_{\bm{\beta}}
\big\{
-{l}_c(\bm {\beta}) + n\sum_{s\in \mathcal{E}}\rho_{\lambda}(|\theta_s|) + n\sum_{w\in \mathcal{E}}\rho_{\lambda}(|\alpha_w|) 
\big\}.
\end{align}
Here $\bm{\beta}$ excludes diagonal parameters. Similar to Ravikumar and Yu (2008), we introduce another estimator
\begin{align} \label{eq:Thm3-2}
\bm{\tilde{\beta}} =
\argmin_{\bm{\beta}}
Q(\bm{\beta}) = 
\argmin_{\bm{\beta} = (\bm{\beta}_{\mathcal{S}_1}, \bm{0} )}
\big\{
-{l}_c(\bm {\beta}) + n\sum_{s\in \mathcal{E}}\rho_{\lambda}(|\theta_s|) + n\sum_{w\in \mathcal{E}}\rho_{\lambda}(|\alpha_w|) 
\big\},
\end{align}
where $\bm{\tilde{\beta}} = (\bm{\tilde{\beta}}_{\mathcal{S}_1},\bm{\tilde{\beta}}_{\mathcal{S}^c})$ with the constraint $\bm{\tilde{\beta}}_{\mathcal{S}^c} = \bm{0}.$ The set of first derivative equations  $ \frac{\partial Q(\bm{\beta})}{\partial \bm{\beta}} = 0$ can be partitioned into two sets of equations
\begin{align}
\label{condition 1}
&
\frac{1}{n}l^{(1)}_c(\bm{\tilde{\beta}})_{\mathcal{S}_1} = \lambda \tilde{z}_{\mathcal{S}_1} ,
\\
\label{condition 2}
&
\frac{1}{n}l^{(1)}_c(\bm{\tilde{\beta}})_{\mathcal{S}^c} = \lambda \tilde{z}_{\mathcal{S}^c},
\end{align}
where the sub-differential of the absolute value function is defined as
\begin{align*}
\tilde{z}
= \frac{\partial \abs{x}}{\partial x} = 
\begin{cases}
1 & \text{if } x > 0,
\\
[-1, 1] & \text{if } x = 0,
\\
-1 & \text{if } x < 0.
\end{cases}
\end{align*}
If the estimator $\bm{\tilde{\beta}}$ satisfies conditions \eqref{condition 1} and \eqref{condition 2}, then $\bm{\tilde{\beta}}$ is the local minimizer $\bm {\hat{\beta}}$ of the objective function $Q(\bm{\beta})$. We show the proof by 2 steps. \\
\textbf{Step 1:} Verify $\| \tilde{z}_{\mathcal{S}^c} \|_{\infty} < 1.$ \\
By Taylor's expansion, we have
\begin{align}
\frac{1}{n}l^{(1)}_c(\bm{\tilde{\beta}})
=
\frac{1}{n} l^{(1)}_c(\bm{\beta}^0)
+
\frac{1}{n} l^{(2)}_c(\bm{\beta}^0) (\bm{\tilde{\beta}} - \bm{\beta}^0)
+
\underbrace{
\frac{1}{n}\big\{
 l^{(2)}_c(\bm{\beta}^{\ast})
 -
 l^{(2)}_c(\bm{\beta}^0)
\big\}
(\bm{\tilde{\beta}} - \bm{\beta}^0)
}_{R}  ,
\end{align}
where $\bm{\beta}^{\ast} = a \bm{\beta}^0 + (1-a)\bm{\tilde{\beta}}$ for some constant $a \in [0,1].$ In block format, we can write
\begin{align*}
\begin{pmatrix}
\frac{1}{n}l^{(1)}_c(\bm{\beta}^0)_{\mathcal{S}_1}  \\
 \frac{1}{n}l^{(1)}_c(\bm{\beta^0})_{\mathcal{S}^c}
\end{pmatrix}
+ \frac{1}{n}
\begin{pmatrix}
l^{(2)}_c(\bm{\beta}^0)_{\mathcal{S}_1 \mathcal{S}_1} &
l^{(2)}_c(\bm{\beta}^0)_{\mathcal{S}_1 \mathcal{S}^c} \\
l^{(2)}_c(\bm{\beta}^0)_{\mathcal{S}^c \mathcal{S}_1} &
l^{(2)}_c(\bm{\beta}^0)_{\mathcal{S}^c \mathcal{S}^c}     
\end{pmatrix}
\begin{pmatrix}
\big( \bm{\tilde{\beta}} - \bm{\beta}^0 \big)_{\mathcal{S}_1} \\
\bm{0} \\
\end{pmatrix}
+
\begin{pmatrix}
R_{\mathcal{S}_1} \\
R_{\mathcal{S}^c}
\end{pmatrix}
=
\lambda 
\begin{pmatrix}
\tilde{z}_{\mathcal{S}_1} \\
\tilde{z}_{\mathcal{S}^c}
\end{pmatrix},
\end{align*}
where $R_{\mathcal{S}_1},
R_{\mathcal{S}^c}$ are the partition of the residual vector. From the block format, we know that
\begin{equation} 
\begin{split} \label{eq:Thm3-3}
&
\big( \bm{\tilde{\beta}} - \bm{\beta}^0 \big)_{\mathcal{S}_1} 
=
\bigg(
\frac{1}{n} l^{(2)}_c (\bm{\beta}^0)_{\mathcal{S}_1\mathcal{S}_1}
\bigg)^{-1}
\bigg(
\lambda \tilde{z}_{\mathcal{S}_1}
-
R_{\mathcal{S}_1}
-
\frac{1}{n}l^{(1)}_c(\bm{\beta}^0)_{\mathcal{S}_1}
\bigg), 
\end{split}
\end{equation}
\begin{equation} \label{eq:Thm3-4}
\begin{split}
&
\tilde{z}_{\mathcal{S}^c}
=
\frac{1}{\lambda}
\bigg(
\frac{1}{n} l^{(1)}_c(\bm{\beta}^0)_{\mathcal{S}^c}
+
\frac{1}{n} l^{(2)}_c(\bm{\beta}^0)_{\mathcal{S}^c \mathcal{S}_1}
\big(
\bm{\tilde{\beta}} - \bm{\beta}^0
\big)_{\mathcal{S}_1}
+
R_{\mathcal{S}^c}
\bigg) .
\end{split}
\end{equation}
By equations \eqref{eq:Thm3-3} and \eqref{eq:Thm3-4}, we have
\begin{align*}
\tilde{z}_{\mathcal{S}^c}
=
&
\frac{1}{\lambda}
\Bigg\{
\frac{1}{n} l^{(1)}_c(\bm{\beta}^0)_{\mathcal{S}^c}
+
\frac{1}{n} l^{(2)}_c(\bm{\beta}^0)_{\mathcal{S}^c \mathcal{S}_1}
\bigg(
\frac{1}{n} l^{(2)}_c (\bm{\beta}^0)_{\mathcal{S}_1\mathcal{S}_1}
\bigg)^{-1}
\bigg(
\lambda \tilde{z}_{\mathcal{S}_1}
-
R_{\mathcal{S}_1}
-
\frac{1}{n}l^{(1)}_c(\bm{\beta}^0)_{\mathcal{S}_1}
\bigg)
+
R_{\mathcal{S}^c}
\Bigg\}
\\
= &
\underbrace{
\frac{1}{\lambda}
\Bigg\{
\frac{1}{n} l^{(1)}_c(\bm{\beta}^0)_{\mathcal{S}^c}
-
\frac{1}{n} l^{(2)}_c(\bm{\beta}^0)_{\mathcal{S}^c \mathcal{S}_1}
\bigg(
\frac{1}{n} l^{(2)}_c (\bm{\beta}^0)_{\mathcal{S}_1\mathcal{S}_1}
\bigg)^{-1}
\bigg(
\frac{1}{n}l^{(1)}_c(\bm{\beta}^0)_{\mathcal{S}_1}
\bigg)
\Bigg\}
}_{\mathcal{I}_1}
\\
+
&
\underbrace{
\frac{1}{\lambda}
\Bigg\{
R_{\mathcal{S}^c}
-
\frac{1}{n} l^{(2)}_c(\bm{\beta}^0)_{\mathcal{S}^c \mathcal{S}_1}
\bigg(
\frac{1}{n} l^{(2)}_c (\bm{\beta}^0)_{\mathcal{S}_1\mathcal{S}_1}
\bigg)^{-1}
R_{\mathcal{S}_1}
\Bigg\}
}_{\mathcal{I}_2}
+
\underbrace{
\frac{1}{\lambda}
\Bigg\{
\frac{1}{n} l^{(2)}_c(\bm{\beta}^0)_{\mathcal{S}^c \mathcal{S}_1}
\bigg(
\frac{1}{n} l^{(2)}_c (\bm{\beta}^0)_{\mathcal{S}_1\mathcal{S}_1}
\bigg)^{-1}
\lambda \tilde{z}_{\mathcal{S}_1}
\Bigg\}
}_{\mathcal{I}_3}.
\end{align*}
Since
$
\| \tilde{z}_{\mathcal{S}^c} \|_{\infty}
=
\|
\mathcal{I}_1 +
\mathcal{I}_2 +
\mathcal{I}_3
\|_{\infty}
\leq 
\| \mathcal{I}_1 \|_{\infty}
+
\| \mathcal{I}_2 \|_{\infty}
+
\| \mathcal{I}_3 \|_{\infty},
$
we start with the upper bound of $\| \mathcal{I}_2 \|_{\infty}$. By Taylor's expansion, we show that each element of $R$ can be written as
\begin{align*}
(R)_{u}
=
\frac{1}{n} 
( \bm{\beta^{\ast}} - \bm{\beta^0})^{T}
\frac{\partial^{3} l_c(\bm \beta^{\ast\ast}) }
{\partial \bm{\beta}
 \partial \bm{\beta}^T
 \partial \beta_{u}
}
(
\tilde{\bm{\beta}} - \bm{\beta^0}
),
\end{align*}
where $\beta_u$ denotes any off-diagonal parameter and 
$\bm{\beta}^{\ast\ast} = b \bm{\beta}^0 + (1-b)\bm{\beta^{\ast}}$ for some constant $b \in [0, 1]$. Also note that $( \bm{\beta^{\ast}} - \bm{\beta^0})^{T} = (1-a) (\tilde{\bm{\beta}} - \bm{\beta}^0)^T$, we can rewrite it as
\begin{align*}
(R)_{u}
&
=
\frac{1}{n} (1-a)
( \tilde{\bm{\beta}} - \bm{\beta^0})^{T}
\frac{\partial^{3} l_c(\bm \beta^{\ast\ast}) }
{\partial \bm{\beta}
 \partial \bm{\beta}^T
 \partial \beta_{u}
}
(
\tilde{\bm{\beta}} - \bm{\beta^0}
) 
\\
&
=
\underbrace{
\frac{1}{n} (1-a)
( \tilde{\bm{\beta}} - \bm{\beta^0})^{T}
\bigg\{
\frac{\partial^{3} l_c(\bm \beta^{\ast\ast}) }
{\partial \bm{\beta}
 \partial \bm{\beta}^T
 \partial \beta_{u}
}
-
E\bigg\{
\frac{\partial^{3} l_c(\bm \beta^{\ast\ast}) }
{\partial \bm{\beta}
 \partial \bm{\beta}^T
 \partial \beta_{u}
}
\bigg\}
\bigg\}
(
\tilde{\bm{\beta}} - \bm{\beta^0}
) 
}_{\mathcal{M}_1}
\\
&
-
\underbrace{
\frac{1}{n} (1-a)
( \tilde{\bm{\beta}} - \bm{\beta^0})^{T}
E\bigg\{
\frac{\partial^{3} -l_c(\bm \beta^{\ast\ast}) }
{\partial \bm{\beta}
 \partial \bm{\beta}^T
 \partial \beta_{u}
}
\bigg\}
(
\tilde{\bm{\beta}} - \bm{\beta^0}
)
}_{\mathcal{M}_2}.
\end{align*}
Following the same methodology as in \textbf{Theorem 2}, we can show $\Vert \tilde{\bm{\beta}} - \bm{\beta}^{0} \rVert_2 = \mathcal{O}_p\Big\{ \Big(\frac{p_n^{1+d}\log{p_n}}{n}\Big)^{\frac{1}{2}}\Big\}$. Thus, we have $\mathcal{M}_2
\leq
\frac{1}{n} (1-a) \mathcal{O}_p\Big\{
\frac{p_n^{1+d}\log{p_n}}{n}
\Big\}
\bm{v}^{T}
E\Big\{
\frac{\partial^{3} -l_c(\bm \beta^{\ast\ast}) }
{\partial \bm{\beta}
 \partial \bm{\beta}^T
 \partial \beta_{u}
}
\Big\}
\bm{v}$ and 
\begin{align*}
\big\| (R)_u \big\|_{\infty} 
&
=
\big\| \mathcal{M}_1 - \mathcal{M}_2 \big\|_{\infty} 
\leq
\big\| \mathcal{M}_1 \big\|_{\infty} 
+
\big\| \mathcal{M}_2 \big\|_{\infty} ,
\\
&
\leq
\big\| \mathcal{M}_2 \big\|_{\infty}  \big(1+o_p(1)\big)
\leq
\frac{1}{n}
\mathcal{O}_p\Big\{
\frac{p_n^{1+d}\log{p_n}}{n}
\Big\}
\vertiii{
E\bigg\{
\frac{\partial^{3} -l_c(\bm \beta^{\ast\ast}) }
{\partial \bm{\beta}
 \partial \bm{\beta}^T
 \partial \beta_{u}
}
\bigg\}
}_{2}\big( 1 + o_p(1) \big).
\end{align*}
By \textbf{Assumptions 3-4}, we 
have $\vertiii{E\Big\{
\frac{\partial^{3} -l_c(\bm \beta^{\ast\ast}) }
{\partial \bm{\beta}
 \partial \bm{\beta}^T
 \partial \beta_{u}
}
\Big\}
}_{2} \leq n \mathcal{W}^{\ast}$ for some constant $\mathcal{W}^{\ast} > 0$ and show that
\begin{align*}
&
\big\| (R)_u \big\|_{\infty} 
\leq
\mathcal{O}_p\Big\{
\frac{p_n^{1+d}\log{p_n}}{n}
\Big\}
\frac{1}{n}
n \mathcal{W}^{\ast} \big( 1 + o_p(1) \big)
\\
&
\leq
\mathcal{O}_p\Big\{
\frac{p_n^{1+d}\log{p_n}}{n}
\Big\}
\big( 1 + o_p(1) \big)
= o_p(1).
\end{align*}
Therefore, by \textbf{Lemma} \ref{lemma 2.4}, 
the upper bound of $\| \mathcal{I}_2 \|_{\infty}$ can be given as
\begin{align*}
\| \mathcal{I}_2 \|_{\infty} &= 
\frac{1}{\lambda}
\vertiii{
R_{\mathcal{S}^c}
-
\frac{1}{n} l^{(2)}_c(\bm{\beta}^0)_{\mathcal{S}^c \mathcal{S}_1}
\bigg(
\frac{1}{n} l^{(2)}_c (\bm{\beta}^0)_{\mathcal{S}_1\mathcal{S}_1}
\bigg)^{-1}
R_{\mathcal{S}_1}
}_{\infty}
\\
&
\leq
\frac{1}{\lambda}
\|
R_{\mathcal{S}^c}
\|_{\infty}
+
\frac{1}{\lambda}
\vertiii{
\frac{1}{n} l^{(2)}_c(\bm{\beta}^0)_{\mathcal{S}^c \mathcal{S}_1}
\bigg(
\frac{1}{n} l^{(2)}_c (\bm{\beta}^0)_{\mathcal{S}_1\mathcal{S}_1}
\bigg)^{-1}
}_{\infty}
\|
R_{\mathcal{S}_1}
\|_{\infty}
\\
&
\leq
\frac{1}{\delta_1}
\bigg(
\frac{n}{s_n^2 \log{p_n}}
\bigg)^{\frac{1}{2}}
\mathcal{O}_p \Big\{
\frac{p_n^{1+d} \log{p_n} }{n}
\Big\}
\big( 1+o_p(1) \big)
\\
&
+
\frac{1}{\delta_1}
\bigg(
\frac{n}{s_n^2 \log{p_n}}
\bigg)^{\frac{1}{2}}
(1-\frac{1}{3}\xi^2)
\mathcal{O}_p \Big\{
\frac{p_n^{1+d} \log{p_n} }{n}
\Big\} \big( 1+o_p(1) \big)
\\
&
\leq 
\mathcal{O}_p
\bigg\{
 p_n^{1+d}
\bigg(
\frac{ \log{p_n} }{ns_n^2}
\bigg)^{\frac{1}{2}}
\bigg\}
= o_p(1).
\end{align*}
Similarly, by \textbf{Lemma} \ref{lemma 2.2} and \textbf{Lemma} \ref{lemma 2.4}, the rest two terms can be bounded by
\begin{align*}
\|
\mathcal{I}_1 
\|_{\infty} &= 
\frac{1}{\lambda}
\Bigg \|
\frac{1}{n} l^{(1)}_c(\bm{\beta}^0)_{\mathcal{S}^c}
-
\frac{1}{n} l^{(2)}_c(\bm{\beta}^0)_{\mathcal{S}^c \mathcal{S}_1}
\bigg(
\frac{1}{n} l^{(2)}_c (\bm{\beta}^0)_{\mathcal{S}_1\mathcal{S}_1}
\bigg)^{-1}
\bigg(
\frac{1}{n}l^{(1)}_c(\bm{\beta}^0)_{\mathcal{S}_1}
\bigg)
\Bigg \|_{\infty}
\\
&
\leq
\frac{1}{\lambda}
\Bigg\{
\Big \|
\frac{1}{n} l^{(1)}_c(\bm{\beta}^0)_{\mathcal{S}^c}
\Big \|_{\infty}
+
\vertiii{
\frac{1}{n} l^{(2)}_c(\bm{\beta}^0)_{\mathcal{S}^c \mathcal{S}_1}
\bigg(
\frac{1}{n} l^{(2)}_c (\bm{\beta}^0)_{\mathcal{S}_1\mathcal{S}_1}
\bigg)^{-1}
}_{\infty}
\Big \|
\frac{1}{n}l^{(1)}_c(\bm{\beta}^0)_{\mathcal{S}_1}
\Big \|_{\infty}
\Bigg\}
\\
&
\leq
\frac{1}{\lambda}
\Big\{
\epsilon + (1-\frac{1}{3}\xi^2)
\epsilon
\Big\}
\leq
\frac{1}{\lambda}
(2- \frac{1}{3}\xi^2) \epsilon ,
\\
\|
\mathcal{I}_3
\|_{\infty} &= 
\frac{1}{\lambda}
\Bigg \|
\frac{1}{n} l^{(2)}_c(\bm{\beta}^0)_{\mathcal{S}^c \mathcal{S}_1}
\bigg(
\frac{1}{n} l^{(2)}_c (\bm{\beta}^0)_{\mathcal{S}_1\mathcal{S}_1}
\bigg)^{-1}
\lambda \tilde{z}_{\mathcal{S}_1}
\Bigg \|_{\infty} 
\leq
\vertiii{
\frac{1}{n} l^{(2)}_c(\bm{\beta}^0)_{\mathcal{S}^c \mathcal{S}_1}
\bigg(
\frac{1}{n} l^{(2)}_c (\bm{\beta}^0)_{\mathcal{S}_1\mathcal{S}_1}
\bigg)^{-1}
}_{\infty}
\\
&
\leq
1- \frac{1}{3}\xi^2.
\end{align*}
Therefore, we choose $\epsilon = \frac{1}{6}\xi^2 \lambda
$ and
\begin{align*}
\| \tilde{z}_{\mathcal{S}^c} \|_{\infty}
\leq 
\| \mathcal{I}_1 \|_{\infty}
+
\| \mathcal{I}_2 \|_{\infty}
+
\| \mathcal{I}_3 \|_{\infty}
&
\leq
\frac{1}{\lambda} (2 - \frac{1}{3}\xi^2) \epsilon
+o(1) + 1 - \frac{1}{3} \xi^2
\\
&
\leq
\frac{1}{3} \xi^2 - \frac{1}{18} \xi^4 + 1 - \frac{1}{3} \xi^2
+ o(1) < 1.
\end{align*}
Once the first step is completed, it demonstrates that the solution $\bm{\tilde{\beta}}$ to the constrained problem \eqref{eq:Thm3-2} is equivalent to the solution $\bm{\hat{\beta}}$
to the original unrestricted problem \eqref{eq:Thm3-1}. \\
\textbf{Step 2:} Verify $sign(\hat{\bm{\beta}}_{\mathcal{S}_1}) = sign(\bm{\beta}^{0}_{\mathcal{S}_1})$. From equation \eqref{eq:Thm3-3} and \textbf{Lemma} \ref{lemma 2.2}, \textbf{Lemma} \ref{lemma 2.8},
we show that for some constant $c^{\ast} = \frac{1 + \xi^2/6}{\tau_{-}} \big(  1+o_p(1) \big)$,
\begin{align*}
&
\Big\|
\big(
\bm{\tilde{\beta}} - \bm{\beta}^0 
\big)_{\mathcal{S}_1}
\Big\|_{\infty}
\leq
\vertiii{
\bigg(
-
\frac{1}{n} l^{(2)}_c (\bm{\beta}^0)_{\mathcal{S}_1\mathcal{S}_1}
\bigg)^{-1}
}_{\infty}
\Bigg\|
\Big(
-
\lambda \tilde{z}_{\mathcal{S}_1}
+
R_{\mathcal{S}_1}
+
\frac{1}{n}l^{(1)}_c(\bm{\beta}^0)_{\mathcal{S}_1}
\Big)
\Bigg\|_{\infty}
\\
&
\leq
\sqrt{
\abs{\mathcal{S}_1}
}
\vertiii{\bigg(
-
\frac{1}{n} l^{(2)}_c (\bm{\beta}^0)_{\mathcal{S}_1\mathcal{S}_1}
\bigg)^{-1}
}_{2}
\Big(
\lambda
\|
 \tilde{z}_{\mathcal{S}_1}
\|_{\infty}
+
\|
 R_{\mathcal{S}_1}
\|_{\infty} 
+
\big\|
\frac{1}{n}l^{(1)}_c(\bm{\beta}^0)_{\mathcal{S}_1}
\big\|_{\infty}
\Big)
\\
&
\leq
\frac{\sqrt{q_n}}{\tau_{-}}
\Big(
\lambda + 
\mathcal{O}_p\bigg\{
\frac{p_n^{1+d} \log{p_n} }{n}
\bigg\}
\big(
1+o_p(1)
\big)
+ \frac{1}{6}\xi^2 \lambda
\Big)
\\
&
\leq
\frac{\sqrt{q_n}}{\tau_{-}} 
(1 + \frac{1}{6}\xi^2 ) \lambda
\big(
1 + o_p(1)
\big)
=  
c^{\ast} \sqrt{q_n} \lambda
\leq
\min_{u \in \mathcal{S}_1}\big\{
\abs{
\beta_{u}^{0}}
\big\}.
\end{align*}
Based on all the results presented above, we have $sign(\hat{\bm{\beta}}) = 
sign(\bm{\beta}^0)$ with probability tending to one.
\end{proof}


\end{document}